\newcommand{\real} {\mathbb{R}}
\newcommand{\eps}{\varepsilon}
\newcommand{\aff}{\mathrm{aff}}
\newcommand{\cancel}[1]{}
\title{Approximate Nearest Neighbor for Polygonal Curves under Fr\'echet Distance}
\author{Siu-Wing Cheng}{Department of Computer Science and Engineering, HKUST, Hong Kong, China}{scheng@cse.ust.hk}{https://orcid.org/0000-0002-3557-9935}{}%TODO mandatory, please use full name; only 1 author per \author macro; first two parameters are mandatory, other parameters can be empty. Please provide at least the name of the affiliation and the country. The full address is optional. Use additional curly braces to indicate the correct name splitting when the last name consists of multiple name parts.
\author{Haoqiang Huang}{Department of Computer Science and Engineering, HKUST, Hong Kong, China}{haoqiang.huang@connect.ust.hk}{https://orcid.org/0000-0003-1497-6226}{}
\authorrunning{S.-W. Cheng and H. Huang} %TODO mandatory. First: Use abbreviated first/middle names. Second (only in severe cases): Use first author plus 'et al.'
\keywords{Polygonal curves, Fr\'{e}chet distance, approximate nearest neighbor} %TODO mandatory; please add comma-separated list of keywords
\begin{document}
	
	\maketitle
	
	\begin{abstract}
		We propose $\kappa$-approximate nearest neighbor (ANN) data structures for $n$ polygonal curves under the Fr\'{e}chet distance in $\real^d$, where $\kappa \in \{1+\eps,3+\eps\}$ and $d \geq 2$.   We assume that every input curve has at most $m$ vertices, every query curve has at most $k$ vertices, $k \ll m$, and $k$ is given for preprocessing.  The query times are $\tilde{O}(k(mn)^{0.5+\eps}/\eps^d+ k(d/\eps)^{O(dk)})$ for $(1+\eps)$-ANN and $\tilde{O}(k(mn)^{0.5+\eps}/\eps^d)$ for $(3+\eps)$-ANN.  The space and expected preprocessing time are $\tilde{O}(k(mnd^d/\eps^d)^{O(k+1/\eps^2)})$ in both cases.  In two and three dimensions, we improve the query times to $O(1/\eps)^{O(k)} \cdot \tilde{O}(k)$ for $(1+\eps)$-ANN and $\tilde{O}(k)$ for $(3+\eps)$-ANN.  The space and expected preprocessing time improve to $O(mn/\eps)^{O(k)} \cdot \tilde{O}(k)$ in both cases.  For ease of presentation, we treat factors in our bounds that depend purely on $d$ as~$O(1)$.  The hidden polylog factors in the big-$\tilde{O}$ notation have powers dependent on $d$.
	\end{abstract}

	\section{Introduction}
	
	%A trajectory is a polygonal curve that interpolates a sequence of data points.  
	Given a set of trajectories, the \emph{nearest neighbor} problem is to efficiently report the one most similar to a query trajectory.  Trajectories are often represented as polygonal curves, and the nearest neighbor problem is encountered frequently in applications~\cite{SKS2003,SR2001,TP2002}.  
	
	Various similarity metrics have been proposed for polygonal curves.  We are interested in the \emph{Fr\'{e}chet distance}~\cite{AG1995} which has attracted much attention in recent years.  It is defined as follows.  A parameterization of a curve $\tau$ is a function $\rho: [0,1] \rightarrow \real^d$ such that, as $t$ increases from 0 to 1, the point $\rho(t)$ moves monotonically from the beginning of $\tau$ to its end.  We may have $\rho(t_1) = \rho(t_2)$ for two distinct values $t_1$ and $t_2$.   Two parameterizations $\rho$ and $\varrho$ for curves $\tau$ and $\sigma$, respectively, induce a \emph{matching} $\mathcal{M}$:  for all $t \in [0,1]$, $\mathcal{M}$ matches $\rho(t)$ with $\varrho(t)$.  A point can be matched with multiple partners.  The distance between $\tau$ and $\sigma$ under $\mathcal{M}$ is $d_\mathcal{M}(\tau,\sigma) = \max_{t \in [0,1]} d(\rho(t),\varrho(t))$, where $d(\cdot,\cdot)$ denotes the Euclidean distance.  The Fr\'{e}chet distance is $d_F(\tau,\sigma) = \min_{\mathcal{M}} d_\mathcal{M}(\tau,\sigma)$.  We call a minimizing matching a \emph{Fr\'{e}chet matching}.
	
	Let $T = \{\tau_1,\ldots,\tau_n\}$ be a set of $n$ polygonal curves with at most $m$ vertices each.  Given any value $\kappa \geq 1$,  the \emph{$\kappa$-approximate nearest neighbor (ANN) problem} is to construct a data structure so that for any query curve $\sigma$, we can quickly report a curve $\tau_l \in T$ with $d_F(\sigma,\tau_l) \leq \kappa \cdot \min_{\tau_i \in T} d_F(\sigma,\tau_i)$.   We assume that every query curve has at most $k$ vertices, and $k$ is given for preprocessing.  In the literature, if $k = m$, it is called the \emph{symmetric} version; if $k < m$, it is called the \emph{asymmetric version}.   If the query curve is sketched by the user, it is likely that $k \ll m$ and this is the scenario for which we design our data structures.  We define the related \emph{$(\kappa,\delta)$-ANN problem} as follows: for any query curve, we report ``no'' or a curve $\tau_l \in T$ with $d_F(\sigma,\tau_l) \leq \kappa\delta$; if we report `no'', it must be the case that $\min_{\tau_i \in T} d_F(\sigma,\tau_i) > \delta$.  
	%One can ask the same questions about subcurves of curves in $T$.  The \emph{$\kappa$-approximate nearest subcurve (ANS) problem} is to report a subcurve $\tau'$, not necessarily delimited by input vertices, so that $d_F(\sigma,\tau')$ is within a factor $\kappa$ of the minimum Fr\'{e}chet distance between $\sigma$ and all possible subcurves of curves in $T$.  The \emph{$(\kappa, \delta)$-ANS problem} is defined analogously.  
	
	%Let $\tau^*$ be the closest curve to a given query curve $\sigma$ of $k$ vertices in $T$ under the Fr{\'e}chet distance. Approximate nearest neighbor (ANN) problem is defined as finding a curve $\tau_i\in T$ such that $d_F(\tau_i, \sigma)\le (1+\varepsilon)\cdot d_F(\tau^*, \sigma)$, where $\varepsilon >0$ is a given approximation parameter. We study a decision version of ANN, which is called $c$-ANN. Given a radius $\delta$, $c$-ANN requires to return a curve $\tau_i \in T$ such that $d_F(\tau_i, \sigma)\le c\cdot \delta$ if there exists a curve in $T$ that is within a Fr{\'e}chet distance $\delta$ to $\sigma$. A reduction from ANN to $c$-ANN can be done at the cost of an additional factor $\log n$ in the query time and an additional factor $\log^2 n$ in the space complexity~\cite{har2012approximate}. We present an algorithm for $(1+\eps)$-ANN. Moreover, we show that our algorithm can be adapted to retrieve a similar sub-curve for the query curve. 
	
	%\subsection{Previous work}

	There have been many results on the ANN problem under the \emph{discrete} Fr\'{e}chet distance $\tilde{d}_F$, which restricts the definition of $d_F$ to parameterizations $\rho$ and $\varrho$ that match each vertex of $\tau$ with at least one vertex of $\sigma$, and vice versa.  
	%Also, $\tilde{d}_F(\tau,\sigma) = \min_{\rho,\varrho}\max\bigl\{d(\rho(t),\varrho(t)) : \text{$\rho(t)$ and $\varrho(t)$ are vertices}\bigr\}$.  
	As a result, $d_F(\tau,\sigma) \leq \tilde{d}_F(\tau,\sigma)$.  It is possible that $d_F(\tau,\sigma) \ll \tilde{d}_F(\tau,\sigma)$; for example, $\sigma$ is a long horizontal line segment, and $\tau$ is a parallel copy near $\sigma$ with an extra vertex in the middle.  The advantage of $\tilde{d}_F$ is that it can be computed using a simple dynamic programming algorithm~\cite{EM1994}.
	
	%\siuwing{Many results cited in the following paragraph do not include the size of the query curve.  Some may not be complete references.  Need to check.}
	
	Indyk and Motwani~\cite{IM1998} and Har-Peled~\cite{H2001} proved that a solution for the $(\kappa,\delta)$-ANN problem for points in a metric space gives a solution for the $\kappa(1+O(\eps))$-ANN problem.  The result has been simplified in the journal version~\cite{har2012approximate}.  The  method is general enough that it works for polygonal curves under $d_F$ and $\tilde{d}_F$.  Theorem~\ref{thm:reduce} in Section~\ref{sec:ANN} states the deterministic result in our context; the reduction increases the space and query time by polylogarithmic factors.  If a probabilistic $(\kappa,\delta)$-ANN solution with failure probability $f$ is used, the bounds in Theorem~\ref{thm:reduce} also hold, and the ANN solution has an $O(f\log n)$ failure probability.
	
	Indyk~\cite{indyk2002approximate} proposed the first $(\kappa,\delta)$-ANN solution under $\tilde{d}_F$, where $\kappa = O(\log m+\log\log n)$, for the case that $k = m$ and the vertices come from a discrete point set $X$.  It uses $O\bigl(|X|^{\sqrt{m}}(m^{\sqrt{m}}n)^2\bigr)$ space and answers a query in $O\big(m^{O(1)}\log n\big)$ time.\footnote{A tradeoff is also presented in~\cite{indyk2002approximate}.}  Driemel and Silverstri~\cite{driemel2017locality} developed probabilistic $(\kappa,\delta)$-ANN solutions under $\tilde{d}_F$ with a failure probability $1/n$; they achieve the following combinations of ($\kappa$, query time, space) for the case of $k = m$: $\bigl(4d^{3/2}m, O(m), O(n\log n + mn)\bigr)$, $\bigl(4d^{3/2}, O(2^{4dm}m\log n), O(2^{4md}n\log n + mn)\bigr)$, and $\bigl(4d^{3/2}m/t, O(2^{2t}m^t\log n), O(2^{2t}m^{t-1}n\log n + mn)\bigr)$ for any integer $t \geq 1$.  The approximation ratio has been reduced to $1+\eps$ by two research groups later. Filtser~et.~al.~\cite{filtser2020approximate} proposed two deterministic $(1+\eps,\delta)$-ANN data structures under $\tilde{d}_F$; one answers a query in $O(kd)$ time and uses $n \cdot O(\frac{1}{\eps})^{kd}$ space and $O(mn (d\log  m + O(\frac{1}{\eps})^{kd}))$ expected preprocessing time;  the other answers a query in $O(kd\log\frac{dkn}{\eps})$ time and uses $n\cdot O(\frac{1}{\eps})^{kd}$ space  and $O(mn\log \frac{n}{\eps}\cdot (d\log m + O(\frac{1}{\eps})^{kd}))$ worst-case preprocessing time.  Emiris and Psarros~\cite{emiris2020products} obtained probabilistic $(1+\eps)$-ANN and $(1+\eps,\delta)$-ANN data structures under $\tilde{d}_F$ with failure probabilities $1/2$ for the case of $k = m$.  The $(1+\eps)$-ANN data structure answers a query in $\tilde{O}(d2^{4m}m^{O(1/\eps)})$ time and uses $\tilde{O}(dm^2 n ) \cdot (2 + d/\log m)^{O(dm^{O(1/\eps)}\log(1/\eps))}$ space and preprocessing time.  The $(1+\eps,\delta)$-ANN data structure answers a query in $O(d2^{4m}\log n)$ time and uses $O(dn)  + (mn)^{O(m/\eps^2)}$ space and preprocessing time.  The failure probabilities can be reduced to $1/n$ with an increase in the query time, space, and preprocessing time by an $O(\log n)$ factor.
	
	Most known results under $d_F$ are for $\real$.  For curves in $\real$ (time series), Driemel and Psarros~\cite{DP2020} developed the first $(\kappa,\delta)$-ANN data structures under $d_F$ with the following combinations of ($\kappa$, query time, space): $\bigl(5+\eps,O(k),O(mn)+n\cdot O(\frac{1}{\eps})^{k}\bigr)$, $\bigl(2+\eps,O(2^kk), O(mn)+ n \cdot O(\frac{m}{k\eps})^{k}\bigr)$, and $\bigl(24k+1, O(k\log n), O(n\log n + mn)\bigr)$.  The last one is probabilistic, and the failure probability is $1/\mathrm{poly}(n)$.
	%This strategy is also consistent with the framework of discretizing the query space and dictionary construction. 
	Later, Bringman~et~al.~\cite{BDNP2022} obtained improved solutions with the following combinations of ($\kappa$, query time, space): $\bigl(1+\eps, O(2^kk), n\cdot O(\frac{m}{k\eps})^{k}\bigr)$, $(2+\eps, O(k), n\cdot O(\frac{m}{k\eps})^k\bigr)$, $\bigl(2+\eps, O(2^kk), O(mn) + n \cdot O(\frac{1}{\eps})^k \bigr)$, $\bigl(2+\eps,O(\frac{1}{\eps})^{k+2}, O(mn)\bigr)$, and $\bigl(3+\eps, O(k), O(mn) + n\cdot O(\frac{1}{\eps})^k\bigr)$.  They also obtained lower bounds that are conditioned on the Orthogonal Vectors Hypothesis: for all $\eps,\eps' \in (0,1)$, it is impossible to achieve the combination $\bigl(2-\eps, O(n^{1-\eps'}),\mathrm{poly}(n)\bigr)$ in $\real$ when $1 \ll k \ll \log n$ and $m =kn^{\Theta(1/k)}$, or $\bigl(3-\eps, O(n^{1-\eps'}),\mathrm{poly}(n)\bigr)$ in $\real$ when $m = k = \Theta(\log  n)$,  or $\bigl(3-\eps, O(n^{1-\eps'}),\mathrm{poly}(n)\bigr)$ in $\real^2$ when $1 \ll k \ll \log n$ and $m = kn^{\Theta(1/k)}$.  Mirzanezhad~\cite{mirzanezhad2020approximate} described a $(1+\eps,\delta)$-ANN data structure for $\real^d$ that answer a query in $O(kd)$ time and uses $O(n \cdot \max\{\sqrt{d}/\eps,\sqrt{d}D/\eps^2\}^{dk})$ space, where $D$ is the diameter of the set of input curves.  If $k$ is not given, the approximation ratio and space increase to $5+\eps$ and $n \cdot O(\frac{1}{\eps})^{dm}$, respectively.  There is no bound on $D$ in the space complexity of the first solution. We summarize all these previous results in Table~\ref{Table:ANN} for easier comparison. 
	%and only a $5+\eps$ approximation ratio is achieved in the second solution despite the high space complexity (exponential in $dm$).
	\begin{table}[h]

		\centering
		\begin{minipage}{\textwidth}
			
		\resizebox{\columnwidth}{!}{%
			\begin{tabular}{|c|c|c|c|}
				\hline
				Distance & Space & Query time & Approximation \\ [0.5ex]
				\hline\hline
				\multirow{8}{11em}{\centering Continuous Fr\'echet, $\real$} & $O\left(mn\right)+n\cdot O\left(\frac{1}{\eps}\right)^k$ & $O(k)$ & $(5+\epsilon, \delta)$-ANN~\cite{DP2020}\\
				&$O\left(mn\right)+n\cdot O\left(\frac{m}{k\eps}\right)^k$ & $O(2^kk)$ & $(2+\epsilon,\delta)$-ANN~\cite{DP2020}\\
				&$O\left(n\log n + mn\right)$ & $O(k\log n)$ & $(24k+1, \delta)$-ANN~\cite{DP2020}\footnote{A randomized data structure with a failure probability of $1/\text{poly}(n)$.}\\
				&$n\cdot O\left(\frac{m}{k\eps}\right)^k$ & $O(2^kk)$ & $(1+\eps, \delta)$-ANN~\cite{BDNP2022}\\
				&$n\cdot O\left(\frac{m}{k\eps}\right)^k$ & $O(k)$ & $(2+\eps, \delta)$-ANN~\cite{BDNP2022}\\
				&$O\left(mn\right)+n\cdot O\left(\frac{1}{\eps}\right)^k$ & $O(2^kk)$ & $(2+\eps, \delta)$-ANN~\cite{BDNP2022}\\
				&$O\left(mn\right)$ & $O\left(\frac{1}{\eps}\right)^{k+2}$ & $(2+\eps, \delta)$-ANN~\cite{BDNP2022}\\
				&$O\left(mn\right)+n\cdot O\left(\frac{1}{\eps}\right)^k$ & $O(k)$ & $(3+\eps, \delta)$-ANN~\cite{BDNP2022}\\
				\hline
				\multirow{4}{11em}{\centering Continuous Fr\'echet, $\real^d$} &
				$O\left(n\cdot \max\{\sqrt{d}/\eps, \sqrt{d}D/\eps^2\}^{dk}\right)$ & $O(kd)$ & $(1+\eps, \delta)$-ANN~\cite{mirzanezhad2020approximate}\\
				&$n\cdot O\left(\frac{1}{\eps}\right)^{dm}$ & $O(kd)$ & $(5+\eps, \delta)$-ANN~\cite{mirzanezhad2020approximate}\\
				&$\tilde{O}\left(k(mnd^d/\eps^d)^{O(k+1/\eps^2)}\right)$ & $\tilde{O}\left(k(mn)^{0.5+\eps}/\eps^d+k(d/\eps)^{O(dk)}\right)$ & $(1+\eps, \delta)$-ANN, Theorem~\ref{thm:ann23}\\
				&$\tilde{O}\left(k(mnd^d/\eps^d)^{O(k+1/\eps^2)}\right)$ & $\tilde{O}\left(k(mn)^{0.5+\eps}/\eps^d\right)$ & $(3+\eps, \delta)$-ANN, Theorem~\ref{thm:3ANN}\\
				\hline
				\multirow{2}{11em}{\centering Continuous Fr\'echet, $\real^2$ and $\real^3$} &
				$O\left(\frac{1}{\eps}\right)^{4d(k-1)+1}(mn)^{4(k-1)}k\log^2n$ & $O\left(\frac{1}{\eps}^{2d(k-2)}\right)k\log\frac{mn}{\eps}\log n$ & $(1+\eps, \delta)$-ANN, Theorem~\ref{thm:ann23}\\
				&$O\left(\frac{1}{\eps}\right)^{2d(k-1)+1}(mn)^{2(k-1)}k\log^2n$ & $O\left(k\log\frac{mn}{\eps}\log n\right)$ & $(3+\eps, \delta)$-ANN, Theorem~\ref{thm:3ANN}\\
				\hline
				\multirow{6}{11em}{\centering Discrete Fr\'echet, $\real^d$} & $O\left(|X|^{\sqrt{m}}(m^{\sqrt{m}}n)^2\right)$ & $O\left(m^{O(1)}\log n\right)$ &  $(O(\log m +\log\log n),\delta$)-ANN~\cite{indyk2002approximate}\\
				&$O(n\log n + mn)$& $O(m)$& $(4d^{3/2}m, \delta)$-ANN~\cite{driemel2017locality}\\
				&$O(2^{4md}n\log n + mn)$&$O(2^{4dm}m\log n)$&$(4d^{3/2}, \delta)$-ANN~\cite{driemel2017locality}\\
				&$O(2^{2t}m^{t-1}n\log n + mn)$&$O(2^{2t}m^t\log n)$&$(4d^{3/2}m/t, \delta)$-ANN~\cite{driemel2017locality}\\
				& $n\cdot O\left(\frac{1}{\eps}\right)^{kd}$ & $O(kd)$\footnote{The query time is achieved by implementing the dictionary with a hash table. The query time is $O(kd\log\frac{dkn}{\eps})$ when implementing the dictionary with a trie.} & $(1+\eps, \delta)$-ANN~\cite{filtser2020approximate}\\
				&$O(dn)+(mn)^{O(m/\eps^2)}$& $O\left(d2^{4m}\log n\right)$ & $(1+\eps, \delta)$-ANN\footnote{A randomized data structure with a failure probability of $\frac{1}{2}.$}~\cite{emiris2020products} \\
				\hline
			\end{tabular}%
		}
	
\end{minipage}
	 
		\caption{Comparison of our data structures to the previous results.}\label{Table:ANN}
	
	\end{table}

	We develop $(\kappa,\delta)$-ANN data structures under $d_F$ in $\real^d$ for $\kappa \in \{1+\eps,3+\eps\}$ and $d \geq 2$.  We assume that every query curve has at most $k$ vertices, $k \ll m$, and $k$ is given for preprocessing.   To simplify the bounds, we assume that $k \geq 3$ throughout this paper.  There are three design goals.  First, the query times are sublinear in $mn$.  Second, the space complexities depend only on the input parameters.  Third, the space complexities are neither proportional to $\min\{m^{\Omega(d)}, n^{\Omega(d)}\}$ nor exponential in $\min\{m, n\}$.  It would be desirable to remove all exponential dependencies on $d$, but we are not there yet.

	%Our solutions are designed for the case that $d$ is fixed and small.
	
	We achieve a query time of $\tilde{O}(k(mn)^{0.5+\eps}/\eps^{d} + k(d/\eps)^{O(dk)})$ for $(1+\eps,\delta)$-ANN.   We remove the exponential dependence on $k$ for $(3+\eps,\delta)$-ANN and obtain an $\tilde{O}(k(mn)^{0.5+\eps}/\eps^{d})$ query time.  The space and expected preprocessing time are $\tilde{O}(k(mnd^d/\eps^d)^{O(k+1/\eps^2)})$ in both cases.  
	%We present a $(1+\eps)$-ANN data structure that can answer a query in $\tilde{O}(kd^3(mn)^{0.5+\eps} + k(d/\eps)^{O(dk)})$ time.  The space complexity and expected preprocessing time are $\tilde{O}(k(mnd^d/\eps^d)^{O(k+1/\eps^2)})$.    We also obtain a $(1+\eps)$-ANS data structure with these performance guarantees.  
	For ease of presentation, we treat any factor in our bounds that depends only on $d$ as $O(1)$.  The hidden polylog factors in the big-$\tilde{O}$ notation have powers dependent on $d$.   
	In two and three dimensions, we improve the query times to $O(1/\eps)^{O(k)} \cdot \tilde{O}(k)$ for $(1+\eps,\delta)$-ANN and $\tilde{O}(k)$ for $(3+\eps,\delta)$-ANN.  The space and expected preprocessing time improve to $O(mn/\eps)^{O(k)} \cdot \tilde{O}(k)$ in both cases.   	Using the reduction in~\cite{har2012approximate} (Theorem~\ref{thm:reduce} in Section~\ref{sec:ANN}), we obtain $(1+\eps)$-ANN and $(3+\eps)$-ANN data structures by increasing the query time and space by an $O(\log n)$ and an $O(\frac{1}{\eps}\log^2 n)$ factors, respectively.  
	%Our bounds for $\real^2$and $\real^3$ are similar to the bounds of the $(1+\eps)$-ANN solution by Bringman~et~al.~\cite{BDNP2022} which works for $\real$ only.  
	More precise bounds are stated in Theorems~\ref{thm:ann23} and~\ref{thm:3ANN}.   
	
	Our $(1+\eps,\delta)$-ANN result is based on two new ideas.  First, we develop a novel encoding of query curves that are based on local grids in the input vertex neighborhoods.  Second, we draw a connection to an approximate segment shooting problem which we solve efficiently.  We present these ideas in Sections~\ref{sec:ANN} and~\ref{Sec: subroutines}.  Our $(3+\eps)$-ANN result is obtained by simplifying the encoding.  We present this result in Section~\ref{sec:3ANN}.
	
	We work in the word RAM model.  We use $(v_{i,1},\ldots,v_{i,m})$ to denote the sequence of vertices of $\tau_i$ from beginning to end---$\tau_i$ is oriented from $v_{i,1}$ to $v_{i,m}$.  We use $\tau_{i,a}$ to denote the edge $v_{i,a}v_{i,a+1}$.    For any two points $x,y \in \tau_i$, we say that $x \leq_{\tau_i} y$ if $x$ does not appear behind $y$ along $\tau_i$, and $\tau_i[x,y]$ denotes the subcurve between $x$ and $y$.  Given two subsets $X, Y \subseteq \tau_i$, $X \leq_{\tau_i} Y$ if and only if for every point $x \in X$ and every point $y \in Y$, $x \leq_{\tau_i} y$.  A ball centered at the origin with radius $r$ is denoted by $B_r$. Given two subsets $X, Y \subset \mathbb{R}^d$, $d(X,Y) = \min_{x\in X, y\in Y} d(x,y)$; their \emph{Minkowski sum} is $X\oplus Y= \{x+y: x\in X, y\in Y\}$; if $X = \{p\}$, we write $p \oplus Y$ for simplicity.  For any $x,y \in \real^d$, $xy$ denotes the \emph{oriented segment from $x$ to $y$}, and $\aff(xy)$ is the \emph{oriented support line of $xy$} that shares the orientation of $xy$.
	%For any point $x$ and any line $\ell$, $x \!\downarrow\! \ell$ denotes the orthogonal projection of $x$ onto $\ell$. Given a subset $S \subset \real^d$, $S \!\downarrow\! \ell = \{x \!\downarrow\! \ell: x\in S\}$. 

	\section{$\pmb{(1+O(\eps),\delta)}$-ANN}
	\label{sec:ANN}

	Har-Peled~et~al.~\cite[Theorem 2.10]{har2012approximate} proved a reduction from the $(1+\eps)$-ANN problem to the $(1+\eps,\delta)$-ANN problem.  Although the result is described for points in a metric space with a probabilistic data structure for the $(1+\eps,\delta)$-ANN problem, the method is general enough to work for polygonal curves under $d_F$ or $\tilde{d}_F$ in $\real^d$ and any deterministic solution for the $(1+\eps,\delta)$-ANN problem.  We rephrase their result in our context below.
	
	\begin{theorem}[\cite{har2012approximate}]
		\label{thm:reduce}
		Let $T$ be a set of $n$ polygonal curves in $\real^d$.  If there is a data structure for the $(\kappa,\delta)$-ANN problem for $T$ under $d_F$ or $\tilde{d}_F$ that has space complexity~$S$, query time $Q$, deletion time $D$, and preprocessing time $P$, then there is a $\kappa(1+O(\eps))$-ANN data structure for $T$ under $d_F$ or $\tilde{d}_F$ that has space complexity $O(\frac{1}{\eps}S\log^2 n)$, query time $O(Q\log n)$, and expected preprocessing time $O\bigl(\frac{1}{\eps\log^2 n}P + (Q + D)n\log n\bigr)$.
	\end{theorem}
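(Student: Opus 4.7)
The plan is to observe that Theorem~\ref{thm:reduce} is a direct transfer of the reduction of Har-Peled et al. from points in a metric space to polygonal curves under $d_F$ or $\tilde{d}_F$, and to verify that their argument never uses properties of the ground set beyond the triangle inequality and black-box access to the $(\kappa,\delta)$-ANN primitive. Since $d_F$ and $\tilde{d}_F$ are genuine metrics on the space of polygonal curves in $\real^d$, the set $T$ sits inside a metric space and their construction should go through essentially verbatim.

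First I would recall the underlying hierarchical structure (the ring-cover tree of Har-Peled et al.): build a clustering hierarchy on $T$ of depth $O(\log n)$, so that at each node of the hierarchy the ``interesting'' distances lie in a bounded range of scales. At each node I would enumerate $O(\tfrac{1}{\eps} \log n)$ candidate scales $\delta$ in geometric progression with ratio $1+\Theta(\eps)$, and store an independent copy of the $(\kappa,\delta)$-ANN data structure, restricted to the curves currently at that node, for every such $\delta$. The total space then becomes $O(\tfrac{1}{\eps} S \log^2 n)$, matching the stated bound. For a query curve $\sigma$, the retrieval procedure descends the hierarchy in $O(\log n)$ steps; at each node it performs a single $(\kappa,\delta)$-ANN query against the structure at the current scale to either report a candidate or refine the working scale, yielding query time $O(Q \log n)$. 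The expected preprocessing time comes from amortizing the random sampling used to build the tree: construction at each level touches each input curve $O(1)$ times in expectation, contributing the $\tfrac{1}{\eps \log^2 n} P$ term, while deletions of removed curves from the stored $(\kappa,\delta)$-ANN structures throughout the $O(\log n)$ levels contribute the $(Q+D)\, n \log n$ term.

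The correctness statement that the overall approximation factor is $\kappa(1+O(\eps))$ follows from the usual triangle-inequality argument: if the true nearest neighbor is at distance $r^\ast$ from $\sigma$ and the algorithm operates at the scale $\delta$ with $\delta \in [r^\ast, (1+\eps)r^\ast]$, then the $(\kappa,\delta)$-ANN primitive returns a curve at distance at most $\kappa(1+\eps) r^\ast$. The main obstacle will be checking line-by-line that every step in Har-Peled et al.\ is metric-agnostic, i.e.\ that no coordinate-based, volumetric, or doubling-dimension argument slips into the ring-cover-tree construction or its analysis; since their presentation is already in terms of an abstract metric oracle, this reduces to confirming that deletions in the $(\kappa,\delta)$-ANN structure behave identically on curves as on points, so that the $D$ term enters exactly as stated. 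Once this is verified, the bounds on space, query time, and preprocessing time in Theorem~\ref{thm:reduce} follow directly.
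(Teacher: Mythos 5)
Your proposal is correct and follows exactly the same reasoning as the paper: the paper does not reprove Theorem~\ref{thm:reduce} but simply cites Har-Peled et al.~\cite{har2012approximate} and observes that their reduction is metric-agnostic, so it applies verbatim to polygonal curves under $d_F$ or $\tilde{d}_F$. Your additional sketch of the ring-cover-tree hierarchy, the geometric progression of scales, and the accounting of space, query, and preprocessing costs is a faithful reconstruction of the cited proof and introduces no gaps.
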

	
	By Theorem~\ref{thm:reduce}, we can focus on the $(1+\eps,\delta)$-ANN problem.  Without loss of generality, we assume that each curve in $T$ has exactly $m$ vertices, and every query curve has exactly $k$ vertices.  If necessary, extra vertices can be added in an arbitrary manner to enforce this assumption without affecting the Fr\'{e}chet distance.
	
	 The high level idea of our preprocessing is to identify all query curves that are within a Fr\'{e}chet distance $(1+O(\eps))\delta$ from each $\tau_i \in T$, group the curves that share similar structural characteristics, assign each group a unique key value, and store these key values in a trie $\mathcal{D}$.  It is possible for a query curve to belong to multiple groups.  Each key value is associated with the subset of curves in $T$ that induce that key value.  Correspondingly, given a query curve $\sigma$, we generate all possible key values for $\sigma$ and search $\mathcal{D}$ with them.  If some curve in $T$ is retrieved, it is the desired answer; otherwise, we report ``no''.
	
	There are two challenges to overcome.  First, it is impossible to examine all possible query curves.  We can only check some space discretization in order to obtain a finite running time.  To control the discretization error, it is easy to cover the input curves by a grid with an appropriate cell width; however, the grid size and hence the data structure size would then depend on some non-combinatorial parameters.  We propose \emph{coarse encodings} of query curves so that there are $O(\sqrt{d}/\eps)^{4d(k-1)}(mn)^{4(k-1)}$ of them.  A query curve may have $O(\sqrt{d}/\eps)^{2d(k-2)}$ coarse encodings.  The second challenge is to efficiently generate all possible coarse encodings of a query curve at query time.  We reduce the coarse encoding generation to an approximate segment shooting problem.  This step turns out to be the bottleneck in four and higher dimensions as we aim to avoid any factor of the form $m^{\Omega(d)}$ or $n^{\Omega(d)}$ in the space complexity. It is the reason for the $(mn)^{0.5+\eps}$ term in the query time.  In two and three dimensions, the approximate segment shooting problem can be solved more efficiently.
	
	In the rest of this section, we present the coarse encoding and a $(1+O(\eps),\delta)$-ANN data structure, using an approximate segment shooting oracle.  The approximate segment shooting problem can be solved by the results in~\cite{BHO1994} in two and three dimensions.  We solve the approximate segment shooting problem in four and higher dimensions in Section~\ref{Sec: subroutines}.

	\subsection{Coarse encodings of query curves}
	\label{sec:coarse}
	
	Imagine an infinite grid in $\real^d$ of cell width $\varepsilon\delta/\sqrt{d}$.  For any subset $R \subset \real^d$, we use $G(R)$ to denote the set of grid cells that intersect $R$.  Let $\mathcal{G}_1 = \bigcup_{i\in[n], a\in [m]}G(v_{i,a}\oplus B_{\delta})$.  Let $\mathcal{G}_2 = \bigcup_{i\in[n], a\in [m]}G(v_{i,a}\oplus B_{(2+12\eps)\delta})$. Both $\mathcal{G}_1$ and $\mathcal{G}_2$ have $O(mn/\eps^d)$ size.

	The coarse encoding of a curve $\sigma = (w_1,w_2,\ldots,w_k)$ is a 3-tuple $\mathcal{F}=(\mathcal{A},\mathcal{B},\mathcal{C})$.
	%that satisfies some structural requirements as well as some constraints with respect to $\sigma$. 
	%
	The component $\mathcal{C}$ is sequence of pairs of grid cells $((c_{j,1}, c_{j,2}))_{j\in [k-1]}$ such that $(c_{j,1}, c_{j,2}) \in  (\mathcal{G}_1 \times \mathcal{G}_1) \cup \{\text{null}\}$.  
	%In particular, both $(c_{1,1}, c_{1,2})$ and $(c_{k-1,1}, c_{k-1,2})$ belong to $\mathcal{G}_1 \times \mathcal{G}_1$.  
	Both $\mathcal{A}$ and $\mathcal{B}$ are arrays of length $k-1$,
	%such that the indices for $\mathcal{A}$ run from 2 to $k-1$, those for $\mathcal{B}$ run from 1 to $k-2$, 
	and every element of $\mathcal{A}$ and $\mathcal{B}$ belongs to $\mathcal{G}_2 \cup \{\text{null}\}$.
	%Moreover,  $\mathcal{A}[1]$ and $\mathcal{B}[k]$ belong to $\mathcal{G}_2$; for $j \in [2,k-1]$, if $(c_{j,1},c_{j,2}) = \text{null}$, $\mathcal{A}[j]$ and $\mathcal{B}[j]$ are null; for $j \in [2,k-1]$, if $(c_{j,1}, c_{j,2}) \not= \text{null}$, then $\mathcal{A}[j]$ and $\mathcal{B}[j]$ belong to $\mathcal{G}_2$, $d(c_{j,1}, \mathcal{A}[j])\le (1+4\eps)\delta$, and $d(c_{j,2}, \mathcal{B}[j])\le (1+4\eps)\delta$.  
	We first provide the intuition behind the design of $(\mathcal{A},\mathcal{B},\mathcal{C})$ before describing the constraints that realize the intuition.
	
	Imagine that a curve $\tau_i \in T$ is a $(1+O(\eps))$-ANN of $\sigma$.  The data structure needs to cater for the preprocessing, during which the query curve $\sigma$ is not available; it also needs to cater for the query procedure, during which we do not want to directly consult the input curves in $T$ in order to avoid a linear dependence in $mn$.
	
	In preprocessing, we use pairs of grid cells as surrogates of the possible query curve edges.  The advantage is that we can enumerate all possible pairs of grid cells and hence cater for all possible query curve edges.  Specifically, for $j \in [k-1]$, if $(c_{j,1},c_{j,2}) \not= \text{null}$, it is the surrogate of $w_jw_{j+1}$, so $w_jw_{j+1}$ should pass near $c_{j,1}$ and $c_{j,2}$.  Each non-null $(c_{j,1},c_{j,2})$ corresponds to a contiguous subsequence $v_{i,a},\ldots,v_{i,b}$ of vertices of $\tau_i$ that are matched to points in $w_jw_{j+1}$ in a Fr\'{e}chet matching.   Of course, we do not know the Fr\'{e}chet matching, so we will need to enumerate and handle all possibilities.  Also, since $w_jw_{j+1}$ is unknown in preprocessing, $v_{i,a},\ldots,v_{i,b}$ can only be matched to a segment joining a vertex $x_j$ of $c_{j,1}$ to a vertex $y_j$ of $c_{j,2}$ so that $d_F(x_j'y_j',\tau_i[v_{i,a},v_{i,b}]) \leq (1+O(\eps))\delta$ for some subsegment $x_j'y_j' \subseteq x_jy_j$.  This property will be enforced in the data structure construction later.
	%but it has nothing to do with the generation of a coarse encoding for $\sigma$ at query time.  
	
	At query time, given $\sigma = (w_1,\ldots,w_k)$, we will make approximate segment shooting queries to determine a sequence of cell pairs $((c_{j,1},c_{j,2}))_{j \in [k-1]}$.  We do not always use $(c_{j,1},c_{j,2})$ as a surrogate for the edge $w_jw_{j+1}$ though.  As mentioned in the previous paragraph, a non-null $(c_{j,1},c_{j,2})$ denotes the matching of a contiguous subsequence of input curve vertices to $w_jw_{j+1}$; however, we must also allow the matching of a contiguous subsequence of vertices of $\sigma$ to a single input edge.  Therefore, after determining $((c_{j,1},c_{j,2}))_{j \in [k-1]}$, we still have the choice of using $(c_{j,1},c_{j,2})$ as is or substituting it by the null value.  For a technical reason, $(c_{1,1},c_{1,2})$ and $(c_{k-1,1},c_{k-1,2})$ are always kept non-null, so we have $2^{k-3}$ possible sequences of pairs of cells.  Take one of these sequences.  If $(c_{r,1},c_{r,2})$ and $(c_{s,1},c_{s,2})$ are two non-null pairs such that $(c_{j,1},c_{j,2}) = \text{null}$ for $j \in [r+1,s-1]$, it means that no vertex of $\tau_i$ is matched to $w_jw_{j+1}$ for $j \in [r+1,s-1]$.  As a result, the vertices $w_{r+1},\ldots,w_s$ of $\sigma$ are matched to the edge $v_{i,b}v_{i,b+1}$ of $\tau_i$, where $v_{i,b}$ is the last vertex of $\tau_i$ matched to $w_rw_{r+1}$ in the current enumeration.  
	%We should enforce a constraint that $w_{r+1},\ldots,w_s$ are matched within a distance of $(1+O(\eps))\delta$.  
	We use the pair of cells $\mathcal{A}[r]$ and $\mathcal{B}[s]$ as the surrogate of the edge $v_{i,b}v_{i,b+1}$.   So we require $\mathcal{A}[r]$ and $\mathcal{B}[s]$ to be near $c_{r,2}$ and $c_{s,1}$, respectively, because $(c_{r,1},c_{r,2})$ is the surrogate of $w_rw_{r+1}$, and $(c_{s,1},c_{s,2})$ is the surrogate of $w_sw_{s+1}$.  We have to try all possible locations of $\mathcal{A}[r]$ and $\mathcal{B}[s]$ in the vicinity of $c_{r,2}$ and $c_{s,1}$.  $\mathcal{A}[r]$ and $\mathcal{B}[s]$ can be the surrogate for edges of multiple curves in $T$, which allows us to compare $\sigma$ with multiple input curves simultaneously at query time.  The constraint to be enforced is that $w_{r+1},\ldots,w_s$ can be matched to a segment joining a vertex $x_r$ of $\mathcal{A}[r]$ and a vertex $x_s$ of $\mathcal{B}[s]$ so that $d_F(x_r'x_s',\sigma[w_{r+1},w_s]) \leq (1+O(\eps))\delta$ for some subsegment $x_r'x_s' \subseteq x_rx_s$. Figure~\ref{fg:encoding} shows a illustration for the intuition above.
	
	\begin{figure}[h]
		\centerline{\includegraphics[scale=0.7]{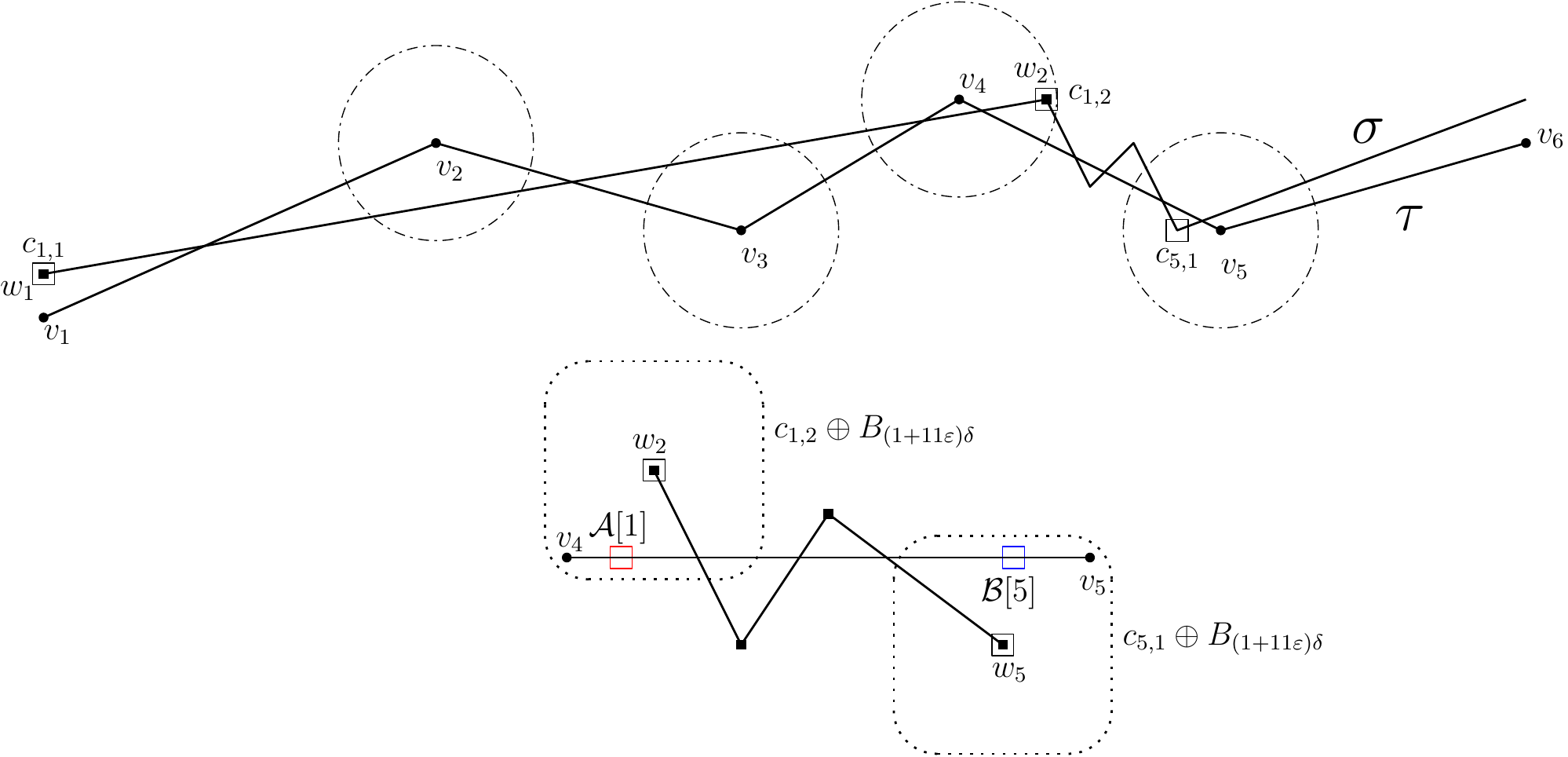}}
		\caption{The underlying intuition for deriving the coarse encoding of $\sigma$. We use $\tau$ instead of $\tau_i$ for ease of notation. Assume that $d_F(\tau, \sigma)\le \delta$ and the vertices $v_1, v_2,v_3, v_4$ are matched to the segment $w_1w_{2}$ by the Fr\'echet matching. $w_1w_2$ must intersect some balls centered at $\tau$'s vertices, which means that $w_1w_2$ intersects $\mathcal{G}_1$. Let $c_{1,1}$ and $c_{1,2}$ be the first and the last cells in $\mathcal{G}_1$ that intersect $w_1w_2$ along the direction of $w_1w_2$. $(c_{1,1}, c_{1,2})$ can serve as a surrogate of the edge $w_1w_2$ in a sense that we can verify whether $v_1, v_2,v_3, v_4$ can be matched to $w_1w_2$ properly by checking whether they can be matched to a segment that joins vertices of $c_{1,1}$ and $c_{1,2}$ properly. This idea can be generalized to all edges of $\sigma$ with vertices of $\tau$ matched to them. The subcurve $\sigma[w_2, w_5]$ is matched to an edge $v_4v_5$ of $\tau$. We introduce $\mathcal{A}[1] \subset G(c_{1,2}\oplus B_{(1+11\eps)\delta})$ and  $\mathcal{B}[5]\subset G(c_{5,1}\oplus B_{(1+11\eps)\delta})$. $(\mathcal{A}[1], \mathcal{B}[5])$ can serve as a surrogate of $v_4v_5$. $(\mathcal{A}[1], \mathcal{B}[5])$ can encode $\sigma[w_2,  w_5]$ sufficiently because for every segment $x_1x_5$ that joins a vertex $x_1$ of $\mathcal{A}[1]$ and a vertex $x_5$ of $\mathcal{B}[5]$, there exists a subsegment $x_1'x_5'\subset x_1x_5$ such that $d_F(x_1'x_5', \sigma[w_2, w_5])\le (1+O(\eps))\delta$.}
		\label{fg:encoding}
	\end{figure}
	
	We present the constraints for $(\mathcal{A},\mathcal{B},\mathcal{C})$ that realize the intuition above.  
	%The first constraint concerns with finding a pair of cells $(c_{j,1},c_{j,2})$ near $w_jw_{j+1}$.   We allow the possibility of $(c_{j,1},c_{j,2}) = \text{null}$. 
	 When $(c_{j,1},c_{j,2}) \not= \text{null}$, a natural choice of $c_{j,1}$ is the first grid cell in $\mathcal{G}_1$ that we hit when walking from $w_j$ to $w_{j+1}$, i.e., segment shooting.  In $\real^d$ where $d \in \{2,3\}$, there are ray shooting data structures for boxes~\cite{BHO1994}.
	 %that uses $O((mn/\eps^d)^{2+\mu})$ space for any fixed $\mu \in (0,1)$ and answers a ray shooting queries in $O(\log \frac{mn}{\mu})$ time~\cite{BHO1994}.  
	 In higher dimensions, ray shooting results are known for a single convex polytope and an arrangement of hyperplanes~\cite{AM1993}; even in such cases, the query time is substantially sublinear only if the space complexity is at least the input size raised to a power of $\Omega(d)$.  It would be $(mn)^{\Omega(d)}$ in our case.  We define a $\lambda$-segment  query problem below that approximates the ray shooting problem, and we will present an efficient solution for $\lambda = 11\eps\delta$ in Section~\ref{Sec: subroutines} that avoids an $(mn)^{\Omega(d)}$ term in the space complexity.  As mentioned before, the ray shooting result  in~\cite{BHO1994} suffices in two and three dimensions.
	\begin{quote}
		{\bf $\pmb{\lambda}$-segment query.}  A set $O$ of objects in $\real^d$ is preprocessed into a data structure so that for any oriented query segment $pq$, the \emph{$\lambda$-segment query with $pq$ on $O$} returns one of the following answers:
		\begin{itemize}
			\item If $pq$ intersects an object in $O$, let $x$ be the first intersection point with an object in $O$ as we walk from $p$ to $q$.  In this case, the query returns an object $o \in O$ such that $px$ intersects $o \oplus B_\lambda$.  Figure~\ref{fg:segment} shows an illustration.
			
			%\item Otherwise, if there is an object within distance $\lambda$ from $pq$, the query returns one.  
			
			\item Otherwise, the query returns null or an object $o \in O$ such that $d(o,pq) \leq \lambda$.
		\end{itemize}
	\end{quote}	

	\begin{figure}
	\centerline{\includegraphics[scale=0.5]{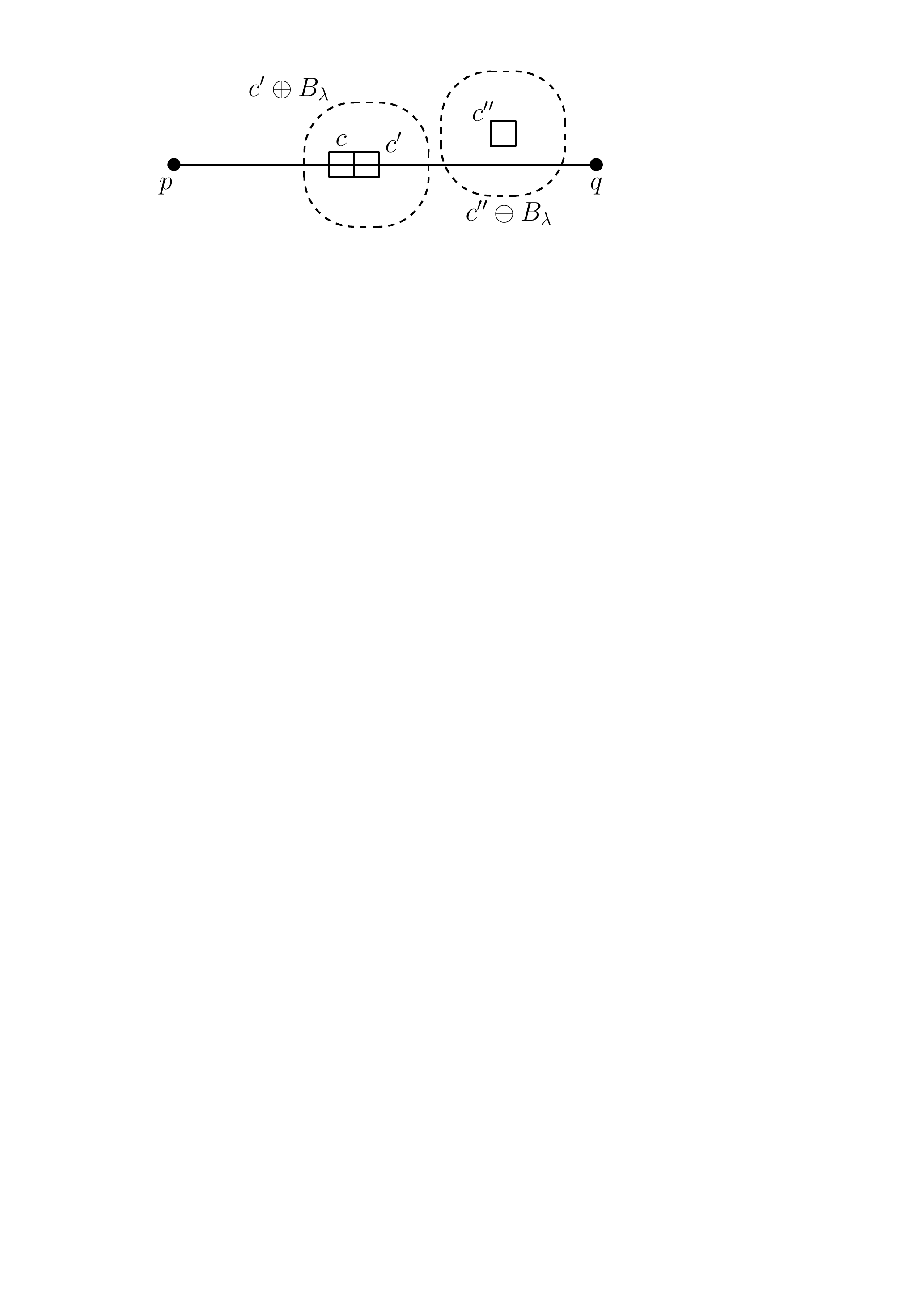}}
	\caption{The $\lambda$-segment query with $pq$ on the boxes $\{c,c',c''\}$ can return $c$ or $c'$ but not $c''$.}
	\label{fg:segment}
\end{figure}

\noindent We are now ready to state the three constraints on $(\mathcal{A},\mathcal{B},\mathcal{C})$.
	\begin{itemize}
		
		\item \textbf{Constraint~1:} 
		\begin{itemize}
			\item[(a)] Both $(c_{1,1}, c_{1,2})$ and $(c_{k-1,1}, c_{k-1,2})$ belong to $\mathcal{G}_1 \times \mathcal{G}_1$. 
			\item[(b)] For $j \in [k-1]$, if $(c_{j,1},c_{j,2}) \not= \text{null}$, then:
			\begin{itemize}
				\item[(i)] $c_{j,1}$ and $c_{j,2}$ are the grid cells returned by the $(11\eps\delta)$-segment queries with $w_jw_{j+1}$ and $w_{j+1}w_j$ on $\mathcal{G}_1$, respectively;
				\item[(ii)] the minimum point in $w_jw_{j+1} \cap (c_{j,1} \oplus B_{11\eps\delta})$ lies in front of the maximum point in $w_jw_{j+1} \cap (c_{j,2} \oplus B_{11\eps\delta})$ with respect to $\leq_{w_jw_{j+1}}$.
			\end{itemize}
		\end{itemize}
	
		\item {\bf Constraint~2:} 
		\begin{itemize}
			\item[(a)] $\mathcal{B}[1]$ and $\mathcal{A}[k-1]$ belong to $\mathcal{G}_2$.  
			\item[(b)] $w_1\in \mathcal{B}[1]$ and $w_k\in \mathcal{A}[k-1]$.
		\end{itemize}
		
		\item {\bf Constraint~3:} 
		\begin{itemize}
			\item[(a)] For $j \in [2,k-2]$, if $(c_{j,1},c_{j,2}) = \text{null}$, then $\mathcal{A}[j]$ and $\mathcal{B}[j]$ are null.
			\item[(b)] For $j \in [k-1]$, if $(c_{j,1}, c_{j,2}) \not= \text{null}$, then $\mathcal{A}[j]$ and $\mathcal{B}[j]$ belong to $\mathcal{G}_2$, $d(c_{j,1}, \mathcal{B}[j])\le (1+11\eps)\delta$, and $d(c_{j,2}, \mathcal{A}[j])\le (1+11\eps)\delta$.  
			\item[(c)] Let $\mathcal{J}$ be the set of $(r,s) \in [k-1]\times[k-1]$ such that $r < s$, $(c_{r,1}, c_{r, 2}) \not= \text{null}$, $(c_{s,1}, c_{s, 2}) \not= \text{null}$, and $(c_{j,1}, c_{j,2}) = \text{null}$ for $j \in [r+1,s-1]$.  For every $(r,s)\in \mathcal{J}$, let $x_r$ and $x_s$ be the smallest vertices of $\mathcal{A}[r]$ and $\mathcal{B}[s]$ according to the lexicographical order of their coordinates, there exist $x_r', x'_s \in x_rx_s$ such that $x'_r \leq_{x_rx_s} x'_s$ and $d_F(x_r'x_s', \sigma[w_{r+1}, w_{s}])\le (1+\eps)\delta$.  
		\end{itemize}
	
	\end{itemize}

	%Constraint~3(c) is satisfied as long as the its condition holds for the vertices $x_r$ and $x_s$ picked; the condition does not need to hold for all pairs of vertices.  Conversely, constraint~3(c) is not satisfied if its condition does not hold for the vertices $x_r$ and $x_s$ picked, even if the condition may hold for some other pair of vertices.  
	%In the extreme case that $k = 2$, $\mathcal{C}$ is equal to $((c_{1,1},c_{1,2}))$, $\mathcal{A}$ and $\mathcal{B}$ have only one entry each as specified in constraint~2, $\mathcal{J}$ is equal to $\{(1,2)\}$, and constraint~3 is satisfied trivially.
	
   We remark that if $w_jw_{j+1}$ intersects the interior of the union of cells in $\mathcal{G}_1$, constraint~1(b)(ii) is satisfied automatically for $(c_{j,1},c_{j,2})$ given constraint~1(b)(i).  When $w_jw_{j+1}$ does not intersect the interior of the union of cells in $\mathcal{G}_1$, it is possible that the $(11\eps\delta)$-segment queries return two cells that violate constraint~1(b)(ii).   In this case, the input vertices are too far from $w_jw_{j+1}$ to be matched to any point in $w_jw_{j+1}$ within a distance $\delta$, so we can set $(c_{j,1},c_{j,2})$ to be null.
   %and $(c_{j,1},c_{j,2}) \not= \text{null}$, a small adjustment may be needed at query time which is discussed in Section~\ref{sec:ann-query}.

	The next result shows that any query curve $\sigma$ near a curve $\tau_i \in T$ has a coarse encoding with some additional properties.  These properties will be useful in the analysis.  
	%We cannot incorporate these extra properties into the constraints because they depend on the matching between $\sigma$ and $\tau_i$, which will not be available for enumerating coarse encodings for $\sigma$ at query time.  
	Let $\mathcal{M}$ denote a matching between $\sigma$ and some $\tau_i \in T$.  For any subcurve $\sigma' \subseteq \sigma$, we use $\mathcal{M}(\sigma')$ to denote the subcurve of $\tau_i$ matched to $\sigma'$ by $\mathcal{M}$.
	
	\begin{lemma}\label{lem: existence_fuzzy}
		Let $\sigma = (w_1,\ldots,w_k)$ be a curve of $k$ vertices.  Let $\mathcal{M}$ be a matching between~$\sigma$ and $\tau_i \in T$ such that $d_{\mathcal{M}}(\tau_i,\sigma) \leq \delta$.    
		%For $a \in [m-1]$, if $v_{i,a} \in \mathcal{M}(w_1)$, define $j_a = 0$; otherwise, define $j_{a}=\min\{j : v_{i,a} \in \mathcal{M}(w_jw_{j+1}) \setminus \mathcal{M}(w_j)\}$.    The index $j_m$ is defined to be $k-1$.  
		Let $\tilde{\pi}_j = \{v_{i,a} : a \in [m-1], v_{i,a} \in \mathcal{M}(w_jw_{j+1}) \setminus \mathcal{M}(w_j) \}$ for all $j \in [k-1]$.   Define $\pi_j = \tilde{\pi}_j$ for all $j \in [k-2]$, $\pi_{k-1} = \{v_{i,m}\} \cup \tilde{\pi}_{k-1}$, and $\pi_0 = \{v_{i,1}\ldots,v_{i,m}\} \setminus \bigcup_{j=1}^{k-1}\pi_j$.  There is a coarse encoding $(\mathcal{A},\mathcal{B},\mathcal{C})$ for $\sigma$ that satisfies the following properties. 
		\begin{enumerate}[(i)]
			\item For $j \in [2,k-1]$, $\pi_j = \emptyset$ if and only if $(c_{j,1},c_{j,2}) = \text{null}$.
			\item For all $(r,s) \in \mathcal{J}$, if $r = 1$ and $\pi_1 = \emptyset$, let $b_1 = 1$; otherwise, let $b_r = \max\{b : v_{i,b} \in \pi_r\}$.  For all $(r,s) \in \mathcal{J}$, there exist a point $z \in  \mathcal{A}[r] \cap \tau_{i,b_r}$ and another point $z' \in \mathcal{B}[s] \cap \tau_{i,b_r}$ such that $z \leq_{\tau_{i,b_r}} z'$.
		\end{enumerate}
	\end{lemma}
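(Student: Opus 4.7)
The plan is to construct $(\mathcal{A},\mathcal{B},\mathcal{C})$ by inspecting $\pi_j$ and then using the matching $\mathcal{M}$ to guide the remaining choices. For $j\in[2,k-2]$ I set $(c_{j,1},c_{j,2})=\mathrm{null}$ precisely when $\pi_j=\emptyset$, which immediately delivers property~(i). For all other $j$, namely $j\in\{1,k-1\}$ or $\pi_j\ne\emptyset$, the segment $w_jw_{j+1}$ meets $\mathcal{G}_1$: if $\pi_j\ne\emptyset$ some $v\in\pi_j$ is matched into $w_jw_{j+1}$, placing a point of $w_jw_{j+1}$ inside $v\oplus B_\delta\subseteq\mathcal{G}_1$, and for $j\in\{1,k-1\}$ the endpoint matching of $\mathcal{M}$ puts $w_1\in v_{i,1}\oplus B_\delta$ and $w_k\in v_{i,m}\oplus B_\delta$. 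I then take $(c_{j,1},c_{j,2})$ to be the cells returned by the two $(11\eps\delta)$-segment queries on $\mathcal{G}_1$. Constraint~1(b)(ii) follows because the two query witnesses lie on $w_jx^{(1)}$ and $w_{j+1}x^{(2)}$, where $x^{(1)}\le_{w_jw_{j+1}}x^{(2)}$ are the minimum and maximum points of $w_jw_{j+1}\cap\mathcal{G}_1$, so the first witness precedes the second.

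The analytic heart of the argument is the following structural claim, to be established for every $(r,s)\in\mathcal{J}$: $\bigcup_{j=r+1}^{s}\mathcal{M}(w_j)\subseteq\tau_{i,b_r}$. To prove it, $v_{i,b_r+1}$ cannot lie in $\tilde{\pi}_r$ by the maximality of $b_r$ nor in the empty sets $\tilde{\pi}_{r+1},\dots,\tilde{\pi}_{s-1}$, so $v_{i,b_r+1}\in\tilde{\pi}_j$ for some $j\ge s$; unwinding the definition forces the earliest match-time of $v_{i,b_r+1}$ to strictly exceed $t_s^+$. Together with $\varrho(t_{r+1}^-)\ge v_{i,b_r}$ (from the match of $v_{i,b_r}$ in $w_rw_{r+1}$; the degenerate case $r=1,\pi_1=\emptyset$ gives $\varrho(t_1^-)=v_{i,1}$), monotonicity of $\varrho$ confines $\varrho([t_{r+1}^-,t_s^+])$ to the edge $\tau_{i,b_r}$. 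Translating the emptiness conditions on the $\pi_j$'s into strict inequalities of match-time intervals is the main obstacle.

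With the key claim in hand, I set $\mathcal{B}[1]$ and $\mathcal{A}[k-1]$ to the cells of $w_1$ and $w_k$ (each in $\mathcal{G}_2$), assign $\mathrm{null}$ where the corresponding $(c_{j,1},c_{j,2})$ is $\mathrm{null}$, and for each $(r,s)\in\mathcal{J}$ pick $u_r\in w_{r+1}x^{(2)}$ with $d(u_r,c_{r,2})\le 11\eps\delta$ and $u'_s\in w_sx^{(1)}$ with $d(u'_s,c_{s,1})\le 11\eps\delta$ (both existing by the respective segment queries). The inequalities $u_r\ge_{w_rw_{r+1}}x^{(2)}\ge_{w_rw_{r+1}}p_{b_r}$ (for $p_{b_r}$ the match of $v_{i,b_r}$) and $u'_s\le_{w_sw_{s+1}}x^{(1)}\le_{w_sw_{s+1}}$ (first $w_sw_{s+1}$-match of $v_{i,b_r+1}$, if any), together with the key claim, let me select $z\in\mathcal{M}(u_r)\cap\tau_{i,b_r}$ and $z'\in\mathcal{M}(u'_s)\cap\tau_{i,b_r}$ with $z\le_{\tau_{i,b_r}}z'$ by a careful choice of match times respecting monotonicity of $\varrho$. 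Set $\mathcal{A}[r]$ and $\mathcal{B}[s]$ to the cells of $z$ and $z'$. Triangle inequality with $d(z,u_r),d(z',u'_s)\le\delta$ yields $d(c_{r,2},\mathcal{A}[r]),d(c_{s,1},\mathcal{B}[s])\le(1+11\eps)\delta$; and since $c_{r,2},c_{s,1}\in\mathcal{G}_1$ sit within $\delta$ of some input vertex, $z$ and $z'$ lie within $(2+12\eps)\delta$ of that vertex, placing $\mathcal{A}[r]$ and $\mathcal{B}[s]$ in $\mathcal{G}_2$. Property~(ii) is immediate from the chosen ordering.

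Constraint~3(c) reduces to a Fr\'echet triangle inequality: the affine map sending $z\mapsto x_r$, $z'\mapsto x_s$ has pointwise deviation at most the cell diameter $\eps\delta$, so $d_F(zz',x_rx_s)\le\eps\delta$; taking the subsegment $x'_rx'_s\subseteq x_rx_s$ corresponding to $\tau_i[y_{r+1},y_s]\subseteq zz'$ under this map, and combining with $d_F(\sigma[w_{r+1},w_s],\tau_i[y_{r+1},y_s])\le\delta$ supplied by $\mathcal{M}$, yields $d_F(\sigma[w_{r+1},w_s],x'_rx'_s)\le(1+\eps)\delta$, completing the verification of all three constraints and of both lemma properties.
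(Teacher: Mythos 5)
Your construction of $(\mathcal{A},\mathcal{B},\mathcal{C})$ and the route to properties (i) and (ii) closely match the paper's own proof: set $(c_{j,1},c_{j,2})$ to null exactly when $\pi_j=\emptyset$, locate $\mathcal{B}[1],\mathcal{A}[k-1]$ from $w_1,w_k$, pick points $u_r,u'_s$ near $c_{r,2},c_{s,1}$ via the segment-query witnesses, take their matching partners, and finish constraint~3(c) by interpolating between $x_rx_s$ and the matched subsegment. The paper does precisely this with $p,p',q',q$ in the roles of your $p_{b_r},u_r,u'_s,q^*$. So the overall approach is the same.

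One caution, though: the stated ``key claim'' $\bigcup_{j=r+1}^{s}\mathcal{M}(w_j)\subseteq\tau_{i,b_r}$, and in particular the step $\varrho(t_{r+1}^-)\ge v_{i,b_r}$, is not true in general. If $v_{i,b_r}$ is matched \emph{only} to the point $w_{r+1}$ (which the definition of $\pi_r$ permits), then $\mathcal{M}(w_{r+1})$ can extend backwards past $v_{i,b_r}$ onto edge $\tau_{i,b_r-1}$, so $\varrho(t_{r+1}^-)$ may lie strictly before $v_{i,b_r}$. Fortunately this does not sink the proof, because the claim you actually rely on is weaker and correct: choosing a match time $t$ with $\rho(t)=u_r$ no earlier than some match time of $v_{i,b_r}$ (using $p_{b_r}\le_{w_rw_{r+1}}u_r$) and similarly for $u'_s$ against a match of $v_{i,b_r+1}$ gives, by monotonicity, a $z\in\mathcal{M}(u_r)$ and $z'\in\mathcal{M}(u'_s)$ with $v_{i,b_r}\le_{\tau_i}z\le_{\tau_i}z'\le_{\tau_i}v_{i,b_r+1}$, hence both on $\tau_{i,b_r}$. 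This is exactly what the paper argues with its chain $v_{i,b_r}\le_{\tau_i}\mathcal{M}(p')\le_{\tau_i}\mathcal{M}(q')\le_{\tau_i}v_{i,b_r+1}$. I would drop the over-strong key claim and state the ordering argument directly.
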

	
	\begin{proof}
		We define the component $\mathcal{C}$ as follows.  Given that $v_{i,1} \in \mathcal{M}(w_1)$, $w_1w_2$ intersects the interior of the union of cells in $\mathcal{G}_1$, so the $(11\eps\delta)$-segment query with $w_1w_2$ on $\mathcal{G}_1$ must return some cell; we define it to be $c_{1,1}$.  Similarly,  the $(11\eps\delta)$-segment query with $w_2w_1$ on $\mathcal{G}_1$ must return some cell; we define it to be $c_{1,2}$.  The pair $(c_{k-1,1},c_{k-1,2})$ are also defined in a similar way as $v_{i,m} \in \mathcal{M}(w_k)$.   Consider any $j \in [2,k-1]$.  If %$j = j_a$ for some $a \in [m]$, then $v_{i,a} \in \mathcal{M}(w_jw_{j+1}) \setminus \mathcal{M}(w_j)$, 
		$v_{i,a} \in \pi_j$ for some $a \in [m]$, then $v_{i,a} \in \mathcal{M}(w_jw_{j+1})$,
		which implies that $w_{j}w_{j+1}$ intersects $v_{i,a} \oplus B_\delta$ and hence the interior of the union of cells in $\mathcal{G}_1$.  Thus, $(c_{j,1},c_{j,2})$ can be defined using the $(11\eps\delta)$-segment queries with $w_jw_{j+1}$ and $w_{j+1}w_j$ as before. On the other hand, if $\pi_j = \emptyset$, we define $(c_{j,1}, c_{j,2})$ to be null.  Constraint 1 and property (i) in the lemma are thus satisfied.
		
		Next, we define $\mathcal{A}$ and $\mathcal{B}$ to satisfy constraints~2 and~3.
		
		As $v_{i,1}\in \mathcal{M}(w_1)$ and $v_{i,m}\in \mathcal{M}(w_k)$,  both $d(w_1,v_{i,1})$ and $d(w_k,v_{i,m})$ are at most $\delta$.  So $w_1$ lies in a cell in $G(v_{i,1} \oplus B_\delta) \subset G(v_{i,1}\oplus B_{(2+12\eps)\delta}) \subset \mathcal{G}_2$; we make this cell $\mathcal{B}[1]$.  Similarly, we define $\mathcal{A}[k-1]$ to be the cell in $\mathcal{G}_2$ that contains $w_k$.  Constraint~2 is thus enforced.  
		
		%It remains to define $\mathcal{A}[j]$ and $\mathcal{B}[j]$ for $j \in [2,k-1]$ and verify constraint~3.  
		For $j \in [2,k-2]$, if $(c_{j,1},c_{j,2}) = \text{null}$, let $\mathcal{A}[j]$ and $\mathcal{B}[j]$ be null, satisfying constraint~3(a).  $\mathcal{B}[1]$ and $\mathcal{A}[k-1]$ have already been defined, and they belong to $\mathcal{G}_2$.  Since $w_1$ lies in a cell in $G(v_{i,1} \oplus B_\delta) \subset \mathcal{G}_1$, we have $w_1 \in c_{1,1} \oplus B_{11\eps\delta}$ by the $(11\eps\delta)$-segment query.  Then, $d(c_{1,1},\mathcal{B}[1]) \leq 11\eps\delta$ as $w_1 \in \mathcal{B}[1]$.  Similarly, $d(c_{k-1,2},\mathcal{A}[k-1]) \leq 11\eps\delta$.  So $\mathcal{B}[1]$ and $\mathcal{A}[k-1]$ satisfy constraint~3(b).  It remains to discuss $\mathcal{A}[j]$ for $j \in [1,k-2]$ and $\mathcal{B}[j]$ for $j \in [2,k-1]$.
		
		Consider an arbitrary $j_* \in [k-1]$ such that $(c_{j_*,1},c_{j_*,2}) \not= \text{null}$.  Recall that $\mathcal{J}$ is the set of $(r,s) \in [k-1]\times [k-1]$ such that $r < s$, $(c_{r,1}, c_{r, 2}) \not= \text{null}$, $(c_{s,1}, c_{s, 2}) \not= \text{null}$, and $(c_{j,1}, c_{j,2}) = \text{null}$ for $j \in [r+1,s-1]$.   Thus, if $j_* \leq k-2$, it must exist as the first value in exactly one element of $\mathcal{J}$, and if $j_* \geq 2$, it must also exist as the second value in exactly another element of $\mathcal{J}$.  As a result, it suffices to define $\mathcal{A}[r]$ and $\mathcal{B}[s]$ for every $(r,s) \in \mathcal{J}$ and verify that constraints~3(b) and~3(c) are satisfied.
		
		 Take any $(r,s) \in \mathcal{J}$.  If $\pi_r \not= \emptyset$, it is legal to define $b_r = \max\{b : v_{i,b} \in \pi_r\}$.  If $\pi_r = \emptyset$, then $r=1$ because for any $r > 1$, $\pi_r \not= \text{null}$ by (i) as $(c_{r,1},c_{r,2}) \not= \text{null}$ by the definition of $\mathcal{J}$.  In the case that $r=1$ and $\pi_1= \emptyset$, $b_1$ is defined to be 1.  Therefore, $b_r$ is well defined for all $(r,s) \in \mathcal{J}$.  The definition of $b_r$ implies that $b_r = \max\{b : v_{i,b} \in \mathcal{M}(w_rw_{r+1})\}$.  Since $(c_{s,1},c_{s,2}) \not= \text{null}$ and $(c_{j,1},c_{j,2}) = \text{null}$ for $j \in [r+1,s-1]$, by (i), $\pi_s \not= \emptyset$ and $\pi_j = \emptyset$ for $j \in [r+1,s-1]$.  It follows that $v_{i,b_r+1} \in \pi_s$ which is a subset of $\mathcal{M}(w_sw_{s+1})$.
		 %As $(c_{r,1},c_{r,2}) \not= \text{null}$, we can choose   Then, $v_{i,b_r+1} \in \mathcal{M}(w_sw_{s+1})$ as $(c_{j,1},c_{j,2}) = \text{null}$ for $j \in [r+1,s-1]$.  
		%It follows that $\mathcal{M}(\sigma[w_{r+1}, w_s]) \subseteq \tau_{i,a}$; recall that $\tau_{i,a}$ denotes the edge $v_{i,a}v_{i,a+1}$.  
		Pick any point $p \in w_rw_{r+1}$ and any point $q \in w_sw_{s+1}$ such that $v_{i,b_r} \in \mathcal{M}(p)$ and $v_{i,b_r+1} \in \mathcal{M}(q)$.  
		
		We claim that $pw_{r+1} \cap (c_{r,2} \oplus B_{11\eps\delta})$ and $w_sq \cap (c_{s,1} \oplus B_{11\eps\delta})$ are non-empty.   Since $c_{r,2}$ is the cell in $\mathcal{G}_1$ returned by the $(11\eps\delta)$-segment query with $w_{r+1}w_r$, for any intersection point $x$ between $w_{r+1}w_r$ and any cell in $\mathcal{G}_1$, we have $xw_{r+1} \cap (c_{r,2} \oplus B_{11\eps\delta}) \not= \emptyset$ by definition.  We have $p \in w_rw_{r+1} \cap (v_{i,b_r} \oplus B_\delta)$ by our choice of $p$; it means that $p$ is an intersection point between $w_rw_{r+1}$ and a cell in $G(v_{i,b_r} \oplus B_\delta) \subset \mathcal{G}_1$.   We can thus substitute $p$ for $x$ and conclude that $pw_{r+1} \cap (c_{r,2} \oplus B_{11\eps\delta}) \not= \emptyset$.   Similarly, we get $w_sq \cap (c_{s,1} \oplus B_{11\eps\delta}) \not= \emptyset$.
		
		By our claim, when we walk from $w_{r+1}$ to $p$, we hit $c_{r,2} \oplus B_{11\eps\delta}$ at some point $p'$, and when we walk from $w_s$ to $q$, we hit $c_{s,1} \oplus B_{11\eps\delta}$ at some point $q'$.  Pick two points $z_r \in \mathcal{M}(p')$ and $z_s \in \mathcal{M}(q')$.  By definition, $c_{r,2} \in G(v_{i_r,a_r}\oplus B_\delta)$ for some $\tau_{i_r} \in T$ and some index $a_r \in [m]$.  
		%Similarly, $c_{s,1} \in G(v_{i_s,a_s} \oplus B_\delta)$ for some $\tau_{i_s} \in T$ and some index $a_s \in [m]$.   
		The cell width of $c_{r,2}$ is
		%and $c_{s,1}$ are 
		$\eps\delta/\sqrt{d}$, so $c_{r,2} \subset v_{i_r,a_r} \oplus B_{(1+\eps)\delta}$.
		%and $c_{s,1} \subset v_{i_s,a_s} \oplus B_{(1+\eps)\delta}$.  
		By triangle inequality, $p' \in v_{i_r,a_r} \oplus B_{(1+12\eps)\delta}$ and hence $z_r \in v_{i_r,a_r} \oplus B_{(2+12\eps)\delta}$, which implies that $z_r$ is contained in a cell in $G(v_{i_r,a_r} \oplus B_{(2+12\eps)\delta}) \subset \mathcal{G}_2$.  By a similar reasoning, we can also deduce that $z_s$ is contained in a cell in %$G(v_{i_s,a_s} \oplus B_{(2+12\eps)\delta}) \subset 
		$\mathcal{G}_2$.  We define $\mathcal{A}[r]$ and $\mathcal{B}[s]$ to be the cells in $\mathcal{G}_2$ that contain $z_r$ and $z_s$, respectively.  Figure~\ref{fg:walk} shows an illustration.
		
		\begin{figure}
			\centerline{\includegraphics[scale=0.5]{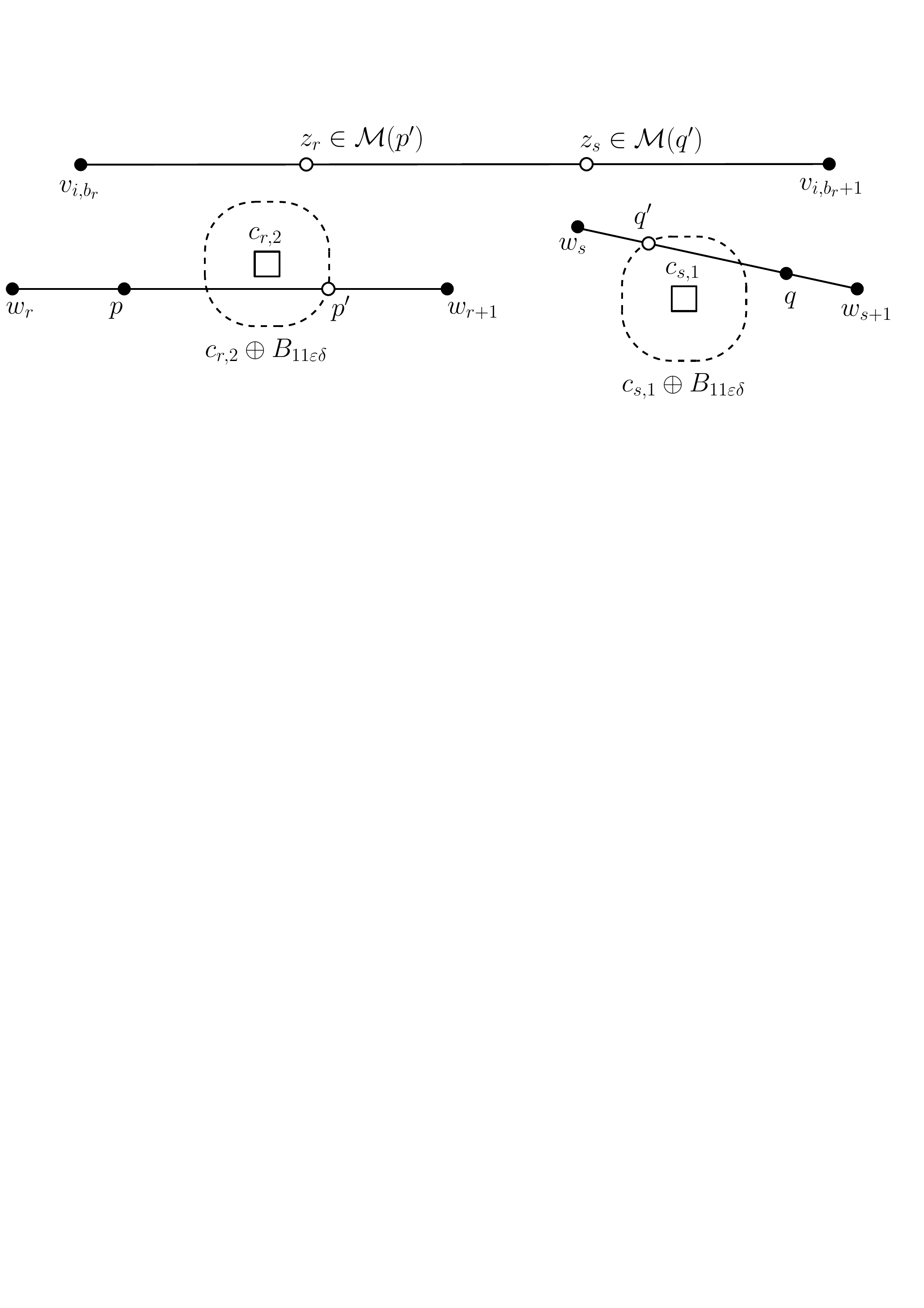}}
			\caption{Illustration of $p$, $p'$, $q'$, $q$, $z_r$, and $z_s$.}
			\label{fg:walk}
		\end{figure}
		
		Since $d(c_{r,2},z_r) \leq d(c_{r,2},p') + d(p',z_r) \leq (1+11\eps)\delta$, we get $d(c_{r,2},\mathcal{A}[r]) \leq (1+11\eps)\delta$.  Similarly, $d(c_{s,1},\mathcal{B}[s]) \leq (1+11\eps)\delta$.  This takes care of constraint~3(b).
		
		%Recall that $v_{i,b_r} \in \mathcal{M}(p)$ and $v_{i,b_r+1} \in \mathcal{M}(q)$ by our choices of $p$ and $q$.  As $p' \in pw_{r+1}$ and $q' \in w_sq$, we conclude that the subcurve $\sigma[p',q']$ is matched to a segment on $\tau_{i,b_r}$; recall that $\tau_{i,b_r}$ denotes the edge $v_{i,b_r}v_{i,b_r+1}$.  As $z_r \in \mathcal{M}(p')$ and $z_s \in \mathcal{M}(q')$, we establish property~(ii) in the lemma.
		
		As $v_{i,b_r} \in \mathcal{M}(p)$ and $p \leq_{w_rw_{r+1}} p'$, we have $v_{i,b_r} \leq_{\tau_i} \mathcal{M}(p')$.
		%We claim that $v_{i,b_r} \leq_{\tau_i} \mathcal{M}(p')$.  If not, as $v_{i,b_r} \in \mathcal{M}(p)$, when we walk from $w_{r+1}$ to $p$, we must hit a point in $v_{i,b_r} \oplus B_\delta$ before reaching $p'$.  But then, we must hit a cell in $G(v_{i,b_r} \oplus B_\delta) \subset \mathcal{G}_1$ before reaching $p'$, contradicting the definition of $c_{r,2}$ and our choice of $p'$.  
		Similarly, we have $\mathcal{M}(q') \leq_{\tau_i} v_{i,b_{r+1}}$.  Therefore, $v_{i,b_r} \leq_{\tau_i} \mathcal{M}(p') \leq_{\tau_i} \mathcal{M}(q') \leq_{\tau_i} v_{i,b_{r+1}}$.  As $z_r \in \mathcal{M}(p')$ and $z_s \in \mathcal{M}(q')$, $z_r$ and $z_s$ satisfy property (ii) of the lemma.
		The distance between $z_r$ and any vertex $x_r$ of $\mathcal{A}[r]$ is at most $\eps\delta$.  So is the distance between $z_s$ and any vertex $x_s$ of $\mathcal{B}[s]$.  Thus, we can use the linear interpolation $\mathcal{I}$ between $x_rx_s$ and $z_rz_s$ as a matching to get $d_\mathcal{I}(x_rx_s,z_rz_s) \leq \eps\delta$.  Combining $\mathcal{M}$ and $\mathcal{I}$ shows that there is a matching between $\sigma[p',q']$ and $x_rx_s$ within a distance of $(1+\eps)\delta$.  Since $\sigma[w_{r+1},w_s] \subseteq \sigma[p',q']$, we have thus verified constraint~3(c).
	\end{proof}
	
	%Next, we define a configuration with respect to a query curve $\sigma$ of $k$ vertices that is within a Fr{\'e}chet distance no more than $\delta$ from an input curve $\tau$. Given $\sigma$, a configuration is a $k-1$-tuple $\left((c_1^j, c_2^j)\right)_{j\in [k-1]}$. Each element consists of two grid cells $c_1^j$ and $c_2^j$ from $G(B_{v_{i,a}})$ and $G(B_{v_{i', a'}})$, where $v_{i,a}$ and $v_{i', a'}$ are both input vertices. $c_1^j$ and $c_2^j$ may be the same. Let $SL(w_jw_{j+1})$ be the support line of the segment $w_jw_{j+1}$. Then $SL(w_jw_{j+1})$ intersects $c_1^j$ and $c_2^j$ in a way such that there exits $x\in SL(w_jw_{j+1})\cap c_1^j$ and $y\in SL(w_jw_{j+1})\cap c_2^j$, and $x$ is in front of $y$ along the direction of $w_jw_{j+1}$. Clearly, for every query curve $\sigma$ that satisfies $d_F(\sigma, \tau)\le \delta$ for some $\tau\in T$, at least one configuration can be defined. Moreover, the total number of configurations is finite. 
	
	%({\color{red} It might be better to exclude $\sigma$ from the definition of the configuration and impose constraints on the configuration with respect to $\sigma$ latter.})

	\subsection{Data structure organization and construction}
	\label{sec:organize}
	
	We construct $\mathcal{G}_1$ and $\mathcal{G}_2$ in $O(mn/\eps^d)$ time and space.    We need a point location data structure for $\mathcal{G}_2$ which is organized as a multi-level tree as follows.  The top-level tree has leaves corresponding to the intervals of the cells on the first coordinate axis.  Each leaf is associated with the cells that project to the interval of that leaf, and these cells are stored in a second-level tree with leaves corresponding to the intervals of these cells on the second coordinate axis.  Continuing in this manner yields $d = O(1)$ levels, using $O(|\mathcal{G}_2|) = O(mn/\eps^d)$ space and $O\bigl((mn/\eps^d)\log\frac{mn}{\eps}\bigr)$ preprocessing time.  A point location takes $O(\log\frac{mn}{\eps})$ time.  
	%The cells in $\mathcal{G}_2$ are stored in a  point location data structure in the same way.
	
	The $(1+O(\eps),\delta)$-ANN data structure is a trie $\mathcal{D}$.  Each key to be stored in $\mathcal{D}$ is a \emph{candidate} coarse encoding, which is a 3-tuple $(\mathcal{A},\mathcal{B},\mathcal{C})$ just like a coarse encoding.  For a candidate coarse encoding, constraints~1(a), 2(a), 3(a), and 3(b) must be satisfied, but constraints~1(b), 2(b), and 3(c) are ignored.  This difference is necessary because constraints~1(b), 2(b), and 3(c) require the query curve, which is not available in preprocessing.  For each candidate coarse encoding $E$, let $T_E$ be the subset of input curves that are within a Fr\'{e}chet distance of $(1+O(\eps))\delta$ from any query curve that has $E$ as a coarse encoding, we will discuss shortly how to obtain the curves in $T_E$. 
	%To save space, $T_E$ stores the indices of the input curves only.  
	
	Each key $E$ in $\mathcal{D}$ has $O(k)$ size because $E$ stores $O(k)$ cells in $\mathcal{G}_1$ and $\mathcal{G}_2$.
	%Each key-value pair $(E,T_E)$ in $\mathcal{D}$ has $O(kd+n)$ size.  
	As a trie, $\mathcal{D}$ is a rooted tree with as many levels as the length of the key $E$.  Searching in $D$ boils down to visiting the appropriate child of an internal node of $\mathcal{D}$.  Each component of the key $E$ is an element of $\mathcal{G}_2 \cup \{\text{null}\}$ or $(\mathcal{G}_1 \times \mathcal{G}_1) \cup \{\text{null}\}$; there are  $O(m^2n^2/\eps^{2d})$ possibilities.  We keep a dictionary at each internal node of $\mathcal{D}$ for finding the appropriate child to visit in $O(\log \frac{mn}{\eps})$ time.  Hence, the total search time of $\mathcal{D}$ is $O(k\log \frac{mn}{\eps})$.  
	
	To bound the size of $\mathcal{D}$, observe that each key $E$ at a leaf of $\mathcal{D}$ induces $O(k)$ entries in the dictionaries at the ancestors of that leaf.  There are $O(\sqrt{d}/\eps)^{4d(k-1)}(mn)^{4(k-1)}$ candidate coarse encodings.   So the total space taken by the dictionaries at the internal nodes is $O(\sqrt{d}/\eps)^{4d(k-1)}(mn)^{4(k-1)} k$.  We will show that if a query curve has $E$ as a coarse encoding, any curve in $T_E$ is within a Fr\'{e}chet distance of $(1+O(\eps))\delta$ from that query curve.  Therefore, we only need to store one of the curves in $T_E$ at the leaf for $E$, and it suffices to store the index of this curve.  Therefore, the total space complexity of $\mathcal{D}$ is $O(\sqrt{d}/\eps)^{4d(k-1)}(mn)^{4(k-1)}k$.
	
	The construction of $\mathcal{D}$ proceeds as follows.  We initialize $\mathcal{D}$ to be empty.  We enumerate all possible candidate coarse encodings based on constraints~1(a), 2(a), 3(a), and 3(b).  Take a possible candidate coarse encoding $E$.  The set $T_E$ is initialized to be empty.  We go through every input curve $\tau_i$ to determine whether to include $\tau_i$ in $T_E$.  If $T_E \not= \emptyset$ in the end, we insert $E$ together with one curve in $T_E$ into $\mathcal{D}$.  In the following, we discuss the checking of whether to include $\tau_i$ in $T_E$.
	
	Let $E$ be $(\mathcal{A},\mathcal{B},\mathcal{C})$.   We generate all possible \emph{partitions} of $\{v_{i,1},\ldots,v_{i,m}\}$ that satisfy the following properties.
	\begin{quote}
		{\bf Partition:} a sequence of $k$ \emph{disjoint} subsets $(\pi_0,\pi_1,\ldots,\pi_{k-1})$ such that $\bigcup_{j = 0}^{k-1} \pi_j = \{v_{i,1},\ldots,v_{i,m}\}$, $v_{i,1} \in \pi_0$, $v_{i,m} \in \pi_{k-1}$, $\pi_j$ may be empty for some $j \in [k-2]$, and for any $v_{i,a} \in \pi_j$ and any $v_{i,b} \in \pi_l$, if $j < l$, then $a < b$.
	\end{quote}
	There are fewer than $m^{k-1}$ partitions.  Given a partition $(\pi_0,\ldots,\pi_{k-1})$, the vertices in $\pi_0$ are to be matched with $v_{i,1}$; for $j \in [k-1]$, the vertices in $\pi_j$ are to be matched with points in $w_jw_{j+1}\setminus \{w_j\}$, where $w_jw_{j+1}$ is the $j$-th edge of the query curve; $v_{i,m}$ and possibly other vertices of $\tau_i$ are matched with $w_k$.  The reference to $w_jw_{j+1}$ is conceptual; we do not need to know the query curve in preprocessing.
	
	We describe four tests for each partition below.  As soon as we come across a partition that passes all four tests, we insert $\tau_i$ into $T_E$.  If a partition fails any test, we move on to the next partition.  If no partition can pass all four tests in the end, we do not include $\tau_i$ in $T_E$.
	
	The first test is that for $j \in [2,k-1]$, $\pi_j = \emptyset$ if and only if $(c_{j,1},c_{j,2}) = \text{null}$.  This test takes $O(k)$ time.  We exclude $\pi_1$ from this test because $(c_{1,1},c_{1,2}) \not= \text{null}$ by constraint~1(a), whereas $\pi_1$ may be empty or not depending on the partition enumerated.
	
	In the second test, for $j \in [k-1]$, if $\pi_j \not= \emptyset$,  let $a_j, b_j \in [m]$ be the smallest and largest indices such that $v_{i,a_j},v_{i,b_j} \in \pi_j$, the intuition is that $v_{i,a_j},\ldots, v_{i,b_j}$ can be matched to the surrogate $(c_{j,1},c_{j,2})$ of $w_jw_{j+1}$ within a distance of $(1+O(\eps))\delta$.  The second test checks this property as follows.  Observe that $(c_{j,1},c_{j,2}) \not= \text{null}$: $(c_{1,1},c_{1,2}) \not= \text{null}$ by constraint~1(a), and for $j \in [2,k-1]$, $(c_{j,1},c_{j,2}) \not= \text{null}$ by the first test.  Pick the smallest vertices $x_j$ of $c_{j,1}$ and $y_j$ of $c_{j,2}$ according to the lexicographical order of their coordinates.  If $x_jy_j \cap (v_{i,a_j} \oplus B_{(1+12\eps)\delta})$ or $x_jy_j \cap (v_{i,b_j} \oplus B_{(1+12\eps)\delta})$ is empty, the test fails. Otherwise, compute the minimum point $x'_j$ in $x_jy_j \cap (v_{i,a_j} \oplus B_{(1+12\eps)\delta})$ and the maximum point $y'_j$ in $x_jy_j \cap (v_{i,b_j} \oplus B_{(1+12\eps)\delta})$ with respect to $\leq_{x_jy_j}$.  If it is not the case that $x'_j \leq_{x_jy_j} y'_j$, the test fails.  Suppose that $x'_j \leq_{x_jy_j} y'_j$.  Compute $d_F(x'_jy'_j,\tau_i[v_{i,a_j},v_{i,b_j}])$ and check whether it is $(1+12\eps)\delta$ or less.  If all of the above checks succeed for all $j \in [k-1]$, the second test succeeds; otherwise, the test fails.  The test takes $O(m\log m)$ time, which is dominated by the computation of $d_F(x'_jy'_j,\tau_i[v_{i,a_j},v_{i,b_j}])$ over all $j \in [k-1]$.
	
	The third test is that $\mathcal{B}[1] \in G(v_{i,1} \oplus B_\delta)$ and $\mathcal{A}[k-1] \in G(v_{i,m} \oplus B_\delta)$, which boils down to checking whether $d(v_{i,1},\mathcal{B}[1])$ and $d(v_{i,m},\mathcal{A}[k-1])$ are at most $\delta$. 
	%In the construction of $\mathcal{G}_2$, we can assume that each cell in $\mathcal{G}_2$ keeps a bit array of length $n$ such that the $i$-th bit is 1 if and only if that cell belongs to $G(v_{i,1} \oplus B_\delta)$.   Doing so does not affect the $O(mn/\eps^{d})$ construction time of $\mathcal{G}_2$.  Then, checking $\mathcal{B}[1] \in G(v_{i,1} \oplus B_\delta)$ takes only $O(1)$ time.  Similarly, $\mathcal{A}[k-1] \in G(v_{i,m} \oplus B_\delta)$ can be checked in $O(1)$ time.

	The fourth test involves $\mathcal{J}$, the set of $(r,s) \in [k-1]\times[k-1]$ such that $r < s$, $(c_{r,1}, c_{r, 2}) \not= \text{null}$, $(c_{s,1}, c_{s, 2}) \not= \text{null}$, and $(c_{j,1}, c_{j,2}) = \text{null}$ for $j \in [r+1,s-1]$.   Note that $|\mathcal{J}| \leq k-1$ and it can be constructed in $O(k)$ time.  For every $(r,s) \in \mathcal{J}$, if $r = 1$ and $\pi_1 = \emptyset$, let $b_1 = 1$; otherwise, let $b_r = \max\{ b: v_{i,b} \in \pi_r\}$.  It follows that $b_r+1 = \min\{ a: v_{i,a} \in \pi_s\}$.  We check if it is the case that $\tau_{i,b_r} \cap \mathcal{A}[r] \not= \emptyset$, $\tau_{i,b_r} \cap \mathcal{B}[s] \not= \emptyset$, and we hit $\mathcal{A}[r]$ no later than $\mathcal{B}[s]$ when we walk from $v_{i,b_r}$ to $v_{i,b_r+1}$.  (Recall the intuition that the pair $\mathcal{A}[r]$ and $\mathcal{B}[s]$ serve as the surrogate of the edge $\tau_{i,b_r} = v_{i,b_r}v_{i,b_r+1}$.)  If check fails for any $(r,s) \in \mathcal{J}$, the test fails.  
	%Suppose not.  We still need to handle the pair $(1,s) \in \mathcal{J}$ for some $s \in [2,k]$.  If $\pi_1 \not= \emptyset$, we conduct the check as described above.  If $\pi_1 = \emptyset$, we check if it is the case that we hit $\mathcal{A}[1]$ no later than $\mathcal{B}[s]$ when we walk from $v_{i,1}$ to $v_{i,s}$.  If all the checks succeed, the test succeeds; 
	Otherwise, the test succeeds.   This test runs in $O(k)$ time.
	
	The following result summarizes the construction of $\mathcal{D}$ and four properties of each candidate coarse encoding in $\mathcal{D}$.
	
	\begin{lemma}
		\label{lem:D}
		The trie $\mathcal{D}$ has $O(\sqrt{d}/\eps)^{4d(k-1)}(mn)^{4(k-1)}k$ size and can be constructed in $O(\sqrt{d}/\eps)^{4d(k-1)}(mn)^{4(k-1)}(k\log \frac{mn}{\eps} + m^k\log m)$ time.  We can search $\mathcal{D}$ with a coarse encoding in $O(k\log \frac{mn}{\eps})$ time.  For each candidate coarse encoding $E = (\mathcal{A},\mathcal{B},\mathcal{C})$, a curve $\tau_i \in T$ belongs to $T_E$ if and only if there exists a partition $(\pi_0,\ldots,\pi_{k-1})$ of the vertices of $\tau_i$ that satisfy the following four properties.  For $j \in [k-1]$, if $j = 1$ and $\pi_1= \emptyset$, let $b_1 = 1$; otherwise, if $\pi_j \not= \emptyset$, let $a_j = \min\{a : v_{i,a} \in \pi_j\}$ and let $b_j = \max\{b : v_{i,b} \in \pi_j \}$.  
		\begin{enumerate}[(i)]
			\item For $j \in [2,k-1]$, $\pi_j = \emptyset$ if and only if $(c_{j,1},c_{j,2}) = \text{null}$.
			\item For $j \in [k-1]$, if $\pi_j \not= \emptyset$, let $x_j$ and $y_j$ be the smallest vertices of $c_{j,1}$ and $c_{j,2}$ according to the lexicographical order of their coordinates, there exist $x''_j, y''_j \in x_jy_j$ such that $x''_j \leq_{x_jy_j} y''_j$ and $d_F(x''_jy''_j,\tau_i[v_{i,a_j},v_{i,b_j}]) \leq (1+12\eps)\delta$.
			\item $\mathcal{B}[1] \in G(v_{i,1} \oplus B_\delta)$ and $\mathcal{A}[k-1] \in G(v_{i,m} \oplus B_\delta)$.
			\item For every $(r,s) \in \mathcal{J}$,  $\tau_{i,b_r} \cap \mathcal{A}[r] \not= \emptyset$, $\tau_{i,b_r} \cap \mathcal{B}[s] \not=\emptyset$, and we hit $\mathcal{A}[r]$ no later than $\mathcal{B}[s]$ when we walk from $v_{i,b_r}$ to $v_{i,b_r+1}$.
			%Note that for every $(r,s) \in \mathcal{J}$, $b_r < m$ and if $r \not= 1$, then $\pi_r \not= \emptyset$.
		\end{enumerate}
	\end{lemma}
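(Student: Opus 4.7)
The plan is to split the proof into four parts matching the four claims: the size of $\mathcal{D}$, its construction time, the search time, and the \emph{if and only if} characterization of membership in $T_E$.

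For the size, I would count candidate coarse encodings. Each of $\mathcal{A}[j]$ and $\mathcal{B}[j]$ is either null or a cell in $\mathcal{G}_2$ (with $|\mathcal{G}_2| = O(mn/\eps^d)$), and each pair $(c_{j,1}, c_{j,2})$ is either null or lies in $\mathcal{G}_1 \times \mathcal{G}_1$. Taking a loose product bound over $j \in [k-1]$ yields at most $O(mn/\eps^d)^{4(k-1)} = O(\sqrt{d}/\eps)^{4d(k-1)}(mn)^{4(k-1)}$ candidates. Each candidate key has length $O(k)$ in the trie and thus contributes $O(k)$ dictionary entries along its root-to-leaf path; at each leaf we store only a single input-curve index (this is justified separately by the fact that \emph{any} curve in $T_E$ is a valid $(1{+}O(\eps))\delta$ answer for a query whose encoding equals $E$, proved elsewhere), which gives the claimed total size.

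For construction, I would enumerate all candidates $E$ satisfying constraints~1(a), 2(a), 3(a), 3(b) using the point-location tree on $\mathcal{G}_1 \cup \mathcal{G}_2$. For each $E$ and each input curve $\tau_i$, I enumerate the fewer-than-$m^{k-1}$ admissible partitions $(\pi_0,\ldots,\pi_{k-1})$ of the vertices of $\tau_i$ and apply the four tests described just before the lemma. Tests 1, 3, and 4 cost $O(k)$ time each; test 2 is dominated by computing $d_F(x'_jy'_j,\tau_i[v_{i,a_j},v_{i,b_j}])$ for each non-null $\pi_j$, which by the Alt--Godau algorithm for segment-to-curve Fr\'echet distance costs $O(m\log m)$ summed over all $j$. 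Inserting a key into $\mathcal{D}$ takes $O(k\log\frac{mn}{\eps})$ time using the same multi-level point-location tree as the per-node dictionary. Summing the costs and collecting terms yields the construction-time bound. The $O(k\log\frac{mn}{\eps})$ search time is immediate because $\mathcal{D}$ has $O(k)$ levels, each descent step being a single dictionary lookup.

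For the characterization of $T_E$, the four tests correspond bijectively to properties (i)--(iv). Test 1 checks (i) verbatim; test 3 checks (iii) verbatim; test 4 checks (iv) verbatim using the indices $b_r$ from $\mathcal{J}$; and test 2 implements (ii) by choosing $x_j, y_j$ to be the lexicographically smallest vertices of $c_{j,1}, c_{j,2}$ and explicitly verifying the existence of the subsegment $x''_jy''_j$ via a Fr\'echet-distance computation. The forward direction (if $\tau_i \in T_E$ then some partition of $\tau_i$ satisfies (i)--(iv)) is immediate from how $\tau_i$ is inserted into $T_E$. For the converse, a partition witnessing (i)--(iv) passes all four tests when encountered during enumeration, so $\tau_i$ is inserted into $T_E$. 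The main obstacle is not conceptual but book-keeping: one must carefully handle the boundary indices forced by constraint~1(a) (where $(c_{1,1},c_{1,2})$ and $(c_{k-1,1},c_{k-1,2})$ are always non-null while $\pi_1$ and $\pi_{k-1}$ may behave irregularly) and the special definition of $b_r$ when $r=1$ and $\pi_1=\emptyset$. These edge cases are exactly why the partition definition singles out $\pi_0$ and $\pi_{k-1}$, and why property (i) excludes $j=1$.
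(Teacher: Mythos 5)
The structure of your proposal matches the paper's, and your treatment of the size bound, construction time, and search time is correct. The gap is in the characterization of $T_E$, specifically in the equivalence between test~2 (as described in Section~\ref{sec:organize}) and condition~(ii).

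You write that test~2 ``implements (ii) by choosing $x_j,y_j$ \ldots\ and explicitly verifying the existence of the subsegment $x''_jy''_j$ via a Fr\'echet-distance computation,'' and later that the converse direction (``a partition witnessing (i)--(iv) passes all four tests'') is essentially immediate. But test~2 does \emph{not} verify an existential: it fixes a single maximal subsegment $x'_jy'_j$ (with $x'_j$ the earliest point of $x_jy_j$ inside $v_{i,a_j}\oplus B_{(1+12\eps)\delta}$ and $y'_j$ the latest point inside $v_{i,b_j}\oplus B_{(1+12\eps)\delta}$) and checks $d_F(x'_jy'_j,\tau_i[v_{i,a_j},v_{i,b_j}]) \le (1+12\eps)\delta$. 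Condition~(ii) merely asserts that \emph{some} subsegment $x''_jy''_j$ with $x''_j \le_{x_jy_j} y''_j$ achieves this bound. The nontrivial step, which your proposal omits, is showing that if any witness $x''_jy''_j$ exists, then the specific maximal $x'_jy'_j$ computed by the test also works; otherwise a partition satisfying (i)--(iv) need not pass test~2, and the ``if and only if'' in the lemma fails. The paper closes this gap by observing that $x'_j \le_{x_jy_j} x''_j \le_{x_jy_j} y''_j \le_{x_jy_j} y'_j$, extending the matching for $x''_jy''_j$ by sending $v_{i,a_j}$ to all of $x'_jx''_j$ and $v_{i,b_j}$ to all of $y''_jy'_j$, and invoking convexity of the two balls to keep the extended matching within distance $(1+12\eps)\delta$. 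You need this extension argument (or an equivalent one) to complete the proof.
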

	\cancel{
	\begin{proof}
		The performance analysis of $\mathcal{D}$ follows from the previous discussion.  Among the four necessary and sufficient conditions, (i), (iii), and (iv) follow directly from the first, third, and fourth tests.  We show below that the second test and condition (ii) are equivalent.  Clearly, if the second test succeeds, condition (ii) is satisfied.  It remains to analyze the other direction.  By condition (ii), $x''_j \leq_{x_jy_j} y''_j$, $x''_j \in v_{i,a_j} \oplus B_{(1+12\eps)\delta}$, and $y''_j \in v_{i,b_j} \oplus B_{(1+12\eps)\delta}$.  In the second test, we compute the minimum point $x'_j$ in $x_jy_j \cap  (v_{i,a_j} \oplus B_{(1+12\eps)\delta})$ and the maximum point $y'_j$ in $x_jy_j \cap (v_{i,b_j} \oplus B_{(1+12\eps)\delta})$ with respect to $\leq_{x_jy_j}$.  It follows that $x'_j \leq_{x_jy_j} x''_j \leq_{x_jy_j} y''_j \leq_{x_jy_j} y'_j$.  We extend the Fr\'{e}chet matching between $x''_jy''_j$ and $\tau_i[v_{i,a_j},v_{i,b_j}]$ to a matching $\mathcal{M}$ between $x'_jy'_j$ and $\tau_i[v_{i,a_j},v_{i,b_j}]$ by matching $v_{i,a_j}$ with all points in $x'_jx''_j$ and $v_{i,b_j}$ with all points in $y''_jy'_j$.  As $x'_j, x''_j \in v_{i,a_j} \oplus B_{(1+12\eps)\delta}$, convexity implies that $x'_jx''_j \subset v_{i,a_j} \oplus B_{(1+12\eps)\delta}$.  Similarly, $y''_jy'_j \subset v_{i,b_j} \oplus B_{(1+12\eps)\delta}$.  Therefore, $d_F(x'_jy'_j,\tau_i[v_{i,a_j},v_{i,b_j}]) \leq d_\mathcal{M}(x'_jy'_j,\tau_i[v_{i,a_j},v_{i,b_j}]) \leq (1+12\eps)\delta$.
	\end{proof}
}
		
\cancel{
		For every candidate coarse encoding $E$ and every input curve $\tau_i$, we enumerate $O(m^{k-1})$ partitions and spend $O(k + 4^dm\log m)$ time on each partition.  The total time is thus 
	
	\underline{Summary:} An entry $(\mathcal{F}, T_{\mathcal{F}})$ is inserted into $\mathcal{D}$ if there exists a vertex division $\pi_0,..., \pi_{k}$ for every curve $\tau_i\in T_{\mathcal{F}}$ that satisfies:
	
	\begin{enumerate}[Property (a)]
		\item When $1< j< k-1$, $\pi_{j}\not=\emptyset$, if and only if $(c_1^j,c_2^j)$ is not null. %$c_{1,1}\in G(v_{i,1}\oplus B_{\delta})$ and $c_{k-1,2}\in G(v_{i,m}\oplus B_\delta)$.
		For $\pi_j = \{v_{i,a},..., v_{i,b}\}$ that is not empty with $j\in [k-1]$, there exists two points $x'$ and $y'$ on $xy$ such that $x'\le_{xy} y'$ and $d_F(\tau_i[a, b], x'y')\le (1+5\eps)\delta$, where $x$ and $y$ are a vertex of $c_{j,1}$ and a vertex of $c_{j,2}$, respectively.
		\item $\mathcal{A}_1[1]\in G(v_{i,1}\oplus B_{\delta})$ and $\mathcal{A}_2[k]\in G(v_{i,m}\oplus B_{\delta})$.
		%If $\pi_{j}\not=\emptyset$, let $v_{i,a}$ be the first vertex in $\pi_{j}$. $s_j\subset \tau_{i,a-1}\oplus B_{(1+\sqrt{d}\eps+\sqrt{d}\eps^2)\delta}$ if $j>1$. %$s_k\in B(v_{i,m}, (1+\sqrt{d}\eps)\delta)$ if $v_{i,b}=v_{i,m}$; otherwise, $s_{j+1}\in C_{i,b}((1+\sqrt{d}\eps)\delta)$.
		\item Let $\mathcal{J}$ contain all the pairs $(j_1, j_2)$ that satisfy $(c_{j_1,1}, c_{j_1, 2}) \not=$null, $(c_{j_2,1}, c_{j_2, 2})\not=$null, and $j_2=j_1+1$ or $(c_{j,1}, c_{j,2}) =$null for all $j_1<j<j_2$. For every pair $(j_1, j_2)\in \mathcal{J}$, let $v_{i, a'}$ be the last vertex in $\pi_{j_1}$ if $\pi_{j_1}\not=\emptyset$ and $v_{i,1}$ otherwise. The segment $\tau_{i, a'}$ intersects $\mathcal{A}_2[j_1]$ and $\mathcal{A}_1[j_2]$ in such a way that there exists a point $p\in \tau_{i, a'}\cap \mathcal{A}_1[j_2]$ so that $v_{i, a'}p\cap \mathcal{A}_2[j_1]\not= \emptyset$.
	\end{enumerate}
	
	Besides the dictionary $\mathcal{D}$, we also need to construct data structures for the subroutine $4\eps\delta$-SEG with respect to the obstacle set $\mathcal{G}_1$ to facilitate the fuzzy representation construction for a given query curve $\sigma$. %For the canonical segment set $\Phi$, we organize every subset $\Phi_\ell$ into an interval tree indexed by $\ell$.
	
	%there are also some auxiliary data structures to be constructed for getting fuzzy representations of a given query curve $\sigma$ efficiently. 
	%We describe an approach for constructing a dictionary for every configuration so that given a query curve $\sigma$ and its configuration, whether there exists a curve $\tau\in T$ that satisfies $d_F(\tau, \sigma)\le \delta$ can be decided efficiently. Consider a configuration $\left((c_1^j, c_2^j)\right)_{j\in [k-1]}$
	
}
	
	\subsection{Querying}
	\label{sec:ann-query}
	
	At query time, we are given a curve $\sigma = (w_1,..., w_k)$.   We enumerate all coarse encodings of $\sigma$; for each coarse encoding $E$ enumerated, we search the trie $\mathcal{D}$ for $E$; if $E$ is found, we return the curve in $T_E$ stored with $E$ as the answer of the query; if no coarse encoding of $\sigma$ can be found in $\mathcal{D}$, we return ``no''.  
	
	Each search in $\mathcal{D}$ takes $O(k\log\frac{mn}{\eps})$ time as stated in Lemma~\ref{lem:D}.  The enumeration of the coarse encodings of $\sigma$ require a solution for the $(11\eps\delta)$-segment queries on $\mathcal{G}_1$ as stated in constraint~1(b)(i) in Section~\ref{sec:coarse}.  We will discuss an efficient solution later.
	%we assume here that there is a solution that answers a query in $Q_{\text{seg}}$ time and uses $S_{\text{seg}}$ space and $P_{\text{seg}}$ preprocessing time.

	For $j \in [k-1]$, we make two $(11\eps\delta)$-segment queries with $w_jw_{j+1}$ and $w_{j+1}w_j$ on $\mathcal{G}_1$ to obtain $u_{j,1}$ and $u_{j,2}$, respectively.  If any of the two queries returns null, define $(u_{j,1},u_{j,2})$ to be null.  If $(u_{j,1},u_{j,2}) \not= \text{null}$ and the minimum point in $w_jw_{j+1} \cap (u_{j,1} \oplus B_{11\eps\delta})$ does not lie in front of the maximum point in $w_jw_{j+1} \cap (u_{j,2} \oplus B_{11\eps\delta})$ with respect to $\leq_{w_jw_{j+1}}$, then constraint~1(b)(ii) is  not satisfied.  It must  be the case that $w_jw_{j+1}$ does not intersect the interior of the union of cells in $\mathcal{G}_1$, and the $(11\eps\delta)$-segment queries just happen to return two cells that violate constraint~1(b)(ii).  In this case, the input vertices are too far from $w_jw_{j+1}$ to be matched to any point in $w_jw_{j+1}$ within a distance $\delta$, so we reset $(u_{j,1},u_{j,2})$ to be null.
	
	%We can satisfy constraint~1(b)(ii) simply by swapping $u_{j,1}$ and $u_{j,2}$.
	
	After defining $(u_{j,1}, u_{j,2})$ for $j\in [k-1]$, we generate the coarse encodings of $\sigma$ as follows.  The pairs $(c_{1,1}, c_{1,2})$ and $(c_{k-1,1},c_{k-1,2})$ are defined to be $(u_{1,1},u_{1,2})$ and $(u_{k-1,1},u_{k-1,2})$, respectively.  For $j \in [2,k-2]$,  we enumerate all possible $\mathcal{C}$ by setting $(c_{j,1}, c_{j,2})$ to be $(u_{j,1}, u_{j,2})$ or null.  This gives a total of $2^{k-3}$ possible $\mathcal{C}$'s.  We query the point location data structure for $\mathcal{G}_2$ to find the cells $\mathcal{B}[1]$ and $\mathcal{A}[k-1]$ that contain $w_1$ and $w_k$, respectively.  Then, for each $\mathcal{C}$ enumerated, we enumerate $\mathcal{A}[j]$ for $j \in [1,k-2]$ and $\mathcal{B}[j]$ for $j \in [2,k-1]$ according to constraints~3(a) and 3(b) in Section~\ref{sec:coarse}.  This enumeration produces $O(\sqrt{d}/\eps)^{2d(k-2)}$ tuples of  $(\mathcal{A},\mathcal{B},\mathcal{C})$.  For each $(\mathcal{A},\mathcal{B},\mathcal{C})$ enumerated, we check whether it satisfies constraint~3(c), which can be done in $O(k\log k)$ time as implied by the following result.
	
	\begin{lemma}
		\label{lem:check}
		Take any $(r,s) \in \mathcal{J}$.
		Let $x_r$ and $x_s$ be the smallest vertices of $\mathcal{A}[r]$ and $\mathcal{B}[s]$ by the lexicographical order of their coordinates.  We can check in $O((s-r)\log (s-r))$ time whether there are $x'_r,x'_s \in x_rx_s$ such that $x'_r \leq_{x_rx_s} x'_s$ and $d_F(x'_rx'_s,\sigma[w_{r+1},w_s]) \leq (1+\eps)\delta$.
	\end{lemma}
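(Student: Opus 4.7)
The plan is to solve this with a one-column free-space diagram between the segment $x_rx_s$ and the polygonal subcurve $\sigma[w_{r+1},w_s]$. Parameterize $x_rx_s$ by $a \in [0,1]$ as $x_r + a(x_s-x_r)$, and for each $j \in [s-r-1]$ parameterize the $j$-th edge $w_{r+j}w_{r+j+1}$ by $b \in [0,1]$. Let
\[
F_j = \bigl\{(a,b) \in [0,1]^2 : d\bigl(x_r + a(x_s-x_r),\, w_{r+j}+b(w_{r+j+1}-w_{r+j})\bigr) \leq (1+\eps)\delta\bigr\}.
\]
Each $F_j$ is convex, being the sublevel set of a positive semi-definite quadratic form restricted to a square. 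From $F_j$ I extract in $O(1)$ time the bottom and top free intervals $F_j^{\text{bot}} = \{a : (a,0) \in F_j\}$ and $F_j^{\text{top}} = \{a : (a,1) \in F_j\}$ by solving a single quadratic equation for each.

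Since I am looking for an arbitrary subsegment $x'_rx'_s$ rather than the whole segment, the start and end parameters on $x_rx_s$ are free. I therefore initialize the reachable interval at the bottom of cell $1$ to $R_0 = [0,1]$, and propagate as follows: for $j = 1,\ldots,s-r-1$, compute $a_j^* = \min(R_{j-1} \cap F_j^{\text{bot}})$; if $R_{j-1} \cap F_j^{\text{bot}} = \emptyset$, return false; otherwise, set $R_j = F_j^{\text{top}} \cap [a_j^*, 1]$. Return true iff $R_{s-r-1} \neq \emptyset$, and read off $x'_r, x'_s$ from the stored reachability pointers. Because each $F_j$ is convex, any two points $(a,0) \in F_j$ and $(a',1) \in F_j$ with $a \leq a'$ are connected by the straight line segment in $F_j$, which is automatically bimonotone; this justifies the one-pass propagation and keeps every reachable set an interval.

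Correctness reduces to the standard free-space characterization: there exist $x'_r \leq_{x_rx_s} x'_s$ with $d_F(x'_rx'_s,\sigma[w_{r+1},w_s]) \leq (1+\eps)\delta$ iff the free-space diagram admits a bimonotone path from some point on the bottom of cell $1$ to some point on the top of cell $s-r-1$. The monotonicity of the $a$-coordinate along any bimonotone path enforces $x'_r \leq_{x_rx_s} x'_s$ for free. The algorithm performs a constant amount of quadratic-equation and interval-intersection work per cell, so the total running time is $O(s-r)$, which fits well within the claimed $O((s-r)\log(s-r))$ bound (the extra logarithmic slack is available if one prefers to store the reachable intervals in a balanced structure or to locate the sub-interval endpoints by binary search, but it is not required by the basic sweep). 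The only step requiring care is the extraction of $F_j^{\text{bot}}$ and $F_j^{\text{top}}$ when the corresponding quadratic has no real roots or the boundary is tangent to the cell; these degenerate cases are handled by defaulting the interval to $\emptyset$ or $[0,1]$ as appropriate.
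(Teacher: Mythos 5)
Your proposal is correct, and it takes a genuinely different route from the paper's. The paper first proves a canonical-endpoints lemma: it sets $p$ to the first point of $x_rx_s$ inside $w_{r+1}\oplus B_{(1+\eps)\delta}$ and $q$ to the last point of $x_rx_s$ inside $w_s\oplus B_{(1+\eps)\delta}$, shows that a valid subsegment $x'_rx'_s$ exists if and only if $p\leq_{x_rx_s}q$ and $d_F(pq,\sigma[w_{r+1},w_s])\leq(1+\eps)\delta$ (the forward implication is obtained by the same endpoint-extension argument used for Lemma~\ref{lem:D}(ii)), and then calls a standard segment-versus-polyline Fr\'echet decision routine on $pq$. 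You instead solve the partial-matching decision directly by free-space propagation: you initialize the bottom reachable interval to $[0,1]$ to capture the free starting point, and your per-cell update $R_j = F_j^{\text{top}}\cap[a_j^*,1]$ is justified by the convexity of each $F_j$, which (as you note) makes every within-cell connection a monotone segment and keeps reachable sets as intervals. Both approaches are sound; the paper's reduction is handy when one wants to reuse an off-the-shelf Fr\'echet decision oracle, whereas your one-pass sweep is more self-contained, makes the $O(s-r)$ running time (and hence the slacker $O((s-r)\log(s-r))$ bound) transparent, and avoids a separate canonicalization lemma.
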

\cancel{
	\begin{proof}
		If $x_rx_s \cap (w_{r+1} \oplus B_{(1+\eps)\delta})$ or $x_rx_s \cap (w_s \oplus B_{(1+\eps)\delta})$ is empty, the required $x_r'$ and $x_s'$ do not exist.  Suppose not.
		Let $p$ be the minimum point in $x_rx_s \cap (w_{r+1} \oplus B_{(1+\eps)\delta})$ with respect to $\leq_{x_rx_s}$.  Let $q$ be the maximum point in $x_rx_s \cap (w_s \oplus B_{(1+\eps)\delta})$.   We can check $p \leq_{x_rx_s} q$ and compute $d_F(pq,\sigma[w_{r+1},w_s])$ in $O((s-r)\log(s-r))$ time.  We claim that there is a segment $x'_rx'_s \subseteq x_rx_s$ that satisfies the lemma if and only if $pq$ satisfies the lemma.  The reverse direction is trivial.  The forward direction can be proved in the same way as in the proof of Lemma~\ref{lem:D}.
	\end{proof}
}
	
	\begin{lemma}
		\label{lem:query-time}
		The query time is $O(kQ_{\mathrm{seg}})  + O(\sqrt{d}/\eps)^{2d(k-2)}k\log\frac{mn}{\eps}$, where $Q_\mathrm{seg}$ is the time to answer an $(11\eps\delta)$-segment query.
	\end{lemma}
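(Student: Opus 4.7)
The plan is to charge the total query time to the three stages laid out in Section~\ref{sec:ann-query}: computing the pairs $(u_{j,1},u_{j,2})$, enumerating the candidate tuples $(\mathcal{A},\mathcal{B},\mathcal{C})$, and processing each enumerated tuple (checking constraint~3(c) and querying $\mathcal{D}$).

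For the first stage, for every $j \in [k-1]$ we perform two $(11\eps\delta)$-segment queries on $\mathcal{G}_1$ (with $w_jw_{j+1}$ and with $w_{j+1}w_j$), then verify constraint~1(b)(ii) by computing the minimum point in $w_jw_{j+1} \cap (u_{j,1}\oplus B_{11\eps\delta})$ and the maximum point in $w_jw_{j+1} \cap (u_{j,2}\oplus B_{11\eps\delta})$; each such computation reduces to intersecting a segment with a ball, which takes $O(1)$ time once $u_{j,1}$ and $u_{j,2}$ are known. A single point location in $\mathcal{G}_2$ yields $\mathcal{B}[1]$ and $\mathcal{A}[k-1]$ in $O(\log\frac{mn}{\eps})$ time. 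The total cost of this stage is $O(kQ_{\mathrm{seg}}) + O(\log\frac{mn}{\eps})$.

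For the second stage, the pairs $(c_{1,1},c_{1,2})$ and $(c_{k-1,1},c_{k-1,2})$ are fixed to $(u_{1,1},u_{1,2})$ and $(u_{k-1,1},u_{k-1,2})$, while for $j\in[2,k-2]$ each $(c_{j,1},c_{j,2})$ is either $(u_{j,1},u_{j,2})$ or null, giving $2^{k-3}$ choices of $\mathcal{C}$. For each fixed $\mathcal{C}$, constraints~3(a) and~3(b) force $\mathcal{A}[j]$ and $\mathcal{B}[j]$ to be null whenever $(c_{j,1},c_{j,2})$ is null, and otherwise restrict them to cells of $\mathcal{G}_2$ within distance $(1+11\eps)\delta$ of $c_{j,2}$ and $c_{j,1}$ respectively. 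Since a ball of radius $(1+11\eps)\delta$ contains $O(\sqrt{d}/\eps)^d$ grid cells of $\mathcal{G}_2$, and there are at most $2(k-2)$ such non-null slots among the two arrays for $j\in[1,k-2]$ and $j\in[2,k-1]$, the total number of enumerated tuples is at most $O(\sqrt{d}/\eps)^{2d(k-2)}$ (the $2^{k-3}$ factor for $\mathcal{C}$ is absorbed).

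For the third stage, each enumerated tuple is tested against constraint~3(c) by applying Lemma~\ref{lem:check} once to every pair $(r,s)\in\mathcal{J}$; since the subcurves $\sigma[w_{r+1},w_s]$ over $(r,s)\in\mathcal{J}$ partition the indices $[k-1]$, the total checking cost is $O(k\log k)$ per tuple. Each surviving encoding is then looked up in $\mathcal{D}$ in $O(k\log\frac{mn}{\eps})$ time by Lemma~\ref{lem:D}. Multiplying by the number of enumerated tuples and adding the first-stage cost yields $O(kQ_{\mathrm{seg}}) + O(\sqrt{d}/\eps)^{2d(k-2)} k\log\frac{mn}{\eps}$, as claimed. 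The only non-routine point is justifying that constraint~3(c) can indeed be checked in $O(k\log k)$ time across all of $\mathcal{J}$ rather than quadratically, but this follows immediately from the disjointness of the subcurves handled by distinct elements of $\mathcal{J}$ combined with Lemma~\ref{lem:check}.
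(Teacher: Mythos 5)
Your proof is correct and follows essentially the same three-stage decomposition as the paper's own argument: $O(kQ_{\mathrm{seg}})$ for the segment queries and point locations, the same $2^{k-3}\cdot O(\sqrt{d}/\eps)^{2d(k-2)} = O(\sqrt{d}/\eps)^{2d(k-2)}$ count of enumerated tuples via constraints~3(a)/3(b), and $O(k\log k) + O(k\log\frac{mn}{\eps})$ per tuple from Lemma~\ref{lem:check} and the trie search. The only cosmetic slip is calling $u_{j,1}\oplus B_{11\eps\delta}$ a ball (it is a cell expanded by a ball), but the $O(1)$-per-intersection cost is unaffected.
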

\cancel{
	\begin{proof}
		We spend $O(\log\frac{mn}{\eps})$ time to obtain $\mathcal{B}[1]$ and $\mathcal{A}[k-1]$.  Then,
		we spend $O(kQ_\text{seg})$ time to obtain $(u_{j,1},u_{j,2})$ for $j \in [k-1]$ and $O(k\log\frac{mn}{\eps})$ search time for each coarse encoding of $\sigma$.   There are $2^{k-3}$ combinations in setting $(c_{j,1},c_{j,2})$ to be $(u_{j,1},u_{j,2})$ or null for $j \in [2,k-2]$.  This gives $2^{k-3}$ possible $\mathcal{C}$'s.  By constraint~3(b), for $j \in [1,k-2]$, we have $\mathcal{A}[j] \in G(c_{j,2} \oplus B_{(1+11\eps)\delta})$, and for $j \in [2,k-1]$, we have $\mathcal{B}[j] \in G(c_{j,1} \oplus B_{(1+11\eps)\delta})$.  Therefore, for each $\mathcal{C}$ enumerated, there are $O(\sqrt{d}/\eps)^{2d(k-2)}$ ways to set $\mathcal{A}$ and $\mathcal{B}$.  In all, the total number of $(\mathcal{A},\mathcal{B},\mathcal{C})$'s enumerated and checked is $O(\sqrt{d}/\eps)^{2d(k-2)}2^{k-3} = O(\sqrt{d}/\eps)^{2d(k-2)}$.  It takes $O(k\log k)$ time to check each by Lemma~\ref{lem:check}.
	\end{proof}
}

	\subsection{Approximation guarantee}
	
	%We show that if the query procedure returns an input curve $\tau_i$, then $d_F(\tau_i,\sigma) \leq (1+24\eps)\delta$; otherwise, there is no curve in $T$ within a Fr\'{e}chet distance $\delta$ from $\sigma$.  The bound can be easily turned to $(1+\eps)\delta$ by reducing $\eps$ by a factor 24.  
	
	First, we show that if $\sigma$ is within a Fr\'{e}chet distance $\delta$ from some input curve, there exists a coarse encoding $E$ of $\sigma$ such that $T_E \not= \emptyset$.  Hence, $E$ and a curve in $T_E$ are stored in $\mathcal{D}$.

	\begin{lemma}\label{lem: non_empty_entry}
		If $d_F(\tau_i, \sigma)\le \delta$, then $\tau_i \in T_E$ for some coarse encoding $E$ of $\sigma$.
	\end{lemma}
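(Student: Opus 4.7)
The plan is to construct, from a matching $\mathcal{M}$ witnessing $d_F(\tau_i,\sigma)\leq\delta$, both a coarse encoding $E=(\mathcal{A},\mathcal{B},\mathcal{C})$ for $\sigma$ and a partition of the vertices of $\tau_i$, and then verify the four necessary-and-sufficient conditions of Lemma~\ref{lem:D} so as to conclude $\tau_i\in T_E$. Concretely, I would fix any $\mathcal{M}$ with $d_\mathcal{M}(\tau_i,\sigma)\leq\delta$, apply Lemma~\ref{lem: existence_fuzzy} to obtain $E$ and the partition $(\pi_0,\ldots,\pi_{k-1})$, and then reuse the same indices $(a_j,b_j)$ and $(b_r)_{(r,s)\in\mathcal{J}}$ defined in Lemma~\ref{lem:D}.

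Three of the four conditions will come essentially for free from this construction. Condition~(i) of Lemma~\ref{lem:D} is literally property~(i) of Lemma~\ref{lem: existence_fuzzy}. For condition~(iii), the proof of Lemma~\ref{lem: existence_fuzzy} defines $\mathcal{B}[1]$ and $\mathcal{A}[k-1]$ as the cells containing $w_1$ and $w_k$; since $v_{i,1}\in\mathcal{M}(w_1)$ and $v_{i,m}\in\mathcal{M}(w_k)$ force $d(w_1,v_{i,1})\leq\delta$ and $d(w_k,v_{i,m})\leq\delta$, these cells automatically lie in $G(v_{i,1}\oplus B_\delta)$ and $G(v_{i,m}\oplus B_\delta)$. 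Condition~(iv) is a direct restatement of property~(ii) of Lemma~\ref{lem: existence_fuzzy}: the points $z,z'$ produced there lie in $\tau_{i,b_r}\cap\mathcal{A}[r]$ and $\tau_{i,b_r}\cap\mathcal{B}[s]$ with $z\leq_{\tau_{i,b_r}}z'$, which is exactly what (iv) demands.

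The main obstacle is condition~(ii): for each $j\in[k-1]$ with $\pi_j\ne\emptyset$, I must exhibit a subsegment $x''_jy''_j\subseteq x_jy_j$ whose Fr\'echet distance to $\tau_i[v_{i,a_j},v_{i,b_j}]$ is at most $(1+12\eps)\delta$, where $x_j,y_j$ are the lexicographically smallest vertices of $c_{j,1},c_{j,2}$. My approach is to compose two matchings: first, the restriction of $\mathcal{M}$ to $\tau_i[v_{i,a_j},v_{i,b_j}]$ and its image inside $w_jw_{j+1}$ (of cost $\leq\delta$); second, a linear interpolation between that image and a suitable subsegment of $x_jy_j$. Constraint~1(b)(ii) supplies points $p_j\leq_{w_jw_{j+1}}q_j$ with $p_j\in c_{j,1}\oplus B_{11\eps\delta}$ and $q_j\in c_{j,2}\oplus B_{11\eps\delta}$; using the extension trick at the end of the proof of Lemma~\ref{lem: existence_fuzzy}, I can arrange that the $\mathcal{M}$-images of $v_{i,a_j}$ and $v_{i,b_j}$ both lie inside $p_jq_j$. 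Because a grid cell has diameter $\eps\delta$, the endpoints $x_j,y_j$ are within $12\eps\delta$ of $p_j,q_j$, so the linear interpolation matches $p_jq_j$ to the corresponding subsegment $x''_jy''_j\subseteq x_jy_j$ within $12\eps\delta$; composing with $\mathcal{M}$ gives total cost at most $\delta+12\eps\delta=(1+12\eps)\delta$, as required. Boundary situations such as $\pi_1=\emptyset$ do not interfere, since condition~(ii) is vacuous at any $j$ with $\pi_j=\emptyset$, and the corresponding interaction with $\mathcal{J}$ is already absorbed by condition~(iv).
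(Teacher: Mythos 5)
Your proof takes essentially the same route as the paper's: fix a Fr\'echet matching $\mathcal{M}$, invoke Lemma~\ref{lem: existence_fuzzy} to get a coarse encoding $E$ and the partition $(\pi_0,\ldots,\pi_{k-1})$, then verify the four conditions of Lemma~\ref{lem:D}. Your handling of conditions (i), (iii), and (iv) is correct and matches the paper; (i) is inherited from Lemma~\ref{lem: existence_fuzzy}(i), (iii) from the cell choice together with $v_{i,1}\in\mathcal{M}(w_1)$ and $v_{i,m}\in\mathcal{M}(w_k)$, and (iv) from Lemma~\ref{lem: existence_fuzzy}(ii).

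For condition (ii), however, there is a gap at the pivotal step. You claim that ``using the extension trick at the end of the proof of Lemma~\ref{lem: existence_fuzzy}, I can arrange that the $\mathcal{M}$-images of $v_{i,a_j}$ and $v_{i,b_j}$ both lie inside $p_jq_j$.'' What you need here is the order relation $p_j\leq_{w_jw_{j+1}}p\leq_{w_jw_{j+1}}q\leq_{w_jw_{j+1}}q_j$, where $pq$ is the $\mathcal{M}$-image of $\tau_i[v_{i,a_j},v_{i,b_j}]$. The ``extension trick'' you cite (combining a matching with a linear interpolation) cannot establish this containment: naively extending the matching so that $v_{i,a_j}$ covers an extra stretch $[p_j,p]$ would blow up the distance bound, since $[p_j,p]$ may lie far from $v_{i,a_j}$. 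The containment $p_j\leq p$ is instead a consequence of the $(11\eps\delta)$-segment query semantics, which is the argument the paper actually uses: since $d(v_{i,a_j},p)\leq\delta$, the point $p$ lies in a cell of $G(v_{i,a_j}\oplus B_\delta)\subset\mathcal{G}_1$; because $c_{j,1}$ is the cell returned by the segment query from $w_j$, the subsegment $w_jp$ must meet $c_{j,1}\oplus B_{11\eps\delta}$, so some $z_p\in w_jw_{j+1}\cap(c_{j,1}\oplus B_{11\eps\delta})$ precedes $p$, and the minimum such point $p_j$ therefore also precedes $p$; the symmetric argument via the query with $w_{j+1}w_j$ gives $q\leq q_j$. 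Once this is supplied, the rest of your step (the $12\eps\delta$ interpolation cost and the composition with $\mathcal{M}$) is correct, so the gap is localized and easily repaired, but as stated the proof does not actually prove the claimed containment.
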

	
	\begin{proof}
		Let $\mathcal{M}$ be a Fr\'{e}chet matching between $\tau_i$ and $\sigma$.  Let $E$ be the coarse encoding specified for $\sigma$ in Lemma~\ref{lem: existence_fuzzy}.  For any subcurve $\sigma' \subseteq \sigma$, we use $\mathcal{M}(\sigma')$ to denote the subcurve of $\tau_i$ matched to $\sigma'$ by $\mathcal{M}$.  
		%For $a \in [m-1]$, if $v_{i,a} \in \mathcal{M}(w_1)$, define $j_a = 0$; otherwise, define $j_a =\min\{j : v_{i,a} \in \mathcal{M}(w_jw_{j+1}) \setminus \mathcal{M}(w_j)\}$.  The index $j_m$ is defined to be $k-1$.  
		For $j \in [k-1]$, let $\tilde{\pi}_j = \{v_{i,a} : a \in [m-1], v_{i,a} \in \mathcal{M}(w_jw_{j+1}) \setminus \mathcal{M}(w_j) \}$.  Define $\pi_j = \tilde{\pi}_j$ for $j \in [k-2]$, $\pi_{k-1} = \{v_{i,m}\} \cup \tilde{\pi}_{k-1}$, and $\pi_0 = \{v_{i,1}, \ldots, v_{i,m}\} \setminus \bigcup_{j=1}^{k-1}\pi_j$.  We obtain a partition $(\pi_0,...,\pi_{k-1})$ of the vertices of $\tau_i$.   
		%One can verify that the properties of a partition are satisfied.
		
		We prove that $E$, $\tau_i$, and $(\pi_0,..., \pi_{k-1})$ satisfy Lemma~\ref{lem:D}(i)--(iv) which put $\tau_i$ in $T_E$.  Lemma~\ref{lem:D}(i) follows directly from Lemma~\ref{lem: existence_fuzzy}(i), 
		
		Take any $j \in [k-1]$ such that $\pi_j \not= \emptyset$.  Let $\pi_{j}$ be $\{v_{i,a}, v_{i,a+1}, \ldots , v_{i,b}\}$.  By the definition of $\pi_j$, every vertex in $ \pi_{j}$ belongs to $\mathcal{M}(w_jw_{j+1})$, so $\tau_i[v_{i,a},v_{i,b}] \subset \mathcal{M}(w_jw_{j+1})$.  Then, there must exist two points $p,q \in w_jw_{j+1}$ such that $p\le_{w_jw_{j+1}} q$ and $d_F(pq, \tau_i[v_{i,a},v_{i,b}])\le \delta$.  If $j = 1$, we have $(c_{1,1},c_{1,2}) \not= \text{null}$ by constraint~1(a); if $j \in [2,k-1]$, by Lemma~\ref{lem: existence_fuzzy}(i), $(c_{j,1},c_{j,2}) \not= \text{null}$ as $\pi_j \not= \text{null}$.  Therefore, $c_{j,1}$ and $c_{j,2}$ are cells in $\mathcal{G}_1$ returned by the $(11\eps\delta)$-segment queries with $w_jw_{j+1}$ and $w_{j+1}w_j$, respectively.  We have shown that $d_F(pq, \tau_i[v_{i,a},v_{i,b}])\le \delta$; therefore, $p$ is contained in a cell in $G(v_{i,a}\oplus B_\delta) \subset \mathcal{G}_1$.  As $c_{j,1}$ is the cell returned by the $(11\eps\delta)$-segment query with $w_jw_{j+1}$, there must be a point $z_p \in w_jw_{j+1} \cap (c_{j,1} \oplus B_{11\eps\delta})$ such that $z_p \leq_{w_jw_{j+1}} p$.  In a similar way, we can conclude that there must be a point $z_q \in w_jw_{j+1} \cap (c_{j,2} \oplus B_{11\eps\delta})$ such that $q \leq_{w_jw_{j+1}} z_q$.  That is, $z_p \leq_{w_jw_{j+1}} p \leq_{w_jw_{j+1}} q \leq_{w_jw_{j+1}} z_q$.  Let $x_j$ and $y_j$ be the smallest vertices of $c_{j,1}$ and $c_{j,2}$ according to the lexicographical order of their coordinates.  Both $d(z_p,x_j)$ and $d(z_q,y_j)$ are at most $12\eps\delta$.  A linear interpolation from $z_pz_q$ to $x_jy_j$ maps $p$ and $q$ to two points $x''_j$ and $y''_j$ on $x_jy_j$, respectively, such that $x''_j \leq_{x_jy_j} y''_j$.  Also, the linear interpolation adds a distance $12\eps\delta$ or less, which gives $d_F(x''_jy''_j,\tau_i[v_{i,a},v_{i,b}]) \leq d_F(x''_jy''_j,pq) + d_F(pq,\tau_i[v_{i,a},v_{i,b}]) \leq (1+12\eps)\delta$.  Hence, Lemma~\ref{lem:D}(ii) is satisfied.
		
		The grid cells $\mathcal{B}[1]$ and $\mathcal{A}[k-1]$ are defined to contain $w_1$ and $w_k$, respectively.  Also, $v_{i,1}\in \mathcal{M}(w_1)$ and $v_{i,m}\in \mathcal{M}(w_k)$.  Hence, $\mathcal{B}[1]\in G(v_{i,1}\oplus B_{\delta})$ and $\mathcal{A}[k-1]\in G(v_{i,m}\oplus B_{\delta})$, satisfying Lemma~\ref{lem:D}(iii).
		
		For any pair $(r,s)\in \mathcal{J}$, by Lemma~\ref{lem: existence_fuzzy}(ii),  there exist two points $z \in \mathcal{A}[r] \cap \tau_{i,b_r}$ and $z' \in \mathcal{B}[s] \cap \tau_{i,b_r}$ such that $z \leq_{\tau_{i,b_r}} z'$.  Since $\mathcal{A}[r]$ and $\mathcal{B}[s]$ are interior-disjoint unless they are equal, we must hit $\mathcal{A}[r]$ no later than $\mathcal{B}[s]$ when we walk from $v_{i,b_r}$ to $v_{i,b_r+1}$, satisfying Lemma~\ref{lem:D}(iv).
	\end{proof}
	
	We show that if $E$ is a coarse encoding of $\sigma$, each curve in $T_E$ is close to $\sigma$.
	
	\begin{lemma}\label{lem: distance_bound}
		Let $E$ be a coarse encoding of  $\sigma$.  For every $\tau_i \in T_E$,  $d_F(\tau_i, \sigma)\le (1+24\eps)\delta$. 
	\end{lemma}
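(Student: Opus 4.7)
The plan is to construct an explicit matching $\mathcal{M}$ between $\sigma$ and $\tau_i$ with $d_\mathcal{M}(\sigma,\tau_i)\le (1+24\eps)\delta$ by assembling two kinds of local matchings and stitching them together with linear transits. Let $(\pi_0,\ldots,\pi_{k-1})$ be the partition certifying $\tau_i \in T_E$ via Lemma~\ref{lem:D}. Throughout, I use that every cell in $\mathcal{G}_1$ or $\mathcal{G}_2$ has diameter $\eps\delta$.

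\emph{Type-A pieces.} For each $j\in[k-1]$ with $(c_{j,1},c_{j,2})\neq\text{null}$, Lemma~\ref{lem:D}(ii) provides a subsegment $x''_jy''_j\subseteq x_jy_j$ (with $x_j,y_j$ the canonical vertices of $c_{j,1},c_{j,2}$) such that $d_F(x''_jy''_j,\tau_i[v_{i,a_j},v_{i,b_j}])\le(1+12\eps)\delta$. Constraint~1(b) forces $w_jw_{j+1}$ to meet $c_{j,1}\oplus B_{11\eps\delta}$ and $c_{j,2}\oplus B_{11\eps\delta}$ at ordered points $p_j\le_{w_jw_{j+1}} q_j$ with $d(x_j,p_j),d(y_j,q_j)\le 12\eps\delta$. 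Linearly interpolating $x_jy_j$ onto $p_jq_j$ sends $x''_jy''_j$ to a subsegment $p''_jq''_j\subseteq w_jw_{j+1}$ at added distance $\le 12\eps\delta$, yielding a matching $\mathcal{M}_j$ between $p''_jq''_j$ and $\tau_i[v_{i,a_j},v_{i,b_j}]$ of distance $\le (1+24\eps)\delta$ pairing $v_{i,a_j}\leftrightarrow p''_j$ and $v_{i,b_j}\leftrightarrow q''_j$.

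\emph{Type-B pieces.} For each $(r,s)\in\mathcal{J}$, constraint~3(c) supplies $x'_rx'_s\subseteq x_rx_s$ with $d_F(x'_rx'_s,\sigma[w_{r+1},w_s])\le(1+\eps)\delta$, and Lemma~\ref{lem:D}(iv) gives $z\in\mathcal{A}[r]\cap\tau_{i,b_r}$ and $z'\in\mathcal{B}[s]\cap\tau_{i,b_r}$ with $z\le_{\tau_{i,b_r}}z'$. Since $d(x_r,z),d(x_s,z')\le\eps\delta$, linearly interpolating $x_rx_s$ onto $zz'$ maps $x'_rx'_s$ to a subsegment $z_rz_s\subseteq\tau_{i,b_r}$ at added distance $\le\eps\delta$, producing a matching $\mathcal{N}_{r,s}$ between $\sigma[w_{r+1},w_s]$ and $z_rz_s$ of distance $\le(1+2\eps)\delta$ pairing $w_{r+1}\leftrightarrow z_r$ and $w_s\leftrightarrow z_s$.

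\emph{Stitching.} I chain the pieces along $\sigma$ and $\tau_i$, alternating type-A and type-B matchings in the order dictated by $\mathcal{J}$ and the non-empty $\pi_j$'s. At each junction from $\mathcal{M}_r$ ending at $(q''_r,v_{i,b_r})$ to $\mathcal{N}_{r,s}$ starting at $(w_{r+1},z_r)$, I insert a linear transit in which $\rho$ slides from $q''_r$ to $w_{r+1}$ along $w_rw_{r+1}$ while $\varrho$ slides from $v_{i,b_r}$ to $z_r$ along $\tau_{i,b_r}$; both moves are forward, and by convexity of the Euclidean norm along line segments the transit distance is bounded by $\max\{d(q''_r,v_{i,b_r}),d(w_{r+1},z_r)\}\le(1+24\eps)\delta$. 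The symmetric junction $\mathcal{N}_{r,s}\to\mathcal{M}_s$ (sliding $\rho$ from $w_s$ to $p''_s$ and $\varrho$ from $z_s$ to $v_{i,a_s}=v_{i,b_r+1}$) is handled identically. For the terminal boundaries, Lemma~\ref{lem:D}(iii) with constraint~2(b) gives $d(w_1,v_{i,1}),d(w_k,v_{i,m})\le(1+\eps)\delta$, and the same transit argument glues $(w_1,v_{i,1})$ to the first piece and the last piece to $(w_k,v_{i,m})$. Concatenating everything yields a monotone matching with the claimed bound. The main obstacle is the bookkeeping at degenerate boundary configurations---most notably $\pi_1=\emptyset$ (so $b_1=1$ and the first piece is $\mathcal{N}_{1,s}$ rather than $\mathcal{M}_1$) and its analogue at the right end---but each such case reduces to the very same pointwise convexity bound once the correct endpoints on $\sigma$ and $\tau_i$ are identified.
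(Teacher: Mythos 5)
Your proof is correct and follows essentially the same route as the paper's: you use Lemma~\ref{lem:D}(ii) together with constraint~1(b) to match the matched subcurves of $\tau_i$ onto $w_jw_{j+1}$ within $(1+24\eps)\delta$ (your type-A pieces), use constraint~3(c) together with Lemma~\ref{lem:D}(iv) to match $\sigma[w_{r+1},w_s]$ onto $\tau_{i,b_r}$ within $(1+2\eps)\delta$ (your type-B pieces), and then close the gaps with monotone linear interpolations whose distance is bounded by convexity by the max of the endpoint distances. This is precisely the paper's construction, just organized by alternating pieces rather than by first fixing all vertex images.
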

	\begin{proof} (Sketch)
		Suppose that $T_E \not= \emptyset$ as the lemma statement is vacuous otherwise.
		Take any $\tau_i \in T_E$.  We construct a matching $\mathcal{M}$ between $\tau_i$ and $\sigma$ such that $d_{\mathcal{M}}(\tau_i, \sigma)\le (1+24\eps)\delta$.  Since $T_E \not= \emptyset$, there exists a partition $(\pi_0,\ldots, \pi_{k-1})$ of the vertices of $\tau_i$ that satisfy Lemma~\ref{lem:D}(i)--(iv).  Using these properties, we can match the vertices of $\tau_i$ to points on $\sigma$ and then the vertices of $\sigma$ to points on $\tau_i$.  Afterwards, $\sigma$ and $\tau_i$ divided into line segments by their vertices and images of their matching partners.  We use linear interpolations to match the corresponding line segments.  More details can be found in the appendix.
		\cancel{
		
		We first match the vertices of $\tau_i$ to points on $\sigma$ as follows. We match $v_{i, 1}$ and $v_{i, m}$ to $w_1$ and $w_k$, respectively. According to constraint 2 and Lemma~\ref{lem:D}(iii), we have $d(v_{i, 1}, w_1)\le (1+\eps)\delta$ and $d(v_{i, m}, w_k)\le (1+\eps)\delta$.   By Lemma~\ref{lem:D}(ii), for $j \in [k-1]$, if $\pi_j\not=\emptyset$ and $\pi_j = \{v_{i,a_j},v_{i,a_j+1},\ldots,v_{a,b_j}\}$, there exist a segment $x_j''y_j''$ between vertices of $c_{j,1}$ and $c_{j,2}$ such that $v_{i, a_j},..., v_{i, b_j}$ can be matched to some points $p_{i,a_j},\ldots,p_{i,b_j} \in x''_jy''_j$ within a distance $(1+12\eps)\delta$ in order along $x_j''y_j''$.  We have $(c_{1,1},c_{1,2}) \not= \text{null}$ by definition, and Lemma~\ref{lem:D}(i) implies that if $j \in [2,k-1]$ and $\pi_j \not= \emptyset$, then $(c_{j,1},c_{j,2}) \not= \text{null}$.  Then,  constraint~1(b) ensure that if $j = 1$ or $\pi_j \not= \emptyset$, there exist two points $z_j \in w_jw_{j+1} \cap (c_{j, 1}\oplus B_{11\eps\delta})$ and $z_j' \in w_jw_{j+1} \cap (c_{j,2}\oplus B_{11\eps\delta})$ such that  $z_j \le_{w_jw_{j+1}} z'_j$.   As $x''_j$ and $y''_j$ are vertices of $c_{j,1}$ and $c_{j,2}$, respectively, we have $d(z_j,x''_j )\le 12\eps\delta$ and $d(z'_j,y''_j) \le 12\eps\delta$.  Therefore, a linear interpolation between $x''_jy''_j$ and $z_jz'_j$ sends $p_{i,a_j},\ldots,p_{i,b_j}$ to points $q_{i,a_j},\ldots, q_{i,b_j} \in z_jz'_j$ within a distance of $12\eps\delta$.  In all, for $l \in [a_j,b_j]$, we can match $v_{i,l}$ to $q_{i,l}$ and $d(v_{i,l},q_{i,l}) \leq d(v_{i,l},p_{i,l}) + d(p_{i,l},q_{i,l}) \leq (1+24\eps)\delta$.  This takes care of the matching of the vertices of $\tau_i$ to points on $\sigma$.
		
		The vertices of $\sigma$ can be matched to points on $\tau_i$ in a similarly.  Afterwards, $\sigma$ and $\tau_i$ divided into line segments by their vertices and images of their matching partners.  We use linear interpolations to match the corresponding line segments.  More details can be found in the appendix.
	
		Next, we match the vertices of $\sigma$ to points on $\tau_i$ as follows.  The vertices $w_1$ and $w_k$ have been matched with $v_{i,1}$ and $v_{i,m}$, respectively.  Take a vertex $w_j$ for any $j \in [2,k-1]$.   There is a unique $(r,s) \in \mathcal{J}$ such that $r < j \leq s$.
		%Let $\mathcal{J}$ contain all the pairs $(j_1, j_2)$ that satisfy $(c_{j_1,1}, c_{j_1, 2}) \not=$null, $(c_{j_2,1}, c_{j_2, 2})\not=$null, and $j_2=j_1+1$ or $(c_{j,1}, c_{j,2}) =$null for all $j_1<j<j_2$. 
		%Since $(c_{1,1}, c_{1,2})$ and $(c_{k-1,1}, c_{k-1,2})$ are not null, for every $j\in [k-1]\backslash\{1\}$, there must be a pair $(j_1, j_2)\in \mathcal{J}$ such that $j_1 < j\le j_2$.  
		Let $b_r = \max\{ b : v_{i,b} \in \pi_r \}$.  By Lemma~\ref{lem:D}(iv), there exist two points $y_r \in \tau_{i,b_r} \cap \mathcal{A}[r]$ and $y_s \in \tau_{i,b_r} \cap \mathcal{B}[s]$ such that $y_r \leq_{\tau_{i,b_r}} y_s$.
		%According to property (c), there are two points $p \in \tau_{i', a'}\cap \mathcal{A}_2[j_1]$ and $q\in \tau_{i', a}\cap \mathcal{A}_1[j_2]$ such that $p\le_{\tau_{i'}} q$, where $v_{i', a'}$ is the last vertex in $\pi_{j_1}'$. 
		By constraint~3(c), there is a matching of $w_{r+1},..., w_{s}$ to points $z_{r+1},..., z_{s}$ in a segment $x_rx_s$ between vertices of $\mathcal{A}[r]$ and $\mathcal{B}[s]$ such that $z_{r+1}\le_{x_rx_s} z_{r+2}\le_{x_rx_s}\ldots\le_{x_rx_s}z_{s}$, and $d(z_l, w_l) \le (1+\eps)\delta$ for all $l \in [r+1,s]$.  A linear interpolation between $x_rx_s$ and $y_ry_s$ sends $z_{r+1},..., z_{s}$ to points in $\tau_{i,b_r}$ within a distance $\eps\delta$.  Combining this linear interpolation with the matching from $w_{r+1},..., w_{s}$ to $z_{r+1},..., z_{s}$ gives a matching from $w_{r+1},..., w_{s}$ to points in order on $\tau_{i,b_r}$ within a distance $(1+2\eps)\delta$.  This takes care of the matching of the vertices of $\sigma$ to points on $\tau_i$.

		So far, we have obtained pairs of vertices and their matching partners on $\tau_i$ and $\sigma$.  In $\sigma$, the vertices of $\sigma$ and the matching partners of the vertices of $\tau_i$ divide $\sigma$ into a sequence of line segments.  Similarly, the vertices of $\tau_i$ and the matching partners of the vertices of $\sigma$ divide $\tau_i$ into a sequence of line segments.  We complete the matching by linear interpolations between every pair of  corresponding segments on $\tau_i$ and $\sigma$.   The resulting distance is dominated by the distance bound $(1+24\eps)\delta$ of the matching of the vertices of $\tau_i$ to points in $\sigma$.
		%After determining the matching points for all vertices of $\tau_{i'}$ and $\sigma$, we finish constructing $\mathcal{M}'$ by using linear interpolation to match all other points between $\tau_{i'}$ and $\sigma$.  Linear interpolation ensures that the distance bound $(1+10\eps)\delta$ is maintained, i.e., $d_{\mathcal{M}'}(\tau_{i'}, \sigma)\le (1+10\eps)\delta$. Hence, the Fr\'echet distance between $\tau_{i'}$ and $\sigma$ is at most $(1+10\eps)\delta$. 
	}
	\end{proof}

	\cancel{

	Combining Lemmas~\ref{lem:D}, \ref{lem:query-time}, \ref{lem: non_empty_entry}, and ~\ref{lem: distance_bound}, we obtain the following result.
	
	\begin{lemma}\label{lem:ANN}
		Let $T$ be a set of $n$ polygonal curves in $\real^d$, each containing at most $m$ vertices.   Let $k \geq 3$ be the given maximum number of vertices in any query curve.  Let $P_\mathrm{seg}$, $S_\mathrm{seg}$, and $Q_\mathrm{seg}$ be the preprocessing time, space complexity, and query time for any data structure that solves the $(11\eps\delta)$-segment query on $\mathcal{G}_1$.   For any $\eps \in (0,1)$, there is a $(1+24\eps,\delta)$-ANN data structure for $T$ under the Fr\'{e}chet distance that uses $O(\sqrt{d}/\eps)^{4d(k-1)}(mn)^{4(k-1)}k + S_\mathrm{seg}$ space and $O(\sqrt{d}/\eps)^{4d(k-1)}(mn)^{4(k-1)}(k\log\frac{mn}{\eps} + m^k\log m) + P_\mathrm{seg}$ preprocessing time.  The query time is $O(kQ_{\mathrm{seg}}) + O(\sqrt{d}/\eps)^{2d(k-2)}k\log \frac{mn}{\eps}$.
	\end{lemma}

}

	In two and three dimensions, the ray shooting data structure for boxes in~\cite{BHO1994} can be used as the $(11\eps\delta)$-segment query data structure.  It has an $O(\log |\mathcal{G}_1|) = O(\log \frac{mn}{\eps})$ query time and an $O(|\mathcal{G}_1|^{2+\mu}) = O((mn)^{2+\mu}/\eps^{d(2+\mu)})$ space and preprocessing time for any fixed $\mu \in (0,1)$.   If the query segment does not intersect any cell in $\mathcal{G}_1$, we return null.  	In four and higher dimensions, we will prove the following result in Section~\ref{Sec: subroutines}.
	
	\begin{lemma}
		\label{lem:segment}
		We can construct a data structure in $O(\sqrt{d}/\eps)^{O(d/\eps^2)} \cdot \tilde{O}((mn)^{O(1/\eps^2)})$ space and preprocessing time such that given any oriented edge $e$ of the query curve $\sigma$, the data structure either discovers that $\min_{\tau_i\in T} d_F(\sigma,\tau_i) > \delta$, or reports a correct answer for the $(11\eps\delta)$-segment query with $e$ on $\mathcal{G}_1$.  The query time is $\tilde{O}((mn)^{0.5+\eps}/\eps^d)$.
	\end{lemma}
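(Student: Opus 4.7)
My approach is to reduce the $(11\eps\delta)$-segment query on $\mathcal{G}_1$ to an approximate segment--point proximity query on the input vertex set $V=\bigcup_{i,a}\{v_{i,a}\}$, and then to design a sublinear-query proximity structure that fits within the stated $(mn)^{O(1/\eps^2)}$ space.

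First I would observe that every cell $c\in\mathcal{G}_1$ has side $\eps\delta/\sqrt{d}$ and lies within $(1+\eps)\delta$ of some $v_{i,a}\in V$, while every $v\in V$ is itself covered by cells in $\mathcal{G}_1$. Hence, up to an $O(\eps\delta)$ slack that is absorbed by the $11\eps\delta$ tolerance, answering the $(11\eps\delta)$-segment query with $pq$ is equivalent to finding the first $v\in V$ such that some prefix $px\subseteq pq$ satisfies $d(v,px)\le(1+O(\eps))\delta$, and then returning any cell of $\mathcal{G}_1$ incident to $v$. If no vertex of $V$ lies within $\delta$ of $pq$, then no point on this edge of $\sigma$ can be matched to any point of any $\tau_i$ within distance $\delta$ in a Fr\'echet matching, so $\min_i d_F(\sigma,\tau_i)>\delta$ and that is reported directly.

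Next I would build the proximity structure in two stages. Stage one is a coarse filter that decides, in $\tilde O((mn)^{0.5+\eps}/\eps^d)$ time, whether any $v\in V$ lies within $\delta$ of $pq$; this disposes of the ``far'' case. Stage two is a first-hit refinement: I would use a Matou{\v s}ek-style partition tree on $V$, whose branching is tuned so that any oriented query segment meets only $\tilde O((mn)^{0.5+\eps})$ cells, and at each cell I store an $\eps$-approximation of its vertex subset of size depending only on $d$ and $1/\eps$. At query time I walk $pq$ through the cells it crosses, in the order induced by $\le_{pq}$, and test the coreset in each cell for a near vertex; the first cell producing one determines the returned element of $\mathcal{G}_1$.

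The principal obstacle is preventing $m^{\Omega(d)}$ or $n^{\Omega(d)}$ factors in storage, which a naive $\real^d$ partition tree at this tradeoff would immediately produce. I expect to sidestep this by confining all $d$-dependence to the intra-cell $\eps$-approximations (of size $(\sqrt{d}/\eps)^{O(d/\eps^2)}$, which our convention hides in $O(1)$) and to the grid width of $\mathcal{G}_1$, while the partition tree itself has $(mn)^{O(1/\eps^2)}$ cells with purely combinatorial branching. Correctness then reduces to two triangle-inequality checks: (i) any cell the refinement returns is within $11\eps\delta$ of the true first intersection of $pq$ with $\mathcal{G}_1$, mirroring the derivation in the proof of Lemma~\ref{lem: existence_fuzzy}; and (ii) the cell-traversal order respects $\le_{pq}$, so the ``first hit'' requirement in the $(11\eps\delta)$-segment query specification is met.
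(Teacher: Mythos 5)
Your proposal takes a genuinely different route from the paper, but it has a fundamental gap that makes it fail: the Matou\v{s}ek-style partition tree with the claimed tradeoff does not exist in $\real^d$ for unbounded $d$, which is precisely the obstacle the paper is designed to sidestep.

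Concretely, to get a query cost of $\tilde O(N^{0.5+\eps})$ for a partition tree on $N$ points in $\real^d$, one needs space on the order of $N^{\Omega(d)}$; with near-linear space, the best query bound is around $N^{1-1/d}$, which degenerates as $d$ grows. Your suggestion to ``confine all $d$-dependence to the intra-cell $\eps$-approximations'' does not resolve this, because the combinatorial complexity of a partition tree that supports segment/simplex queries is itself exponential in $d$, not just the per-cell payload. This is exactly the issue the paper calls out when dismissing ray-shooting-style structures in high dimensions and why it instead uses the $(1+\eps)$-approximate nearest-point-to-line data structure of Andoni et al.\ (an LSH-flavoured construction that escapes the $m^{\Omega(d)}$/$n^{\Omega(d)}$ barrier) together with the 2-approximate nearest-line-to-point data structure of Agarwal et al. There is also a second, independent gap: partition trees are built for range counting/searching, not for enumerating, in order along a segment, a collection of $\tilde O((mn)^{0.5+\eps})$ cells that the segment crosses. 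Cells in such a tree are nested, not linearly ordered along the segment, so the ``walk $pq$ through the cells it crosses, in the order induced by $\le_{pq}$'' step is not supported by the data structure you cite; no concrete mechanism is given to certify a \emph{first} hit. The paper avoids having to order anything at query time by doing the ordering work at preprocessing: it builds a packing of parallel lines $\mathcal{L}$ near the input edges, partitions each line $\ell$ into canonical segments via projections of the regions $F(c,\gamma)$, and precomputes for each pair $(\gamma,\xi)$ the ``best'' cell $c_{\gamma,\xi}$. At query time, only two approximate-nearest-neighbor queries (a nearest grid vertex to $\aff(w_jw_{j+1})$ and a nearest line to $w_j$ or $w_{j+1}$) plus an interval-tree lookup are needed, and Lemma~\ref{lem:C} is what certifies that the precomputed $c_{\gamma,\xi}$ satisfies the ``first-hit within $11\eps\delta$'' requirement. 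Finally, even setting aside the two structural issues, your per-cell coreset test gives a query time of at least the number of cells visited times the coreset size, i.e.\ roughly $(mn)^{0.5+\eps}\cdot(\sqrt{d}/\eps)^{O(d/\eps^2)}$, which does not match the claimed $\tilde O((mn)^{0.5+\eps}/\eps^d)$.
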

	
Combining the results in this section with the ray shooting result in~\cite{BHO1994}, Lemma~\ref{lem:segment}, and Theorem~\ref{thm:reduce} gives $(1+\eps)$-ANN data structures.  Theorem~\ref{thm:reduce} uses the deletion cost of a $(1+\eps,\delta)$-ANN data structure.  We perform each deletion by reconstructing the data structure from scratch because we do not have a more efficient solution.
	
	\begin{theorem}
		\label{thm:ann23}
		%Let $T$ be a set of $n$ polygonal curves in $\real^d$, each containing at most $m$ vertices.  Let $k \geq 3$ be the given maximum number of vertices in any query curve.  
		For any $\eps \in (0,0.5)$, there is a $(1+O(\eps))$-ANN data structure for $T$ under the Fr\'{e}chet distance with the following performance guarantees:
		\begin{itemize}
			\item $d \in \{2,3\}:$ \\
				query time $= O\bigl(\frac{1}{\eps}\bigr)^{2d(k-2)}k\log\frac{mn}{\eps}\log n$, \\
				space $= O\bigl(\frac{1}{\eps}\bigr)^{4d(k-1)+1}(mn)^{4(k-1)}k\log^2 n$, \\
				expected preprocessing time $= O\bigl(\frac{1}{\eps}\bigr)^{4d(k-1)}(mn)^{4(k-1)}\bigl(k\log\frac{mn}{\eps} + m^k\log m\bigr)n\log n$.
			\item $d \geq 4:$ \\
				query time $=\tilde{O}\bigl(\frac{1}{\eps^d}k(mn)^{0.5+\eps}\bigr) + O\bigl(\frac{\sqrt{d}}{\eps}\bigr)^{2d(k-2)}k\log \frac{mn}{\eps}\log n$, \\ 
				space $= O\bigl(\frac{\sqrt{d}}{\eps}\bigr)^{4d(k-1)}(mn)^{4(k-1)}k\cdot\frac{1}{\eps}\log^2 n + O\bigl(\frac{\sqrt{d}}{\eps}\bigr)^{O(d/\eps^2)} \cdot \tilde{O}((mn)^{O(1/\eps^2)})$, \\
				expected preprocessing time $= O\bigl(\frac{\sqrt{d}}{\eps}\bigr)^{4d(k-1)}(mn)^{4(k-1)}\bigl(k\log\frac{mn}{\eps} + m^k\log m\bigr)n\log n \,\, +$ \\
			\hspace*{132pt}$O\bigl(\frac{\sqrt{d}}{\eps}\bigr)^{O(d/\eps^2)} \cdot \tilde{O}((mn)^{O(1/\eps^2)})$.
		\end{itemize}
	\end{theorem}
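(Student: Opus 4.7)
My plan is to first assemble a $(1+24\eps,\delta)$-ANN data structure by combining the components already developed in this section, and then to invoke Theorem~\ref{thm:reduce} to lift it to a $(1+O(\eps))$-ANN structure, handling the two dimension regimes by plugging in different $(11\eps\delta)$-segment query subroutines on $\mathcal{G}_1$.

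For the base $(\kappa,\delta)$-ANN structure with $\kappa = 1+24\eps$, correctness follows from two facts I would cite directly. If some $\tau_i \in T$ satisfies $d_F(\tau_i,\sigma) \le \delta$, Lemma~\ref{lem: non_empty_entry} guarantees that at least one coarse encoding $E$ of $\sigma$ is a key of $\mathcal{D}$ with $\tau_i \in T_E$, so the query procedure (which enumerates every coarse encoding of $\sigma$) finds $E$ in $\mathcal{D}$ and returns some $\tau_l \in T_E$; Lemma~\ref{lem: distance_bound} then forces $d_F(\tau_l,\sigma) \le (1+24\eps)\delta$. Conversely, if no coarse encoding of $\sigma$ is found in $\mathcal{D}$, Lemma~\ref{lem: non_empty_entry} tells us (contrapositively) that no input curve lies within $\delta$ of $\sigma$, so the ``no'' answer is legitimate. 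The resource bounds of the base structure come from Lemma~\ref{lem:D} (space, preprocessing, search cost in $\mathcal{D}$) and Lemma~\ref{lem:query-time} (query time as a function of the cost $Q_{\mathrm{seg}}$ of the segment oracle), plus whatever the $(11\eps\delta)$-segment oracle itself costs.

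For the segment oracle I would distinguish the two cases. When $d\in\{2,3\}$, the box ray-shooting structure of~\cite{BHO1994} applied to $\mathcal{G}_1$ gives $Q_{\mathrm{seg}} = O(\log|\mathcal{G}_1|) = O(\log\frac{mn}{\eps})$ and space and preprocessing $O(|\mathcal{G}_1|^{2+\mu})$ for any fixed $\mu \in (0,1)$; these last two bounds are dominated by the $O(1/\eps)^{4d(k-1)}(mn)^{4(k-1)}$ terms coming from $\mathcal{D}$, so they disappear. When $d \geq 4$, I plug in Lemma~\ref{lem:segment}, whose space and preprocessing $O(\sqrt{d}/\eps)^{O(d/\eps^2)}\cdot \tilde{O}((mn)^{O(1/\eps^2)})$ and per-query $\tilde{O}((mn)^{0.5+\eps}/\eps^d)$ are \emph{additive} terms in the final sums, producing the first summand of the query time and the extra summand in space and preprocessing stated in the theorem.

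Finally, I apply Theorem~\ref{thm:reduce} with $(\kappa,S,Q,D,P)$ instantiated to the base structure. Since deletions are not supported natively, I set $D = P$ by rebuilding from scratch. The reduction then multiplies space by $O\bigl(\frac{1}{\eps}\log^2 n\bigr)$ and query time by $O(\log n)$, while the expected preprocessing bound $O\bigl(\frac{1}{\eps\log^2 n}P + (Q+D)n\log n\bigr)$ collapses to $O(Pn\log n)$ because $P$ dominates $Q$. Substituting the base bounds and absorbing the constant $24$ into the $O(\eps)$ of the ratio gives exactly the two sets of performance numbers. The main ``obstacle'' here is not conceptual but clerical: one must carefully track which of the several summands dominates in each of space, query time, and preprocessing in each regime; in particular, checking in the $d\ge 4$ case that the Lemma~\ref{lem:segment} terms survive as genuine additive contributions and are not swallowed by the $\mathcal{D}$-bounds, and conversely that in $d\in\{2,3\}$ the~\cite{BHO1994} terms \emph{are} swallowed.
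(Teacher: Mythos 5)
Your proposal is correct and follows essentially the same route the paper takes: build the base $(1+24\eps,\delta)$-ANN structure from the trie $\mathcal{D}$ (Lemmas~\ref{lem:D}, \ref{lem:query-time}, \ref{lem: non_empty_entry}, \ref{lem: distance_bound}), plug in the appropriate $(11\eps\delta)$-segment oracle (\cite{BHO1994} for $d\in\{2,3\}$, Lemma~\ref{lem:segment} for $d\ge 4$), set $D=P$ by rebuilding on each deletion, and lift via Theorem~\ref{thm:reduce}. Your accounting of which summands dominate in each regime matches the paper's derivation of the stated bounds.
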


	%We present a solution for $(11\eps\delta)$-segment queries on $\mathcal{G}_1$ in four and higher dimensions in the next section.
	
	\section{$\pmb{(3+O(\eps),\delta)}$-ANN}
	\label{sec:3ANN}
	
	Given a query curve $\sigma = (w_1,w_2,\ldots,w_k)$, for $j \in [k-1]$, we solve the $(11\eps\delta)$-segment queries with $w_jw_{j+1}$ and $w_{j+1}w_j$ on $\mathcal{G}_1$ as before.  Let $((c_{j,1}, c_{j,2}))_{j \in [k-1]}$ denote the results of the queries.  Recall that each $(c_{j,1},c_{j,2})$ belongs to $(\mathcal{G}_1 \times \mathcal{G}_1) \cup \{\text{null}\}$.
	
	Suppose that there are $k_0 \leq k-1$ non-null pairs in $((c_{j,1}, c_{j,2}))_{j \in [k-1]}$.  Extract these non-null pairs to form the sequence $((c_{j_r,1}, c_{j_ r,2}))_{r \in [k_0]}$.  Note that $j_1 = 1$ and $j_{k_0} = k-1$ by constraint~1(a).  We construct a polygonal curve $\sigma_0$ by connecting the centers of $c_{j_r,1}$ and $c_{j_r,2}$ for $r \in [k_0]$ and the centers of $c_{j_r,2}$ and $c_{j_{r+1},1}$ for $r \in [k_0-1]$.  The polygonal curve $\sigma_0$ acts as a surrogate of $\sigma$.  It has at most $2k-2$ vertices.
	%The advantage of $\sigma_0$ over $\sigma$ is that each vertex of $\sigma_0$, being the center of some cell in $\mathcal{G}_1$, must be within a distance $(1+\eps)\delta$ from some input vertex.  It follows that we can enumerate the surrogates of all possible query curves in preprocessing.  
	We will use $\sigma_0$ as the key to search a trie at query time to obtain an answer for a $(3+O(\eps),\delta)$-ANN query.  As a result, no enumeration is needed which avoids the exponential dependence of the query time on $k$.
	
	In preprocessing, we enumerate all sequences of $2l$ cells in $\mathcal{G}_1$ for $l \in [2,k-1]$.  For each sequence, we construct the polygonal curve $\sigma'$ that connects the centers of the cells in the sequence, and we find the nearest input curve $\tau_i$ to $\sigma'$.  If $d_F(\sigma',\tau_i) \leq (1+12\eps)\delta$, we store $(\sigma',i)$ in a trie $\mathcal{D}$.  There are $O(\sqrt{d}/\eps)^{2d(k-1)}(mn)^{2(k-1)}$ entries in $\mathcal{D}$.  We organize the trie $\mathcal{D}$ in the same way as described in Section~\ref{sec:organize}. The space required by $\mathcal{D}$ is $O(\sqrt{d}/\eps)^{2d(k-1)}(mn)^{2(k-1)}k$.  The search time of $\mathcal{D}$ is $O(k\log\frac{mn}{\eps})$.  The preprocessing time is  $O(\sqrt{d}/\eps)^{2d(k-1)}(mn)^{2(k-1)}(k\log\frac{mn}{\eps} + kmn\log(km)) = O(\sqrt{d}/\eps)^{2d(k-1)}(mn)^{2k-1}k\log\frac{mn}{\eps}$ due to the computation of the nearest input curve for each sequence enumerated.
	
	At query time, we construct $\sigma_0$ from $\sigma$ in $O(kQ_\mathrm{seg})$ time, where $Q_\mathrm{seg}$ is the time to answer a $(11\eps\delta)$-segment query.  We compute $d_F(\sigma,\sigma_0)$ in $O(k^2\log k)$ time.  If $d_F(\sigma,\sigma_0) > (2+12\eps)\delta$, we report ``no''.  Otherwise, we search $\mathcal{D}$ with $\sigma_0$ in $O(k\log\frac{mn}{\eps})$ time.   If the search fails, we report ``no''.  Otherwise, the search returns $(\sigma_0,i)$ for some $i \in [n]$.
	
	\cancel{
	\begin{lemma}
		\label{lem:D2}
		The query time is $O(kQ_\mathrm{seg} + k\log\frac{mn}{\eps})$.  The preprocessing time is $P_\mathrm{seg} + O(\sqrt{d}/\eps)^{2d(k-1)}(mn)^{2k-1}k\log\frac{mn}{\eps}$ time.  The space is $S_\mathrm{seg} + O(\sqrt{d}/\eps)^{2d(k-1)}(mn)^{2(k-1)}k$.  
	\end{lemma}
}
	
	%It remains to prove that $\mathcal{D}$ supports $(3+O(\eps),\delta)$-ANN queries.
	
	\begin{lemma}
		\label{lem:query2}
		If  $d_F(\sigma,\sigma_0) \leq (2+12\eps)\delta$ and the search in $\mathcal{D}$ with $\sigma_0$ returns $(\sigma_0,i)$, then $d_F(\sigma,\tau_i) \leq (3+24\eps)\delta$.  Otherwise, $\min_{\tau_i \in T} d_F(\sigma,\tau_i) > \delta$.
	\end{lemma}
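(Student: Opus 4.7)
The plan is to treat the two clauses of the lemma separately. The first clause follows by a single application of the triangle inequality: the entry $(\sigma_0,i)$ is stored in $\mathcal{D}$ only when $d_F(\sigma_0,\tau_i) \leq (1+12\eps)\delta$, so combined with the query-time check $d_F(\sigma,\sigma_0) \leq (2+12\eps)\delta$, we obtain $d_F(\sigma,\tau_i) \leq (3+24\eps)\delta$.

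For the second clause I would argue the contrapositive: assume some $\tau_i \in T$ satisfies $d_F(\sigma,\tau_i) \leq \delta$, and show that both the distance check passes and the trie search returns an entry. It suffices to exhibit a matching between $\sigma_0$ and $\tau_i$ of cost at most $(1+12\eps)\delta$. Given such a matching, the triangle inequality immediately yields $d_F(\sigma,\sigma_0) \leq d_F(\sigma,\tau_i)+d_F(\tau_i,\sigma_0) \leq (2+12\eps)\delta$, and the nearest input curve to $\sigma_0$ lies no farther than $\tau_i$, so the preprocessing stored some entry $(\sigma_0,j)$ in $\mathcal{D}$ and the search succeeds.

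To build the matching between $\sigma_0$ and $\tau_i$, let $\mathcal{M}$ be a Fr\'echet matching between $\sigma$ and $\tau_i$, and decompose $\sigma_0$ into ``red'' edges (from the center of $c_{j_r,1}$ to the center of $c_{j_r,2}$, encoding the query edges with non-null cell pair) and ``blue'' connector edges (from the center of $c_{j_r,2}$ to the center of $c_{j_{r+1},1}$, spanning runs of null pairs). For a red edge, the subsegment $[z_{j_r,1},z_{j_r,2}] \subseteq w_{j_r}w_{j_r+1}$ supplied by constraint~1(b)(ii) has endpoints within $11\eps\delta+\eps\delta/2 \leq 12\eps\delta$ of the red-edge endpoints (segment-query slack plus cell half-diagonal), so a linear interpolation matches the red edge to the subsegment within $12\eps\delta$; composing with $\mathcal{M}$ yields a matching between the red edge and the corresponding subcurve of $\tau_i$ within $(1+12\eps)\delta$.

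The main obstacle is handling a blue edge. The crucial observation is that the ``gap'' portion of $\sigma$, running from $z_{j_r,2}$ through $w_{j_r+1},\ldots,w_{j_{r+1}}$ to $z_{j_{r+1},1}$, is disjoint from $\bigcup \mathcal{G}_1$: each intermediate null-pair edge returned null from both segment queries (so it meets no cell of $\mathcal{G}_1$), and the two extremity pieces $[z_{j_r,2},w_{j_r+1}]$ and $[w_{j_{r+1}},z_{j_{r+1},1}]$ lie beyond the farthest points of their respective query edges in $\bigcup\mathcal{G}_1$ by the definitions of $z_{j_r,2},z_{j_{r+1},1}$ and the semantics of the segment queries. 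Since $\bigcup_a G(v_{i,a}\oplus B_\delta) \subseteq \mathcal{G}_1$, the gap enters no $\delta$-ball around any vertex of $\tau_i$, so $\mathcal{M}$ sends it to a subsegment $[q_1,q_2]$ of a single edge of $\tau_i$. The blue-edge endpoints are within $12\eps\delta$ of $z_{j_r,2},z_{j_{r+1},1}$ and thus within $(1+12\eps)\delta$ of $q_1,q_2$, so by convexity the blue edge lies within $(1+12\eps)\delta$ of $[q_1,q_2]$, and a linear-interpolation matching between the two has cost $(1+12\eps)\delta$. Stitching the red- and blue-edge matchings, together with an extension at the two endpoints that matches $v_{i,1}$ and $v_{i,m}$ to the first and last vertices of $\sigma_0$ (which are within $12\eps\delta$ of $w_1$ and $w_k$, hence within $(1+12\eps)\delta$ of $v_{i,1}$ and $v_{i,m}$), yields the required matching $\sigma_0 \to \tau_i$.
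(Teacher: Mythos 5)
Your proposal is correct and follows essentially the same route as the paper: decompose $\sigma_0$ into intra-pair (``red'') and inter-pair (``blue'') edges, use the segment-query semantics to show each red edge interpolates to a subsegment $[y_r,x_r]$ of a query edge within $12\eps\delta$ and each blue edge interpolates to the subsegment of $\tau_i$ matched to the gap $\sigma[x_r,y_{r+1}]$, and stitch. The one small difference is that you obtain $d_F(\sigma,\sigma_0)\leq(2+12\eps)\delta$ as a corollary of $d_F(\sigma_0,\tau_i)\leq(1+12\eps)\delta$ via the triangle inequality, whereas the paper bounds $d_F(\sigma,\sigma_0)$ directly from the same construction; your route is a minor simplification, not a genuinely different argument.
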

\cancel{
	\begin{proof}
		Let $\sigma_0$ be the polygonal curve produced for $\sigma$.  If the search in $\mathcal{D}$ returns $(\sigma_0,i)$, the construction of $\mathcal{D}$ guarantees that $d_F(\sigma_0,\tau_i) \leq (1+12\eps)\delta$.  Therefore, $d_F(\sigma,\tau_i) \leq d_F(\sigma,\sigma_0) + d_F(\sigma_0,\tau_i) \leq (3+24\eps)\delta$.  It remains to prove that if $\min_{\tau_i \in T} d_F(\sigma,\tau_i) \leq \delta$, then $d_F(\sigma,\sigma_0) \leq (2+12\eps)\delta$ and the search in $\mathcal{D}$ with $\sigma_0$ will succeed.
		
		Let $\tau_i$ be the curve in $T$ such that $d_F(\sigma,\tau_i) \leq \delta$.  Let $\mathcal{M}$ be a Fr\'{e}chet matching between $\sigma$ and $\tau_i$.   Let $((c_{j_r,1},c_{j_r,2}))_{r \in [k_0]}$ be the non-null cell pairs in the construction of $\sigma_0$ from $\sigma$.  By construction, for $j \in [j_r+1,j_{r+1}-1]$, $(c_{j,1},c_{j,2}) = \text{null}$, so $w_jw_{j+1}$ does not intersect any cell in $\mathcal{G}_1$, which implies that $w_jw_{j+1}$ is at distance more than $\delta$ from any input vertex.  Therefore, $\sigma[w_{j_r+1},w_{j_{r+1}}]$ must  be matched by $\mathcal{M}$ to an edge of $\tau_i$, say $\tau_{i,a}$.
		
		By the definition of the $(11\eps\delta)$-segment query, when we walk from $w_{j_r+1}$ to $w_{j_r}$, we must hit $c_{j_r,2} \oplus B_{11\eps\delta}$ at a point, say $x_r$.  Similarly, when we walk from $w_{j_{r+1}}$ to $w_{j_{r+1}+1}$, we must hit $c_{j_{r+1},1} \oplus B_{11\eps\delta}$ at a point, say $y_{r+1}$.  
		%Let $p$ be the last point on $\tau_j$ that is matched by $\mathcal{M}$ with $x$.  So $p$ appears before the point(s) in $\tau_j$ that are matched with $w_{j_r+1}$.  Similarly, the first point $q$ on $\tau_j$ that is matched by $\mathcal{M}$ with $y$ appears after the point(s) in $\tau_j$ that are matched with $w_{j_{r+1}}$.  
		
		We claim that $v_{i,a}$ is matched by $\mathcal{M}$ with a point in $\sigma$ that lies does not lie behind $x_r$.  If not, when we walk from $w_{j_r+1}$ to $w_{j_r}$, we must hit a cell in $G(v_{i,a} \oplus B_\delta)$ before reaching $x_r$, but this contradicts the definition of $c_{j_r,2}$ being the output of the $(11\eps\delta)$-segment query with $w_{j_r+1}w_{j_r}$.  Similarly, $v_{i,a+1}$ is matched by $\mathcal{M}$ with a point in $\sigma$ that does not lie in front of $y_{r+1}$.  Figure~\ref{fg:3ann} shows an illustration.
		
		\begin{figure}
			\centerline{\includegraphics[scale=0.5]{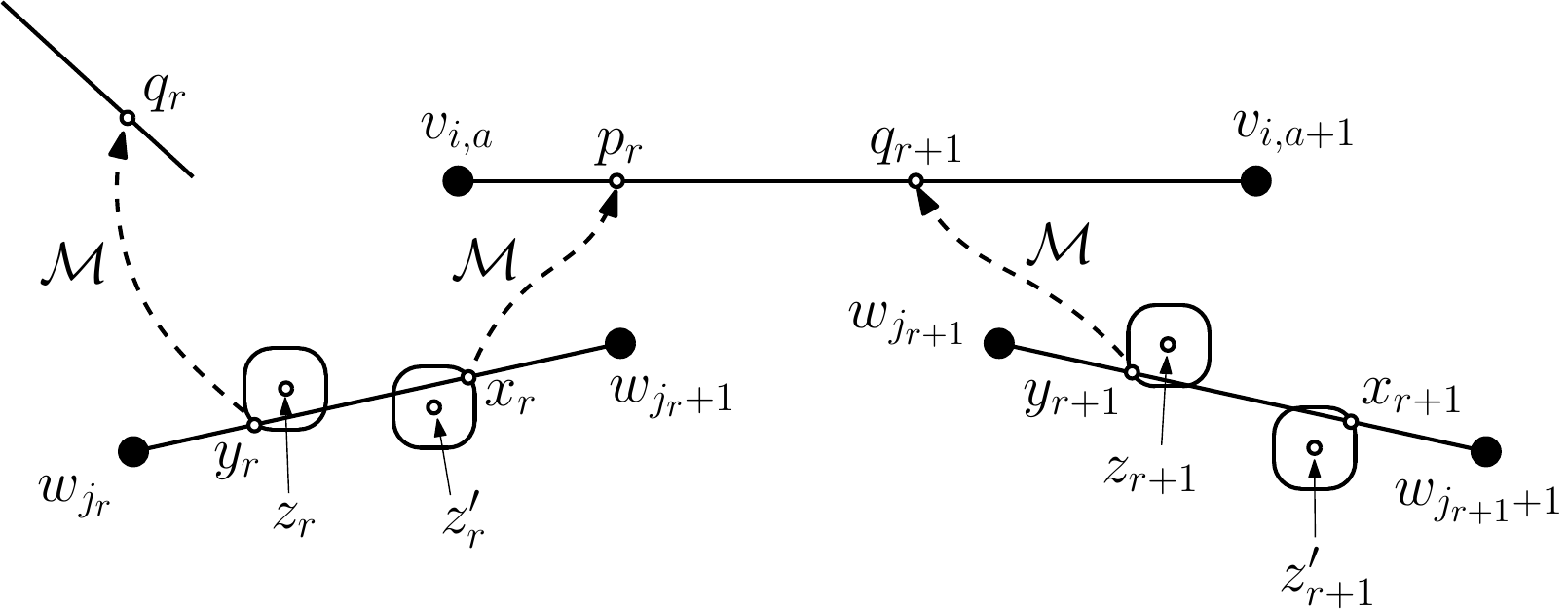}}
			\caption{The curved squares are $c_{j_r,1} \oplus B_{11\eps\delta}$, $c_{j_r,2} \oplus B_{11\eps\delta}$, $c_{j_{r+1},1} \oplus B_{11\eps\delta}$, and $c_{j_{r+1},2} \oplus B_{11\eps\delta}$ from left to right.}
			\label{fg:3ann}
		\end{figure}
		
		Hence, $\mathcal{M}$ matches the subcurve $\sigma[x_r,y_{r+1}]$ to a segment $p_rq_{r+1} \subseteq \tau_{i,a}$.  So $d(p_r,x_r)$ and $d(q_{r+1},y_{r+1})$ are at most $\delta$, which implies that $d_F(p_rq_{r+1},x_ry_{r+1}) \leq \delta$.  Therefore, we can combine $\mathcal{M}$ with a Fr\'{e}chet matching between $p_rq_{r+1}$ and $x_ry_{r+1}$ to show that $d_F(\sigma[x_r,y_{r+1}],x_ry_{r+1}) \leq 2\delta$.  Extending the matching from $x_r$ to the center $z'_r$ of $c_{j_r,2}$ and from $y_{r+1}$ to the center $z_{r+1}$ of $c_{j_{r+1},1}$ gives $d_F(\sigma[x_r,y_{r+1}],z'_{j_r}z^{}_{r+1}) \leq (2+12\eps)\delta$.  Clearly, $d_F(y_rx_r,z_rz'_r) \leq 12\eps\delta$ because $d(y_r,z_r)$ and $d(x_r,z'_r)$ are at most $12\eps\delta$.  As a result, we can combine the Fr\'{e}chet matchings between $y_rx_r$ and $z_rz'_r$ for $r \in [k_0]$ and between $\sigma[x_r,y_{r+1}]$ and $z'_rz^{}_{r+1}$ for $r \in [k_0-1]$ to conclude that $d_F(\sigma,\sigma_0) \leq (2+12\eps)\delta$.  So $\sigma_0$ will pass the check of the query procedure.
		
		We have proved previously that $d_F(p_rq_{r+1},x_ry_{r+1}) \leq \delta$.  Take a Fr\'{e}chet  matching between $p_rq_{r+1}$ and $x_ry_{r+1}$.  We extend it from $x_r$ to $z'_r$ and from $y_{r+1}$ to $z_{r+1}$ to obtain $d_F(p_rq_{r+1},z'_rz_{r+1}) \leq (1+12\eps)\delta$.  Recall that $\mathcal{M}$ matches the subcurve $\sigma[x_r,y_{r+1}]$ to $p_rq_{r+1} \subseteq \tau_{i,a}$.  Observe that $\sigma[y_x,x_r]$ is the segment $y_rx_r$.  It follows that $\mathcal{M}$ matches $y_rx_r$ to $\tau_i[q_r,p_r]$, so $d_F(y_rx_r,\tau_i[q_r,p_r]) \leq \delta$.  As a result, $d_F(z_rz'_r,\tau_i[q_r,p_r]) \leq d_F(y_rx_r,\tau_i[q_r,p_r]) + \max\{d(y_r,z_r), d(x_r,z_r')\} \leq (1+12\eps)\delta$.  We conclude that $d_F(\sigma_0,\tau_i) \leq (1+12\eps)\delta$.  Hence, our preprocessing must have stored an input curve at Fr\'{e}chet distance at most $(1+12\eps)\delta$ with $\sigma_0$, which will be reported.
	\end{proof}
}
	
	Combining the results in this section with the ray shooting results in two and three dimensions~\cite{BHO1994}, Lemma~\ref{lem:segment}, and Theorem~\ref{thm:reduce}, we obtain the following theorem.
	
	\cancel{
	\begin{lemma}\label{lem:3ANN}
		Let $T$ be a set of $n$ polygonal curves in $\real^d$, each containing at most $m$ vertices.   Let $k \geq 3$ be the given maximum number of vertices in any query curve.  Let $P_\mathrm{seg}$, $S_\mathrm{seg}$, and $Q_\mathrm{seg}$ be the preprocessing time, space complexity, and query time for any data structure that solves the $(11\eps\delta)$-segment query on $\mathcal{G}_1$.   For any $\eps \in (0,1)$, there is a $(3+34\eps,\delta)$-ANN data structure for $T$ under the Fr\'{e}chet distance that uses $O(\sqrt{d}/\eps)^{2d(k-1)}(mn)^{2(k-1)}k + S_\mathrm{seg}$ space and $O(\sqrt{d}/\eps)^{2d(k-1)}(mn)^{2k-1}k\log\frac{mn}{\eps} + P_\mathrm{seg}$ preprocessing time.  The query time is $O(kQ_{\mathrm{seg}} + k\log\frac{mn}{\eps})$.
	\end{lemma}

	Substituting the ray shooting result~\cite{} for boxes in two and three dimensions yields the following theorem.
}
	
	\begin{theorem}
		\label{thm:3ANN}
		%Let $T$ be a set of $n$ polygonal curves in $\real^d$, each containing at most $m$ vertices.   Let $k \geq 3$ be the given maximum number of vertices in any query curve.  
		For any $\eps \in (0,0.5)$, there is a $(3+O(\eps))$-ANN data structure for $T$ under the Fr\'{e}chet distance with the following performance guarantees:
		\begin{itemize}
			\item $d\in\{2,3\}$: \\
						query time $= O(k\log\frac{mn}{\eps}\log n)$, \\
						space $= O\bigl(\frac{1}{\eps}\bigr)^{2d(k-1)+1}(mn)^{2(k-1)}k\log^2 n$, \\
						expected preprocessing time $= O\bigl(\frac{1}{\eps}\bigr)^{2d(k-1)}(mn)^{2k-1}kn\log\frac{mn}{\eps}\log n$.
			
			\item $d \geq 4:$ \\
						query time $= \tilde{O}\bigl(\frac{1}{\eps^d}k(mn)^{0.5+\eps}\bigr)$, \\
						space $= O\bigl(\frac{\sqrt{d}}{\eps}\bigr)^{2d(k-1)}(mn)^{2(k-1)}k\cdot\frac{1}{\eps}\log^2 n +  O\bigl(\frac{\sqrt{d}}{\eps}\bigr)^{O(d/\eps^2)} \cdot \tilde{O}((mn)^{O(1/\eps^2)})$, \\
						expected preprocessing time $= O\bigl(\frac{\sqrt{d}}{\eps}\bigr)^{2d(k-1)}(mn)^{2k-1}kn\log\frac{mn}{\eps}\log n \,\, +$ \\ \hspace*{132pt}$O\bigl(\frac{\sqrt{d}}{\eps}\bigr)^{O(d/\eps^2)} \cdot \tilde{O}((mn)^{O(1/\eps^2)})$.
		\end{itemize}
	\end{theorem}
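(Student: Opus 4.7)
The plan is to combine the $(3+O(\eps),\delta)$-ANN data structure built in Section~\ref{sec:3ANN} with an appropriate $(11\eps\delta)$-segment query oracle on $\mathcal{G}_1$, and then lift the result to a $(3+O(\eps))$-ANN data structure via Theorem~\ref{thm:reduce}. All ingredients are already in place, so the argument is essentially a bookkeeping exercise.

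First I would collect the base cost of the $(3+O(\eps),\delta)$-ANN data structure from Section~\ref{sec:3ANN}. The trie $\mathcal{D}$ has $O(\sqrt{d}/\eps)^{2d(k-1)}(mn)^{2(k-1)}k$ size, supports searches in $O(k\log\frac{mn}{\eps})$ time, and is built in $O(\sqrt{d}/\eps)^{2d(k-1)}(mn)^{2k-1}k\log\frac{mn}{\eps}$ time; correctness is given by Lemma~\ref{lem:query2}. A query performs $O(k)$ $(11\eps\delta)$-segment queries, one $O(k^2\log k)$ Fr\'{e}chet comparison against $\sigma_0$, and a single search in $\mathcal{D}$, giving a total query cost of $O(kQ_{\mathrm{seg}} + k\log\frac{mn}{\eps})$ once lower-order terms are absorbed.

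For $d\in\{2,3\}$, I would plug in the box ray-shooting data structure of~\cite{BHO1994}, yielding $Q_{\mathrm{seg}}=O(\log\frac{mn}{\eps})$ and space and preprocessing $O((mn)^{2+\mu}/\eps^{d(2+\mu)})$ for arbitrarily small $\mu>0$. Because $k\geq 3$, the exponent $2+\mu$ is dominated by $2(k-1)\geq 4$, so the segment-query overhead vanishes into the trie bounds. For $d\geq 4$, I would instead plug in Lemma~\ref{lem:segment}; its query term $\tilde{O}((mn)^{0.5+\eps}/\eps^d)$ produces the $\tilde{O}(\frac{1}{\eps^d}k(mn)^{0.5+\eps})$ factor in the query time, and its space and preprocessing $O(\sqrt{d}/\eps)^{O(d/\eps^2)}\cdot\tilde{O}((mn)^{O(1/\eps^2)})$ appear as additive terms in the stated bounds.

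Finally I would apply Theorem~\ref{thm:reduce} to convert the $(3+O(\eps),\delta)$-ANN data structure into a $(3+O(\eps))$-ANN data structure; the blow-up of $(3+O(\eps))(1+O(\eps))$ stays within $3+O(\eps)$ after rescaling $\eps$. The reduction multiplies query time by $O(\log n)$ and space by $O(\frac{1}{\eps}\log^2 n)$, and contributes an additive $(Q+D)n\log n$ to expected preprocessing time, where $D$ is the deletion cost. Since we support deletion only by rebuilding from scratch, I would take $D$ to equal the base preprocessing time, which absorbs the multiplicative $\frac{1}{\eps\log^2 n}P$ contribution in Theorem~\ref{thm:reduce} and produces the claimed $n\log n$ blow-up in preprocessing. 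The only step requiring care is the domination arguments needed to simplify the additive ray-shooting terms in small dimensions and to verify that the $O(k\log\frac{mn}{\eps}\log n)$ query time is preserved after the reduction; there is no genuine obstacle, since each step is a direct substitution into a previously proved result.
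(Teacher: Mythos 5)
Your proposal is correct and follows the same route the paper takes (the paper itself gives no explicit proof, only the remark that the theorem follows by combining the Section~\ref{sec:3ANN} construction, the ray-shooting result of~\cite{BHO1994} or Lemma~\ref{lem:segment}, and Theorem~\ref{thm:reduce}); you have merely made explicit the bookkeeping that the paper leaves implicit, including the observation that deletion is rebuilding so $D=P$ and the domination of the ray-shooting overhead by the trie bounds when $k\geq 3$.
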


	\section{$\pmb{(11\eps\delta)}$-segment queries and proof of Lemma~\ref{lem:segment}}\label{Sec: subroutines}
	
	%We present a data structure for the $(11\eps\delta)$-segment query on $\mathcal{G}_1$ that avoids an $(mn)^{\Omega(d)}$ space complexity while achieving an $\tilde{O}((mn)^{0.5+\eps} /\eps^{O(d)})$ query time.   Strictly speaking, the data structure will not return an answer for the $(11\eps\delta)$-segment query if it discovers that the answer for the $(1+O(\eps),\delta)$-ANN query is ``no''.  
	We describe the $(11\eps\delta)$-segment query data structure in Lemma~\ref{lem:segment}.  We first present the main ideas before giving the details.  Let $w_jw_{j+1}$ be a query segment, which is unknown at preprocessing.  There are three building blocks.  
	
	First, the intuition is to capture the support lines of all possible query segments using pairs of cells in $\mathcal{G}_1$.  It would be ideal to retrieve a pair of cells intersected by $\aff(w_jw_{j+1})$, but this seems to be as difficult as the ray shooting problem.  For a technical reason, we need to use more grid cells in a larger neighborhood of the input vertices than in $\mathcal{G}_1$, so define $\mathcal{G}_3 = \bigcup_{i \in [n], a \in [m]} G(v_{i,a} \oplus B_{(1+6\eps)\delta})$.  
	
	We find a grid vertex $x$ of $\mathcal{G}_1$ that is a $(1+\eps)$-approximate nearest grid vertex to $\aff(w_jw_{j+1})$.  
	%This step takes $\tilde{O}((mn)^{0.5+\eps}/\eps^{O(d)})$ time by the result of Andoni~et~al.~\cite{andoni2009approximate}.  
	We will show that if $d(x,\aff(w_jw_{j+1})) > (1+\eps)\eps\delta$, the answer to the $(11\eps\delta)$-segment query is null; otherwise, we can find a cell $\gamma \in \mathcal{G}_3$ near $x$ that intersects $\aff(w_jw_{j+1})$.  We can use any other cell $c \in \mathcal{G}_1$ to form a pair with $\gamma$ that acts as a surrogate for the support lines of query segments that pass near $c$ and $\gamma$.
	
	Second, given $w_jw_{j+1}$ at query time, among all possible choices of $c$, we need to find the right one(s) efficiently so that $(c,\gamma)$ is a surrogate for $\aff(w_jw_{j+1})$.  We explain the ideas using the case that $w_{j+1}$ lies between $w_j$ and $\aff(w_jw_{j+1}) \cap \gamma$.   Note that $w_j$ may not be near any cell in $\mathcal{G}_1$.  In order that $\min_{\tau_i \in T} d(\sigma,\tau_i) \leq \delta$, $w_j$ must be within a distance $\delta$ from some input edge $\tau_{i,a}$.  We find a maximal packing of $\aff(\tau_{i,a}) \oplus B_{O(\delta)}$ using lines that are parallel to $\tau_{i,a}$ and are at distance $\Theta(\eps\delta)$ or more apart.  There are $O(\eps^{1-d})$ lines in the packing, and every point in $\aff(\tau_{i,a}) \oplus B_\delta$ is within a distance $O(\eps\delta)$ from some line in the packing.  The projection of $w_j$ to the approximately nearest line approximates the location of $w_j$.  Hence, we should seek to divide the lines in the  packing into appropriate segments so that, given $w_j$ and its approximately nearest line in the packing, we can efficiently find the segment that contains  the projection of $w_j$ and retrieve some precomputed information for that segment.
	
	Third, let $\ell$ be a line in the packing mentioned above, for each possible cell $c \in \mathcal{G}_1$, we use the geometric construct $F(c,\gamma)= \{x \in \real^d : \exists \,\, y \in \gamma \, \text{s.t.} \,\, xy \cap c \not= \emptyset\}$ defined in~\cite{cheng2022curve} which can be computed in $O(1)$ time.  The projection of $(\ell \oplus B_{2\eps\delta})\cap F(c,\gamma)$ in $\ell$ is the set of points on $\ell$ such that if the projection of $w_j$ is in it, then $(c,\gamma)$ is a surrogate for $\aff(w_jw_{j+1})$.   As a result, the endpoints of the projections of $(\ell \oplus B_{2\eps\delta}) \cap F(c,\gamma)$ over all possible choices of $c$ divide $\ell$ into segments that we desire.  Each segment may stand for several choices of $c$'s.  For each segment, we store the cell $c'$ close to that segment because the ideal choice is the cell that we hit first as we walk from $w_j$ to $w_{j+1}$.  
	%Although we do not know if $c'$ is the right answer for the $(6\eps\delta)$-segment query ($c'$ may not even belong to $\mathcal{G}_1$), but we can show that the correct cell to be returned is in a small neighborhood of $c'$, so we can just examine all cells of $\mathcal{G}_1$ nearby.
	
	As described above, we use two approximate nearest neighbor data structures that involve lines.  The first one is due to Andoni~et~al.~\cite{andoni2009approximate} which stores a set of points $P$ such that given a query line, the $(1+\eps)$-approximate nearest point to the query line can be returned in $\tilde{O}(d^3|P|^{0.5+\eps})$ time.  It uses $\tilde{O}\bigl(d^2|P|^{O(1/\eps^2)}\big)$ space and preprocessing time.  The second result is due to Agarwal~et~al.~\cite{agarwal2017approximate} which stores a set $L$ of lines such that given a query point, the 2-approximate nearest line to the query point can be returned in $\tilde{O}(1)$ time.  It uses $\tilde{O}(|L|^2)$ space and expected preprocessing time.

\cancel{
	\begin{theorem}[\cite{andoni2009approximate}]\label{thm:ANP}
	Let $P$ be a set of points in $\real^d$.  For any $\mu \in (0,0.5)$, there exists a data structure that uses $\tilde{O}\bigl(d^2|P|^{O(1/\mu^2)}\big)$ space and preprocessing time such that for any query line, the $(1+\mu)$-approximately nearest point in $P$ can be reported in $\tilde{O}(d^3|P|^{0.5+\mu})$ time.
	\end{theorem}
	The second result is to preprocess a set of lines such that given a query point, the $(1+\eps)$-approximate nearest line to the query point can be returned efficiently.  The following result is due to Agarwal~et~al.~\cite{agarwal2017approximate}.
	
		\begin{theorem}[\cite{agarwal2017approximate}]\label{thm:ANL}
		Let $L$ be a set of lines in $\real^d$.  For any $\mu \in (0,1)$, there exists a data structure that uses $\tilde{O}(|L|^2/\mu^{d-1})$ space and expected preprocessing time such that for any query point, the $(1+\mu)$-approximately nearest line in $L$ can be reported in $\tilde{O}(\mu^{1-d})$ time.
	\end{theorem}
}
	
	\subsection{Data structure organization}
	
	We restrict $\eps$ to be chosen from $(0,0.5)$.  We construct the data structure of Andoni~et~al.~\cite{andoni2009approximate} for the grid vertices of $\mathcal{G}_1$ so that for any query line, the $(1+\eps)$-approximate nearest grid vertex can be returned in $\tilde{O}((mn)^{0.5+\eps}/\eps^d)$ time.  We denote this data structure by $D_\mathrm{anp}$.  It takes $O(\sqrt{d}/\eps)^{O(d/\eps^2)} \cdot \tilde{O}((mn)^{O(1/\eps^2)})$ space and preprocessing time.
	
	For each input edge $\tau_{i,a}$, define a set of lines $L_{i,a}$ as follows.  Let $H$ be the hyperplane through $v_{i,a}$ orthogonal to $\aff(\tau_{i,a})$.  Take a $(d-1)$-dimensional grid in $H$ with $v_{i,a}$ as a grid vertex and cell width $\eps\delta/\sqrt{d-1}$.  The set $L_{i,a}$ includes every line that is orthogonal to $H$ and passes through a vertex of this grid in $H$ at distance within $(1+2\eps)\delta$ from $v_{i,a}$.  The set $L_{i,a}$ has $O(\eps^{1-d})$ size, and it can be constructed in $O(\eps^{1-d})$ time.  Moreover, every point in the cylinder $\aff(\tau_{i,a}) \oplus B_\delta$ is within a distance $\eps\delta$ from some line in $L_{i,a}$.  
	
	Define $\mathcal{L} = \bigcup_{i\in[n], a\in[m-1]}L_{i,a}$.  The size of $\mathcal{L}$ is $O(mn/\eps^{d-1})$, and $\mathcal{L}$ can be constructed in $O(mn/\eps^{d-1})$ time.  We construct the data structure of Agarwal~et~al.~\cite{agarwal2017approximate} for $\mathcal{L}$ so that for any query point, a 2-approximate nearest line in $\mathcal{L}$ can be returned in $\tilde{O}(1)$ time.  We denote this data structure by $D_\mathrm{anl}$.  It uses $\tilde{O}((mn)^2/\eps^{2d-2})$ space and expected preprocessing time.
	
	Recall that $\mathcal{G}_3 = \bigcup_{i \in [n], a \in [m]} G(v_{i,a} \oplus B_{(1+6\eps)\delta})$.
	
	For every $\gamma \in \mathcal{G}_3$ and every $c \in \mathcal{G}_1$, we construct $F(c,\gamma) = \{ x \in \real^d : \exists \, y \in \gamma \,\, \mbox{s.t.} \,\,  xy\cap c \not= \emptyset\}$, which is empty or an unbounded convex polytope of $O(1)$ size that can be constructed in $O(1)$ time as a Minkowski sum~\cite{cheng2022curve}.    The total time needed is $O((mn)^2/\eps^{2d})$.
	
	For every $\gamma \in \mathcal{G}_3$, every $c \in \mathcal{G}_1$, and every line $\ell \in \mathcal{L}$, compute the intersection $(\ell \oplus B_{2\eps\delta}) \cap F(c,\gamma)$ and project it orthogonally to a segment in $\ell$.  Take any line $\ell \in \mathcal{L}$.  The resulting segment endpoints in $\ell$ divide $\ell$ into \emph{canonical segments}.  There are $O((mn)^2/\eps^{2d})$ canonical segments in $\ell$.  For every cell $\gamma \in \mathcal{G}_3$ and every canonical segment $\xi \subseteq \ell$, compute the set $C_{\gamma,\xi}$ of every cell $c \in \mathcal{G}_1$ such that $\xi$ is contained in the projection of $(\ell \oplus B_{2\eps\delta}) \cap F(c,\gamma)$ onto $\ell$.  Fix an arbitrary point in $\xi$ and denote it by $p_\xi$.    Each $C_{\gamma,\xi}$ has $O(mn/\eps^{d})$ size.  The total time needed over all cells in $\mathcal{G}_3$ and all canonical segments in all lines in $\mathcal{L}$ is $\tilde{O}((mn)^5/\eps^{5d-1})$.
	
	Let $p_\gamma$ be the center of the cell $\gamma$.  Define $c_{\gamma,\xi}$ to be the cell in $C_{\gamma,\xi}$ such that $p_\xi p_\gamma\cap (c_{\gamma,\xi} \oplus B_{5\eps\delta})$ is nearest to $p_\xi$ among $\{p_\xi p_\gamma\cap (c \oplus B_{5\eps\delta}) : c \in C_{\gamma,\xi}\}$.
	%closest to $\xi$.
	%that contains the point in $\bigcup_{c \in C_{\gamma,\xi}} p_\xi p_\gamma\cap (c \oplus B_{4\eps\delta})$ nearest to $p_\xi$.  
	%Analogously, define $\hat{\gamma}_{c,\xi}$ to be the cell in $\Gamma_{c,\xi}$ that contains the point in $\bigcup_{\gamma \in \Gamma_{c,\xi}} \aff(p_\xi p_c) \cap (\gamma \oplus B_{4\eps\delta})$ farthest from $p_\xi$.  
	 %If several cells in $C_{\gamma,\xi}$ are eligible, break ties arbitrarily.  
	 %For $\gamma_{c,\xi} \in \{\check{\gamma}_{c,\xi},\hat{\gamma}_{c,\xi}\}$, 
	 %We construct $N(c_{\gamma,\xi}) = \mathcal{G}_1 \cap G(c_{\gamma,\xi} \oplus B_{8\eps\delta})$, i.e., the grid cells in $\mathcal{G}_1$ that are within a distance $8\eps\delta$ from $c_{\gamma,\xi}$.  The size of $N(c_{\gamma,\xi})$ is $O(1/\eps^d)$.  
	 The total time to compute $c_{\gamma,\xi}$ over all cells in $\mathcal{G}_3$ and all canonical segments in all lines in $\mathcal{L}$ is $O((mn)^5/\eps^{5d-1})$.
	
	Finally, for every line $\ell \in \mathcal{L}$, we store the canonical segments in $\ell$ in an interval tree $T_\ell$~\cite{CLRS}.    It uses linear space and preprocessing time.  For any query point in $\ell$, one can search $T_\ell$ in $O(\log \frac{mn}{\eps})$ time to find the canonical segment in $\ell$ that contains the query point.  For each canonical segment $\xi$ stored in $T_\ell$, we keep a dictionary $T_{\xi}$ that stores the set $\{(\gamma,c_{\gamma,\xi}) : \gamma \in \mathcal{G}_3\}$ with $\gamma$ as the key.  For any cell $\gamma \in \mathcal{G}_3$, we can search $T_\xi$ in $O(\log \frac{mn}{\eps})$ time to report $c_{\gamma,\xi}$.   These interval trees and dictionaries have a total size of $O((mn)^4/\eps^{4d-1})$, and they can be constructed in $\tilde{O}((mn)^4/\eps^{4d-1})$ time.
	
	The data structures $D_\mathrm{anp}$, $D_\mathrm{anl}$, $T_\ell$ for $\ell \in \mathcal{L}$, and $T_\xi$ for all canonical segments $\xi$'s are what we need to support the $(11\eps\delta)$-segment queries on $\mathcal{G}_1$.  
	%The total space complexity and preprocessing time are $O(m^5n^5\eps^{1-6d})$. 
	
	\begin{lemma}
		\label{lem:D-seg}
		We can construct $D_\mathrm{anp}$, $D_\mathrm{anl}$, $T_\ell$ for $\ell \in \mathcal{L}$, and $T_\xi$ for every $\ell \in \mathcal{L}$ and every canonical segment $\xi \subset \ell$ in $O(\sqrt{d}/\eps)^{O(d/\eps^2)} \cdot \tilde{O}((mn)^{O(1/\eps^2)})$ space and preprocessing time.
	\end{lemma}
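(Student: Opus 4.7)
The proof is essentially additive bookkeeping: the four pieces and the auxiliary computations needed to build them have each had their individual cost either stated outright or calculated on the fly in the construction above. My plan is to collect these costs and check that a single summand dominates all the others and matches the claimed bound.

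The first step is to handle $D_{\mathrm{anp}}$. Feeding the $O(mn/\eps^d)$ grid vertices of $\mathcal{G}_1$ to the result of Andoni~et~al.~\cite{andoni2009approximate} with approximation factor $1+\eps$ already yields space and expected preprocessing time $O(\sqrt{d}/\eps)^{O(d/\eps^2)} \cdot \tilde{O}((mn)^{O(1/\eps^2)})$. This single summand matches the target bound, so the remainder of the proof only needs to show that everything else is absorbed into it.

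The second step is to enumerate the remaining summands, all of which are polynomial in $mn$ and $1/\eps$: $D_{\mathrm{anl}}$ contributes $\tilde{O}((mn)^2/\eps^{2d-2})$ by Agarwal~et~al.~\cite{agarwal2017approximate} applied to $|\mathcal{L}|=O(mn/\eps^{d-1})$ lines; constructing $\mathcal{L}$ itself contributes $O(mn/\eps^{d-1})$; computing every $F(c,\gamma)$ for $(c,\gamma)\in\mathcal{G}_1\times\mathcal{G}_3$ contributes $O((mn)^2/\eps^{2d})$; enumerating the canonical segments in every $\ell\in\mathcal{L}$ and computing each $C_{\gamma,\xi}$ together with its representative $c_{\gamma,\xi}$ contributes $\tilde{O}((mn)^5/\eps^{5d-1})$; and the interval trees $T_\ell$ together with the per-segment dictionaries $T_\xi$ contribute $\tilde{O}((mn)^4/\eps^{4d-1})$.

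The last step, which is the only one that requires a moment's thought, is to verify that the $D_{\mathrm{anp}}$ summand swallows every other one. Since $\eps\in(0,0.5)$, by choosing the hidden constants in the exponents sufficiently large we have $(mn)^{O(1/\eps^2)}\geq(mn)^{5}$ and $(\sqrt{d}/\eps)^{O(d/\eps^2)}\geq(1/\eps)^{O(d)}$, so every polynomial summand listed above is dominated by the $D_{\mathrm{anp}}$ cost. I do not expect any genuine obstacle; the proof is a short accounting argument rather than a substantive derivation, which is why the author probably stated it as a lemma without an accompanying long proof.
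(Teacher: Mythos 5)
Your proof is correct and matches the paper's (implicit) argument: the paper has no separate proof block for this lemma, and the statement is simply the summary of the cost accounting carried out in the preceding paragraphs of Section~4.1, which you reproduce faithfully and then observe that the $D_\mathrm{anp}$ summand dominates the others once $\eps<0.5$ and the hidden exponent constants are chosen appropriately.
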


	In the definition of $c_{\gamma,\xi}$, one may ask what if $p_\xi p_\gamma$ does not intersect $c \oplus B_{5\eps\delta}$ for some $c \in C_{\gamma,\xi}$.  We prove that this cannot happen.  We also establish some other properties.
	
	\begin{lemma}
		\label{lem:C}
		Let $\gamma$ be a cell in $\mathcal{G}_3$.  Let $\xi$ be a canonical segment.
		Let $L_\xi$ be the cylinder with $\xi$ as the axis and radius $2\eps\delta$.
		\begin{enumerate}[(i)]
			\item For every cell $c \in \mathcal{G}_1$, if $c \cap xy \not= \emptyset$ for some points $x \in L_\xi$ and $y \in \gamma$, then $c \in C_{\gamma,\xi}$.
			\item For every point $x \in L_\xi$, every point $y \in \gamma$ and every cell $c \in C_{\gamma,\xi}$, $xy \cap (c \oplus B_{5\eps\delta}) \not= \emptyset$.
			\item Let $\lambda$ be any value greater than or equal to $11\eps\delta$.  When we walk from a point $x \in L_\xi$ to a point $y \in \gamma$, we cannot hit any $c \in C_{\gamma,\xi}$ earlier than $c_{\gamma,\xi} \oplus B_\lambda$ irrespective of the choices of $x$ and $y$.
		\end{enumerate}
	\end{lemma}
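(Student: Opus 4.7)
The plan is to establish (i)--(iii) in turn, with (iii) being the main technical obstacle.

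For (i), I observe that if $c \cap xy \neq \emptyset$ for some $x \in L_\xi$ and $y \in \gamma$, then $x \in F(c,\gamma)$ by definition of $F$. Because $L_\xi \subseteq \ell \oplus B_{2\eps\delta}$, the orthogonal projection of $x$ onto $\ell$ lies in both $\xi$ and the segment $\eta_c$ that is the projection of $(\ell \oplus B_{2\eps\delta}) \cap F(c,\gamma)$ onto $\ell$. Since the canonical segments are the maximal intervals of $\ell$ between consecutive endpoints of all such $\eta_{c'}$ (over $c' \in \mathcal{G}_1$), a non-empty intersection $\xi \cap \eta_c$ forces the containment $\xi \subseteq \eta_c$, which is exactly the condition $c \in C_{\gamma,\xi}$.

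For (ii), I use linear interpolation. Let $\tilde{x}$ denote the orthogonal projection of $x$ onto $\ell$, so $\tilde{x} \in \xi$. Since $c \in C_{\gamma,\xi}$ gives $\xi \subseteq \eta_c$, there exist $x'' \in (\ell \oplus B_{2\eps\delta}) \cap F(c,\gamma)$ whose projection onto $\ell$ is $\tilde{x}$, and $y'' \in \gamma$ such that $x''y''$ meets $c$ at a point $z = (1-t_0)x'' + t_0 y''$. Bounding $d(x,x'') \leq d(x,\tilde{x}) + d(\tilde{x}, x'') \leq 4\eps\delta$ and $d(y, y'') \leq \mathrm{diam}(\gamma) = \eps\delta$, the corresponding point $z' = (1-t_0)x + t_0 y$ on $xy$ satisfies $d(z,z') \leq (1-t_0)\cdot 4\eps\delta + t_0 \cdot \eps\delta \leq 4\eps\delta$, so $z' \in xy \cap (c \oplus B_{5\eps\delta})$.

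For (iii), the plan is to transfer the early-hit property of $c_{\gamma,\xi}$ from $p_\xi p_\gamma$ to $xy$. Suppose $c \in C_{\gamma,\xi}$ is first hit on $xy$ at parameter $t_c$; I would show that $xy$ has already entered $c_{\gamma,\xi} \oplus B_\lambda$ by that parameter. Two ingredients are available: (a) by definition of $c_{\gamma,\xi}$, on $p_\xi p_\gamma$ the first entry into $c_{\gamma,\xi} \oplus B_{5\eps\delta}$ precedes the first entry into any $c' \oplus B_{5\eps\delta}$ with $c' \in C_{\gamma,\xi}$, and (b) applying (ii) at both $(p_\xi, p_\gamma)$ and $(x,y)$ shows that every cell of $C_{\gamma,\xi}$ is within $5\eps\delta$ of both $p_\xi p_\gamma$ and $xy$. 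I would then correlate entry points across the two segments using these proximity certificates, absorbing the discrepancies into the $6\eps\delta$ of slack afforded by $\lambda - 5\eps\delta \geq 6\eps\delta$.

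The main obstacle in (iii) is that $x \in L_\xi$ may be arbitrarily far from $p_\xi$ along the axial direction of $\ell$, since canonical segments can be long, so a naive pointwise bound on $d(xy, p_\xi p_\gamma)$ at the same parameter fails. To circumvent this, I expect to exploit the funnel geometry implied by (ii): every cell of $C_{\gamma,\xi}$ stays within $5\eps\delta$ of both segments, and both segments terminate inside the small cell $\gamma$ of diameter $\eps\delta$, so the entry order near $\gamma$ is stable. Combined with $c_{\gamma,\xi}$ being the most ``$p_\xi$-side'' cell in the $5\eps\delta$ sense along $p_\xi p_\gamma$, this funnel structure should force the first hit of $c_{\gamma,\xi} \oplus B_\lambda$ along $xy$ to precede that of any competitor $c$, with $\lambda = 11\eps\delta$ providing exactly the slack required.
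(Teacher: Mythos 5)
Parts (i) and (ii) are correct and track the paper's argument essentially verbatim: in (i) you use $x \in F(c,\gamma)$ plus the fact that $\xi$ cannot straddle an endpoint of the projected shadow of $F(c,\gamma)$; in (ii) you interpolate from $x''y''$ to $xy$. Your one-step interpolation in (ii) actually gives the slightly sharper bound $4\eps\delta$ versus the paper's two-step $\eps\delta + 4\eps\delta = 5\eps\delta$, but either suffices.

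Part (iii) is where the proposal stops short. You correctly identify the obstacle (that $x$ can be far from $p_\xi$ along the axis of $\ell$, so any pointwise comparison of $xy$ with $p_\xi p_\gamma$ at matching parameters is hopeless) and you gesture at the right geometric phenomenon (the ``funnel'' of cells in $C_{\gamma,\xi}$). But you never supply the mechanism that converts the precedence observed along $xy$ into a precedence along $p_\xi p_\gamma$, and ``correlate entry points using proximity certificates'' is not a proof step. The missing ingredient is a continuity/homotopy argument. Concretely: assume for contradiction that along some $xy$ you hit $c \in C_{\gamma,\xi}$ strictly before $c_{\gamma,\xi} \oplus B_{11\eps\delta}$. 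Since a grid cell has diameter $\eps\delta$, this forces $c$ and $c_{\gamma,\xi} \oplus B_{10\eps\delta}$ to be disjoint, hence $c \oplus B_{5\eps\delta}$ and $c_{\gamma,\xi} \oplus B_{5\eps\delta}$ are disjoint convex sets; both meet $xy$ by (ii), and $xy \cap (c \oplus B_{5\eps\delta})$ strictly precedes $xy \cap (c_{\gamma,\xi} \oplus B_{5\eps\delta})$. Now slide the endpoint $y$ linearly to $p_\gamma$, then $x$ linearly to $p_\xi$. At every intermediate position, (ii) guarantees the segment still meets both fattened cells, and since the two fattened cells are disjoint and convex, the order of these two intersections along the segment cannot flip during the motion (it would have to pass through a configuration where one intersection is empty or they overlap). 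Therefore the same strict precedence holds along $p_\xi p_\gamma$, contradicting the minimality that defines $c_{\gamma,\xi}$. This deformation argument is exactly what lets $\lambda = 11\eps\delta$ work: the budget is $5\eps\delta + 5\eps\delta + \eps\delta$ (two applications of (ii) plus the cell diameter), not the $5\eps\delta + 6\eps\delta$ split you propose. Without the homotopy step, your plan for (iii) does not close.
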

\cancel{
	\begin{proof}
		Consider (i).  The existence of $y \in \gamma$ such that $c \cap xy \not= \emptyset$ implies that $x \in F(c,\gamma)$.  Then, as $x \in L_\xi$, we get $x \in (\aff(\xi) \oplus B_{2\eps\delta}) \cap F(c,\gamma)$.   Also, $\xi$ contains the projection of $x$ because $x \in L_\xi$.  Therefore, as $\xi$ is a canonical segment, the projection of $(\aff(\xi) \oplus B_{2\eps\delta}) \cap F(c,\gamma)$ in $\aff(\xi)$ must contain $\xi$.  It follows that $c \in C_{\gamma,\xi}$ by definition.
		
		Consider (ii).  Let $x$ be any point in $L_\xi$.  Let $y$ be any point in $\gamma$.
		By the definition of $\xi$, there exists a point $x' \in L_\xi \cap F(c,\gamma)$ such that $d(x,x')\leq 4\eps\delta$.  As $x' \in F(c,\gamma)$, there exists $y' \in \gamma$ such that $x'y' \cap c \not= \emptyset$.   We have $d(y,y') \leq \eps\delta$ as both belong to $\gamma$, so a linear interpolation sends every point in $x'y'$ to a point in $x'y$ within a distance $\eps\delta$.  Therefore, $x'y \cap (c \oplus B_{\eps\delta}) \not= \emptyset$.  Similarly, as $d(x,x') \leq 4\eps\delta$, we get $xy \cap (c \oplus B_{5\eps\delta}) \not= \emptyset$.  This proves (ii).   
		
		Assume to the contrary that (iii) is false for $\lambda = 11\eps\delta$.  Then, there exists $x \in \xi$ and $y \in \gamma$ such that when we walk from $x$ to $y$, we hit a cell $c \in C_{\gamma,\xi}$ strictly before reaching $c_{\gamma,\xi} \oplus B_{11\eps\delta}$.  It follows that if we shrink $c_{\gamma,\xi} \oplus B_{11\eps\delta}$ by a distance $\eps\delta$, it becomes disjoint from $c$.  Also, $c_{\gamma,\xi} \oplus B_{10\eps\delta}$ intersects $xy$ by (ii).  That is, $c$ and $c_{\gamma,\xi} \oplus B_{10\eps\delta}$ are disjoint, and $c \cap xy$ strictly precedes $xy \cap (c_{\gamma,\xi} \oplus B_{10\eps\delta})$ with respect to $\leq_{xy}$.    The disjointness of $c$ and $c_{\gamma,\xi} \oplus B_{10\eps\delta}$ implies that $c \oplus B_{5\eps\delta}$ and $c_{\gamma,\xi} \oplus B_{5\eps\delta}$ are disjoint.  Both $c \oplus B_{5\eps\delta}$ and $c_{\gamma,\xi} \oplus B_{5\eps\delta}$ intersect $xy$ by (i).  As a result, $xy \cap (c \oplus B_{5\eps\delta})$ strictly precedes $xy \cap (c_{\gamma,\xi} \oplus B_{5\eps\delta})$ with respect to $\leq_{xy}$.

		Move the destination of the walk linearly from $y$ to $p_\gamma$.  By (ii), the intersections with $c \oplus B_{5\eps\delta}$ and $c_{\gamma,\xi} \oplus B_{5\eps\delta}$ remain non-empty throughout the move.  Therefore, 
		%Although the intersection with $c$ may become empty during the move, the intersection with $c \oplus B_{\eps\delta}$ is non-empty throughout the move because $d(y,p_\gamma) < \eps\delta$.  Since $c$ and $c_{\gamma,\xi} \oplus B_{6\eps\delta}$ are disjoint, so are $c \oplus B_{\eps\delta}$ and $c_{\gamma,\xi} \oplus B_{5\eps\delta}$.  Therefore, since none of the intersections with $c\oplus B_{\eps\delta}$ and $c_{\gamma,\xi} \oplus B_{5\eps\delta}$ becomes empty during the move, 
		the intersection order of $c\oplus B_{5\eps\delta}$ and $c_{\gamma,\xi} \oplus B_{5\eps\delta}$ cannot change throughout the move.   Similarly, we move the start of the walk linearly from $x$ to $p_\xi$, and we can conclude that the intersection order of $c\oplus B_{5\eps\delta}$ and $c_{\gamma,\xi} \oplus B_{5\eps\delta}$ does not change throughout the move.  But then we should have preferred $c$ to be $c_{\gamma,\xi}$, a contradiction.
		
		Once we have established that (iii) holds for $\lambda = 11\eps\delta$, it also holds for larger values of $\lambda$ because $c_{\gamma,\xi} \oplus B_\lambda$ can only become bigger.
	\end{proof}
}

	\subsection{Answering a query}
	\label{sec:segment-query}
	
	Given an oriented segment $w_jw_{j+1}$ of the query curve $\sigma$, we answer the $(11\eps\delta)$-segment query with $w_jw_{j+1}$ on $\mathcal{G}_1$ by the following steps.
	\begin{itemize}
		\item Step~1: We query $D_\mathrm{anp}$ with $\aff(w_jw_{j+1})$ to report a grid vertex $x$ of $\mathcal{G}_1$.   This takes $\tilde{O}((mn)^{0.5+\eps}/\eps^d)$ time.  
		
		\vspace{6pt}
		
		\item Step~2: We check the distance $d(x,\aff(w_jw_{j+1}))$.  If $d(x,\aff(w_jw_{j+1})) > (1+\eps)\eps\delta$, then $\aff(w_jw_{j+1})$ is at distance more than $\eps\delta$ from the closest grid vertex of $\mathcal{G}_1$, which implies that $\aff(w_jw_{j+1})$ does not intersect any cell in $\mathcal{G}_1$.  In this case, we report null.
		
		We also check the distances $d(w_j,\mathcal{L})$ and $d(w_{j+1},\mathcal{L})$.  We query $D_\mathrm{anl}$ with $w_j$ in $\tilde{O}(1)$ time to find a line $\ell_j \in \mathcal{L}$.  If $d(w_j,\ell_j) > 2\eps\delta$, then $d(w_j,\mathcal{L}) > \eps\delta$, which implies that $w_j$ is at distance farther than $\delta$ from $\aff(\tau_{i,a})$ for any $\tau_i \in T$ and any $a \in [m-1]$.  As a result, $d_F(\sigma,\tau_i) > \delta$ for all $\tau_i \in T$, so we report ``no'' for the $(\kappa,\delta)$-ANN query. Analogously, we query $D_\mathrm{anl}$ with $w_{j+1}$ in $\tilde{O}(1)$ time to find a line $\ell_{j+1} \in \mathcal{L}$.  If $d(w_{j+1},\ell_{j+1}) > 2\eps\delta$, we report ``no'' for the $(\kappa,\delta)$-ANN query.
		
		\vspace{6pt}
	
		\item Step~3: Suppose that $d(x,\aff(w_jw_{j+1})) \leq (1+\eps)\eps\delta$, $d(w_j,\ell_j) \leq 2\eps\delta$, and $d(w_{j+1},\ell_{j+1}) \leq 2\eps\delta$.  Then, we check the cells in $G(x \oplus B_{2\eps\delta})$ in $O(\eps^{-d})$ time to find one that intersects $\aff(w_jw_{j+1})$.  Let $\gamma$ be this cell.   We do not know if $\gamma$ belongs to $\mathcal{G}_1$ or not.  Nevertheless, since $x$ is a grid vertex of $\mathcal{G}_1$, $\gamma$ is within a distance $(1+3\eps)\delta$ from some input curve vertex.  Therefore, $\gamma$ must be a cell in $\mathcal{G}_3$.  There are three cases depending on the relative positions of $w_j$ and $\gamma$.
	
		\vspace{-10pt}
	
		\begin{itemize}
		\item Step~3(a): $w_j \in \gamma \cap\aff(w_jw_{j+1})$.  We claim that $\mathcal{G}_1 \cap G(w_j \oplus B_{7\eps\delta})$ is non-empty, and we report an arbitrary cell in it as the answer for the $(11\eps\delta)$-segment query.  This step takes $O(\eps^{-d})$ time.
		
		\vspace{4pt}
		
		\item Step~3(b): $w_j$ precedes $\gamma \cap \aff(w_jw_{j+1})$ along $\aff(w_jw_{j+1})$ oriented from $w_j$ to $w_{j+1}$.  We query $T_{\ell_j}$ to find the canonical segment $\xi \subset \ell_j$ that contains the projection of $w_j$ in $\ell_j$.  Then, we query $T_\xi$ with $\gamma$ to return $c_{\gamma,\xi}$ as the answer for the $(11\eps\delta)$-segment query.  The time needed is $O(\log\frac{mn}{\eps})$.
		
		%$N(c_{\gamma,\xi})$.   We check all cells in $N(c_{\gamma,\xi})$ in $O(\eps^{-d})$ time to find the one, $c$, such that $c \cap w_jw_{j+1} \not= \emptyset$ and $c \cap w_jw_{j+1}$ is nearest to $w_j$.  If such a cell $c$ can be found, we report it as the answer for the $(4\eps\delta)$-segment query; otherwise, we report ``no''.  The total time is $\tilde{O}(\eps^{-d})$.
		
		\vspace{4pt}
		
		\item Step~3(c): $\gamma \cap \aff(w_jw_{j+1})$ precedes $w_j$ along $\aff(w_jw_{j+1})$ oriented from $w_j$ to $w_{j+1}$.  We query $T_{\ell_{j+1}}$ to find the canonical segment $\xi \subset \ell_{j+1}$ that contains the projection of $w_{j+1}$ in $\ell_{j+1}$.  Then, we query $T_\xi$ with $\gamma$ to obtain $c_{\gamma,\xi}$.   
		%If $c_{\gamma,\xi} \oplus B_{4\eps\delta}$ does not intersect $w_jw_{j+1}$,  we report ``no'' as the answer for the $(4\eps\delta)$-segment query.  Suppose that $c_{\gamma,\xi} \oplus B_{4\eps\delta}$ intersects $w_jw_{j+1}$. 
		We claim that $G(c_{\gamma,\xi} \oplus B_{5\eps\delta}) \subset \mathcal{G}_3$ and some cell in  $G(c_{\gamma,\xi} \oplus B_{5\eps\delta})$ intersects $w_jw_{j+1}$.  Pick one such cell $\hat{\gamma}$ in $O(\eps^{-d})$ time.  Either step~3(a) or~3(b) is applicable with $\gamma$ replaced by $\hat{\gamma}$.  Whichever case is applicable, we jump to that case with $\gamma$ replaced by $\hat{\gamma}$ to return an answer for the $(11\eps\delta)$-segment query.  The time needed is $\tilde{O}(\eps^{-d})$.
		
	\end{itemize}
	\end{itemize}
	
	%The following lemma summarizes the query time.
	
	\begin{lemma}
	\label{lem:Q-seg}
		It takes $\tilde{O}((mn)^{0.5+\eps}/\eps^d)$ time to answer a $(11\eps\delta)$-segment query.
	\end{lemma}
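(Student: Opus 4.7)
The plan is to add up the per-step running times of the query algorithm in Section~\ref{sec:segment-query} and observe that the approximate-nearest-point-to-line query in Step~1 dominates.

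First I would account for Step~1. The grid vertices of $\mathcal{G}_1$ number $O(2^d|\mathcal{G}_1|) = O(mn/\eps^d)$, treating $d$-only factors as $O(1)$. Feeding this point set to the Andoni et~al.\ data structure $D_\mathrm{anp}$ with parameter $\eps$, a query with $\aff(w_jw_{j+1})$ takes $\tilde{O}\bigl((mn/\eps^d)^{0.5+\eps}\bigr) = \tilde{O}((mn)^{0.5+\eps}/\eps^d)$, since $d(0.5+\eps) \le d$ for $\eps \le 0.5$ and the remaining $\eps^{d\eps}$ factor is absorbed into the $\tilde{O}$. For Step~2, each of the two nearest-line lookups on $D_\mathrm{anl}$ with $w_j$ and $w_{j+1}$ costs $\tilde{O}(1)$, and the distance comparisons to $\aff(w_jw_{j+1})$, $\ell_j$, and $\ell_{j+1}$ are $O(1)$-time arithmetic.

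Next I would handle the three branches of Step~3, each of which is cheap compared to Step~1. In Step~3(a), the scan of $G(w_j \oplus B_{7\eps\delta})$ examines $O(\eps^{-d})$ cells. In Step~3(b), the interval-tree lookup in $T_{\ell_j}$ and the dictionary lookup in $T_\xi$ each cost $O(\log\frac{mn}{\eps})$. In Step~3(c), analogous $O(\log\frac{mn}{\eps})$-time lookups give $c_{\gamma,\xi}$, after which the $O(\eps^{-d})$ cells of $G(c_{\gamma,\xi}\oplus B_{5\eps\delta})$ are scanned to select $\hat{\gamma}$ crossing $w_jw_{j+1}$, and the procedure falls through to Step~3(a) or Step~3(b). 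So Step~3 costs at most $\tilde{O}(\eps^{-d})$ in every branch.

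Summing the three step budgets yields $\tilde{O}((mn)^{0.5+\eps}/\eps^d) + \tilde{O}(1) + \tilde{O}(\eps^{-d}) = \tilde{O}((mn)^{0.5+\eps}/\eps^d)$, which is the claimed bound. The only subtlety, which is not a timing issue, is ensuring that the scans in Steps~3(a) and~3(c) actually encounter a cell of $\mathcal{G}_1$ (respectively of $\mathcal{G}_3$) meeting the current segment; these geometric non-emptiness claims are established alongside the correctness of the algorithm (using Lemma~\ref{lem:C} and the packing guarantees for $\mathcal{L}$) and do not affect the time accounting.
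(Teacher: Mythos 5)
Your proposal is correct and follows the same step-by-step tally the paper gives inline in Section~\ref{sec:segment-query} (the paper does not write a separate proof of this lemma; it simply states the per-step costs as the query procedure is described), so the approaches coincide. One small nit: in Step~1, once you observe $d(0.5+\eps)\le d$ (so $\eps^{-d(0.5+\eps)}\le\eps^{-d}$ for $\eps<1$) there is no leftover $\eps^{d\eps}$ factor to ``absorb into $\tilde{O}$''---and in any case $\tilde{O}$ hides polylogarithmic factors, not powers of $\eps$---but this does not affect the bound.
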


Lemmas~\ref{lem:D-seg} and \ref{lem:Q-seg} gives the performance of the $(11\eps\delta)$-segment query data structure in Lemma~\ref{lem:segment}.   In Appendix~\ref{app:segment}, we prove the query output correctness in Lemma~\ref{lem:segment}.

\cancel{
\begin{lemma}
	\label{lem:Q-correct}
	The query procedure either discovers that $\min_{\tau_i \in T} d_F(\sigma,\tau_i) > \delta$ or returns a correct answer for the $(11\eps\delta)$-segment query.
\end{lemma}

	\subsection{Correctness}
	
	Step~2 works correctly according to the output requirements of the $(11\eps\delta)$-segment query and the $(1+O(\eps),\delta)$-ANN query.  It remains to show that step~3 returns the correct answer.  We first show two technical results.  The first one bounds the distance between $\gamma$ in step~3 and the input vertices.

	\begin{lemma}\label{lem:gamma}
		The cell $\gamma$ in step~3 intersects $\aff(w_jw_{j+1})$ and is within a distance $(1+6\eps)\delta$ from some input vertex.  Hence, $\gamma \in \mathcal{G}_3$.
	\end{lemma}
	\begin{proof}
		Suppose that $\gamma$ is the cell in $\mathcal{G}_3$ computed in step~3 when we come from step~2.  The grid vertex $x$ returned by $D_\mathrm{anp}$ in step~1 is within a distance $(1+\eps)\eps\delta$ from $\aff(w_jw_{j+1})$.   So some cell in $G(x \oplus B_{2\eps\delta})$ must intersect $\aff(w_jw_{j+1})$.  When we come to step~3 from step~2, we must obtain the cell $\gamma \in G(x \oplus B_{2\eps\delta})$ that intersects $\aff(w_jw_{j+1})$.   As $x$ is a grid vertex of $\mathcal{G}_1$, the distance between $\gamma$ and some input vertex is at most $(1+3\eps)\delta$.
		
		The other possibility is that $\gamma$ is $\hat{\gamma}$ obtained in step~3(c) before we jump to step~3(a) or 3(b).  To avoid confusion, we use $\gamma$ to denote the cell in $\mathcal{G}_3$ computed when we come to step~3 from step~2.
		%in step~3 before branching into step~3(a), 3(b), or 3(c).  
		By definition, $c_{\gamma,\xi}$ belongs to $\mathcal{G}_1$ which means that $c_{\gamma,\xi}$ is within a distance $\delta$ from some input vertex $v_{i,a}$.  As $\hat{\gamma} \in G(c_{\gamma,\xi} \oplus B_{5\eps\delta})$, we have $d(v_{i,a},\hat{\gamma}) \leq (1+6\eps)\delta$.  Hence, $\hat{\gamma} \in G(v_{i,a} \oplus B_{(1+6\eps)\delta}) \subset \mathcal{G}_3$.
	\end{proof}

	\cancel{
	The second result says that if a point $y$ is close enough to a line $\ell \in \mathcal{L}$ and $\gamma$ is the cell in step~3, then any segment that connects $y$ to $\gamma$ must be near all cells in $C_{\gamma,\xi}$, where $\xi$ is the canonical segment that contains the projection of $y$ in $\ell$.
	
	\begin{lemma}\label{lem: g-psi}
		Let $\gamma$ be the cell in $\mathcal{G}_3$.  Let $z$ be any point in $\gamma$.  For any point $y \in \real^d$ and any line $\ell \in \mathcal{L}$,  if $d(y,\ell) \leq 2\eps\delta$, then $yz$ intersects $c \oplus B_{4\eps\delta}$ for all $c \in C_{\gamma,\xi}$, where $\xi$ is the canonical segment on $\ell$ that contains the projection of $y$.
	\end{lemma}
	\begin{proof}
		Let $y'$ be the projection of $y$ in $\ell$.  By the definition of canonical segments, we have $\xi = \bigcap_{c \in C_{\gamma,\xi}} (\ell \oplus B_{2\eps\delta}) \cap F(c,\gamma)$.  Therefore, for every $c \in C_{\gamma,\xi}$, there exists $x_c \in F(c,\gamma)$ such that $d(x_c,y') \leq 2\eps\delta$.  By the definition of $F(c,\gamma)$, there exists $z_c \in \gamma$ such that $x_c z_c$ intersects $c$.  Since both $z$ and $z_c$ belong to $\gamma$, we have $d(z,z_c) \leq \eps\delta$.  Also, $d(x_c,y) \leq d(x_c,y') + d(y,y') \leq 4\eps\delta$.  As $x_c z_c \cap c \not= \emptyset$, we conclude that $yz \cap (c \oplus B_{4\eps\delta}) \not= \emptyset$.
	\end{proof}
}

	The second result says that $F(\gamma,c)$ and $F(c,\gamma)$ are disjoint if $\gamma$ and $c$ are.
	
	\begin{lemma}\label{lem:disjoint}
		If $\gamma \cap c =\emptyset$, then $F(\gamma, c)\cap F(c, \gamma)=\emptyset$.
	\end{lemma}
	\begin{proof}
		Assume to the contrary that $F(\gamma,c)\cap F(c, \gamma)\not=\emptyset$.  Take a point $x \in F(\gamma,c)\cap F(c, \gamma)$.  By the definition of $F(c,\gamma)$, we can find a point $y \in \gamma$ such that $xy \cap c \not=\emptyset$.  Analogously, by the definition of $F(\gamma,c)$, we can find another point $z \in c$ such that $xz \cap \gamma \not= \emptyset$.  Take two points $y' \in xy \cap c$ and $z'\in xz\cap\gamma$.  Observe that $x, y, z, y', z'$ lie in the same plane spanned by $xy$ and $xz$.  So $yz' \cap y'z \not= \emptyset$. Since $c$ is convex and $y',z \in c$, we have $y'z \subset c$.  Analogously, we have $yz' \subset \gamma$.  But then $yz' \cap y'z \subseteq c \cap \gamma$, contradicting the assumption that $c \cap \gamma = \emptyset$.
	\end{proof}
	
	\cancel{
	The fourth result says that, in step~3(b), if a point $y$ is close enough to a line $\ell \in \mathcal{L}$, we can find a cell in $N(c_{\gamma,\xi})$ that satisfies the requirement of the $(4\eps\delta)$, where $\xi$ is the canonical segment in $\ell$ that contains the projection of $y$ in $\ell$.
		
	\begin{lemma}\label{lem: g-psi-neighborhood}
		Let $\xi$ be a canonical segment.  Let $\gamma$ be a cell in $\mathcal{G}_3$.   Let $x$ be any point in $\xi \oplus B_{2\eps\delta}$ that projects orthogonally to $\xi$.  Let $y$ be any point $\gamma$.  There exists a cell $c' \in N(c_{\gamma,\xi})$ such that  $xy \cap (c' \oplus B_{4\eps\delta}) \not= \emptyset$, and if $c$ is a cell in $\mathcal{G}_1$ that intersects $xy$, then $xy \cap (c' \oplus B_{4\eps\delta})$does not follow $xy \cap c$ with respect to $\leq_{xy}$.
	\end{lemma}
	\begin{proof}
		If no cell in $\mathcal{G}_1$ intersects $xy$, there is nothing to prove.  Assume not.  Take the cell $c \in \mathcal{G}_1$ that intersects $xy$ such that $xy \cap (c \oplus B_{4\eps\delta})$ does not follow $xy \cap c''$ for any $c'' \in \mathcal{G}_1$ that intersects $xy$.
		
		By Lemma~\ref{lem:gamma}, $\gamma \in \mathcal{G}_3$.  Also, since  $xy \cap c \not= \emptyset$, we have $x \in F(c,\gamma)$.  Together with the assumption that $x \in \xi \oplus B_{2\eps\delta}$ and $x$ projects orthogonally to $\xi$, we obtain $\xi \subseteq (\aff(\xi) \oplus B_{2\eps\delta}) \cap F(c,\gamma)$.  So $C_{\gamma,\xi}$ is well defined and $c \in C_{\gamma,\xi}$.
		
		Next, we distinguish two cases. If $c \in N(c_{\gamma,\xi})$, we are done because we can let $c'$ be $c$. Suppose that $c \not\in N(c_{\gamma,\xi})$, which means that  $d(c_{\gamma,\xi},c) > 8\eps\delta$.  It follows that $(c_{\gamma,\xi} \oplus B_{4\eps\delta})\cap (c \oplus B_{4\eps\delta}) = \emptyset$.
		
		Let $Z_{\gamma,\xi} = \{ z \in c_{\gamma,\xi} \oplus B_{(1+\eps)\eps\delta} : \mbox{$z$ projects to $\xi$}\}$.  Any line that intersects $\gamma$ and $Z_{\gamma,\xi}$ intersects $c_{\gamma,\xi} \oplus B_{4\eps\delta}$ as $Z_{\gamma,\xi} \subset c_{\gamma,\xi} \oplus B_{4\eps\delta}$.  By Lemma~\ref{lem: g-psi}, any line that intersects $\gamma$ and $Z_{\gamma,\xi}$ must also intersect $c \oplus B_{4\eps\delta}$.  This implies that $\gamma$ and $Z_{\gamma,\xi}$ are both subsets of $F(c_{\gamma,\xi} \oplus B_{4\eps\delta}, c \oplus B_{4\eps\delta})\cup F(c \oplus B_{4\eps\delta}, c_{\gamma,\xi} \oplus B_{4\eps\delta})$. Since we have already shown that $(c_{\gamma,\xi} \oplus B_{4\eps\delta})\cap (c \oplus B_{4\eps\delta}) = \emptyset$, Lemma~\ref{lem:disjoint} implies that $F(c_{\gamma,\xi} \oplus B_{4\eps\delta}, c \oplus B_{4\eps\delta})\cap F(c \oplus B_{4\eps\delta}, c_{\gamma,\xi} \oplus B_{4\eps\delta}) = \emptyset$.
	
		According to the definition of $c_{\gamma,\xi}$, as we walk from $p_\xi$ to $p_\gamma$, we hit $c_{\gamma,\xi} \oplus B_{4\eps\delta}$ no later than $c \oplus B_{4\eps\delta}$.  Therefore, $p_\gamma \in F(c \oplus B_{4\delta}, c_{\gamma,\xi} \oplus B_{4\delta})$ and $p_\xi \in F(c_{\gamma,\xi} \oplus B_{4\eps\delta}, c \oplus B_{4\eps\delta})$.  We have shown that $F(c_{\gamma,\xi} \oplus B_{4\eps\delta}, c\oplus B_{4\eps\delta})$ does not share any point with $F(c \oplus B_{4\eps\delta}, c_{\gamma,\xi} \oplus B_{4\eps\delta})$.  As a result, $p_\xi p_\gamma \cap (c_{\gamma,\xi} \oplus B_{4\eps\delta}) \subseteq p_\xi p_\gamma \cap F(c_{\gamma,\xi} \oplus B_{4\eps\delta}, c \oplus B_{4\eps\delta})$ precedes $p_\xi p_\gamma \cap c \subseteq p_\xi p_\gamma \cap F(c \oplus B_{4\delta}, c_{\gamma,\xi} \oplus B_{4\delta})$ with respect to $\leq_{p_\xi p_\gamma}$.  Changing from $p_\xi p_\gamma$ to $xy$ cannot change the ordering, so $xy \cap (c_{\gamma,\xi} \oplus B_{4\eps\delta}) \subseteq xy \cap F(c_{\gamma,\xi} \oplus B_{4\eps\delta}, c \oplus B_{4\eps\delta})$ precedes $xy \cap c \subseteq xy \cap F(c \oplus B_{4\delta}, c_{\gamma,\xi} \oplus B_{4\delta})$ with respect to $\leq_{xy}$.  So we can let $c'$ be $c_{\gamma,\xi}$.
	\end{proof}

}
	
	Now, we are ready to show that the query procedure returns a correct answer for the $(11\eps\delta)$-segment query.
	
	\begin{lemma}
		\label{lem:Q-correct}
		The query procedure either discovers that $\min_{\tau_i \in T} d_F(\sigma,\tau_i) > \delta$ or returns a correct answer for the $(11\eps\delta)$-segment query.
	\end{lemma}
	
	\begin{proof}
		We first argue that if the query procedure returns a cell, it satisfies the requirement of a correct answer for the $(11\eps\delta)$-segment query.  A cell $c_\mathrm{ans}$ is returned either in step~3(a) or 3(b).  If $c_\mathrm{ans}$ is returned in step~3(a), then $c_\mathrm{ans} \oplus B_{7\eps\delta}$ contains $w_j$ by construction, so $c_\mathrm{ans}$ must be a correct answer.  If $c_\mathrm{ans}$ is returned in step~3(b), it is equal to $c_{\gamma,\xi}$.  By the precondition of step~3, $w_j$ projects to $\xi$ and $d(w_j,\xi) \leq 2\eps\delta$.  Therefore, by Lemma~\ref{lem:C}(i)~and~(iii), $c_{\gamma,\xi}$ is the correct answer for the $(11\eps\delta)$-segment query.
		
		It remains to show that if there is a cell in $\mathcal{G}_1$ that intersects $w_jw_{j+1}$, the query procedure either discovers that $\min_{\tau_i\in T} d_F(\sigma,\tau_i) > \delta$ or returns a cell.  Since $\mathcal{G}_1$ contains a cell that intersects $w_jw_{j+1}$ by assumption, $\aff(w_jw_{j+1})$ is within a distance $\eps\delta$ from some grid vertices of $\mathcal{G}_1$, so $D_\mathrm{anp}$ must return a grid vertex $x$ at distance $(1+\eps)\eps\delta$ or less from $\aff(w_jw_{j+1})$.  
		%Moreover, $x$ is within a distance $(1+\eps)\delta$ from some input vertex, say $v_{i,a}$.   So $w_j$ is within a distance $(1+2\eps)\delta$ from $\aff(\tau_{i,a})$.  It implies that there is a line in $\mathcal{L}$ parallel to $\aff(\tau_{i,a})$ at distance $\eps\delta$ or less from $w_j$.  Therefore, $D_\mathrm{anl}$ must return a line $\ell_j \in \mathcal{L}$ at distance less than $2\eps\delta$ from $w_j$.  Similarly, $D_\mathrm{anl}$ must return a line $\ell_{j+1} \in \mathcal{L}$ at distance less than $2\eps\delta$ from $w_{j+1}$.  
		As a result, step~2 cannot return null.  If step~2 returns ``no'' as the answer to the ANN query, it must be the case that $\min_{\tau_i \in T} d_F(\sigma,\tau_i) > \delta$.  The remaining possibility is that the query procedure proceeds to step~3.
		
		Suppose that step~3(a) is applicable.  In this case, $w_j \in \gamma$, and by Lemma~\ref{lem:gamma}, $\gamma$ is within a distance $(1+6\eps)\delta$ from some input vertex $v_{i,a}$.   Therefore, $w_j$ is at distance $7\eps\delta$ or less from some cell in $G(v_{i,a} \oplus B_\delta) \subset \mathcal{G}_1$.  Hence, step~3(a) will succeed in returning one in $G(w_j \oplus B_{7\eps\delta})$.
		
		If  step~3(b) is applicable, then $c_{\gamma,\xi}$ will be returned.
	
		Suppose that step~3(c) is applicable.  Let $\xi$ be the canonical segment in $\ell_{j+1}$ that contains the projection of $w_{j+1}$ in $\ell_{j+1}$  By Lemma~\ref{lem:C}(ii), $c_{\gamma,\xi} \oplus B_{5\eps\delta}$ intersects $w_{j+1}w_j$.  Since $c_{\gamma,\xi}$ belongs to $\mathcal{G}_1$, it is within a distance $\delta$ from some input vertex $v_{i,a}$, which implies that $c_{\gamma,\xi} \oplus B_{5\eps\delta}$ is covered by the cells in $G(v_{i,a} \oplus B_{(1+6\eps)\delta}) \subset \mathcal{G}_3$.  Therefore, step~3(c) will succeed in finding a cell $\hat{\gamma} \in \mathcal{G}_3$ that intersects $w_jw_{j+1}$.  So either step~3(a) or 3(b) is applicable for $\hat{\gamma}$, and the query procedure will return a cell.
	\end{proof}
}

\cancel{
The results in this section lead to the following lemma.

Combining Lemma~\ref{lem:segment} with Lemmas~\ref{lem:ANN} and~\ref{lem:3ANN} and Theorem~\ref{thm:reduce} gives the following theorem.

\begin{theorem}
	Let $T$ be a set of $n$ polygonal curves in $\real^d$ for $d \geq 4$, each containing at most $m$ vertices.   Let $k \geq 3$ be the given maximum number of vertices in any query curve.   Let $\eps$ be any value in $(0,0.5)$.  For $\kappa \in \{1,3\}$, there are  $(\kappa+O(\eps))$-ANN data structures for $T$ under the Fr\'{e}chet distance with the following performance guarantees.
	\begin{itemize}
		\item $\kappa = 1$: 
		\begin{itemize}
			\item query time $= \tilde{O}(k(mn)^{0.5+\eps}/\eps^d) + O(\sqrt{d}/\eps)^{2d(k-2)}k\log \frac{mn}{\eps}\log n$,
			\item space $= O(\sqrt{d}/\eps)^{4d(k-1)}(mn)^{4(k-1)}(k\log^2 n)/\eps + O(\sqrt{d}/\eps)^{O(d/\eps^2)} \cdot \tilde{O}((mn)^{O(1/\eps^2)})$,
			\item  expected preprocessing time $= O(\sqrt{d}/\eps)^{4d(k-1)}(mn)^{4(k-1)}(k\log\frac{mn}{\eps} + m^k\log m)n\log n + O(\sqrt{d}/\eps)^{O(d/\eps^2)} \cdot \tilde{O}((mn)^{O(1/\eps^2)})$.
		\end{itemize}

		\item $\kappa = 3$: 
		\begin{itemize}
			\item query time $= \tilde{O}(k(mn)^{0.5+\eps}/\eps^d)$,
			\item space $= O(\sqrt{d}/\eps)^{2d(k-1)}(mn)^{2(k-1)}(k\log^2 n)/\eps +  O(\sqrt{d}/\eps)^{O(d/\eps^2)} \cdot \tilde{O}((mn)^{O(1/\eps^2)})$,
			\item expected preprocessing time $= O(\sqrt{d}/\eps)^{2d(k-1)}(mn)^{2k-1}kn\log\frac{mn}{\eps}\log n \,\, +$ \\ $O(\sqrt{d}/\eps)^{O(d/\eps^2)} \cdot \tilde{O}((mn)^{O(1/\eps^2)})$.
		\end{itemize}
	\end{itemize}
\end{theorem}
}

\section{Conclusion}

We present $(1+\eps)$-ANN and $(3+\eps)$-ANN data structures that achieve sublinear query times without having space complexities that are proportion to $\min\{m^{\Omega(d)},n^{\Omega(d)}\}$ or exponential in $\min\{m,n\}$.  The query times are $\tilde{O}(k(mn)^{0.5+\eps}/\eps^{O(d)} + k(d/\eps)^{O(dk)})$ for $(1+\eps)$-ANN and $\tilde{O}(k(mn)^{0.5+\eps}/\eps^{O(d)})$ for $(3+\eps)$-ANN.  In two and three dimensions, the query times can be improved to $\tilde{O}(k/\eps^{O(k)})$ for $(1+\eps)$-ANN and $\tilde{O}(k)$ for $(3+\eps)$-ANN.  It is an open problem is to lower the exponential dependence on $d$ and $k$.

\newpage

\bibliography{ref.bib}

\begin{thebibliography}{10}

\bibitem{AM1993}
P.K. Agarwal and J.~Matou\v{s}ek.
\newblock Ray shooting and parametric search.
\newblock {\em SIAM Journal on Computing}, 22(4):794--806, 1993.

\bibitem{agarwal2017approximate}
P.K. Agarwal, N.~Rubin, and M.~Sharir.
\newblock Approximate nearest neighbor search amid higher-dimensional flats.
\newblock In {\em Proceedings of the European Symposium on Algorithms}, pages
  4:1--4:13, 2017.

\bibitem{AG1995}
H.~Alt and M.~Godau.
\newblock Computing the {F}r\'{e}chet distance between two polygonal curves.
\newblock {\em International Journal of Computational Geometry and
  Applications}, 5:75--91, 1995.

\bibitem{andoni2009approximate}
A.~Andoni, P.~Indyk, R.~Krauthgamer, and H.L. Nguyen.
\newblock Approximate line nearest neighbor in high dimensions.
\newblock In {\em Proceedings of the ACM-SIAM Symposium on Discrete
  Algorithms}, pages 293--301, 2009.

\bibitem{BDNP2022}
K.~Bringmann, A.~Driemel, A.~Nusser, and I.~Psarros.
\newblock Tight bounds for approximate near neighbor searching for time series
  under {F}r\'{e}chet distance.
\newblock In {\em Proceedings of the ACM-SIAM Symposium on Discrete
  Algorithms}, pages 517--550, 2022.

\bibitem{cheng2022curve}
S.-W. Cheng and H.~Huang.
\newblock Curve simplification and clustering under {F}r\'{e}chet distance.
\newblock In {\em Proceedings of the ACM-SIAM Symposium on Discrete
  Algorithms}, pages 1414--1432, 2023.

\bibitem{CLRS}
T.H. Cormen, C.E. Leiserson, R.L. Rivest, and C.~Stein.
\newblock {\em Introduction to Algorithms}.
\newblock MIT Press, second edition, 2001.

\bibitem{BHO1994}
M.~de~Berg, D.~Halperin, M.~Overmars, J.~Snoeyink, and M.~van Kreveld.
\newblock Efficient ray shooting and hidden surface removal.
\newblock {\em Algorithmica}, 12:30--53, 1994.

\bibitem{driemel2016clustering}
A.~Driemel, A.~Krivo{\v{s}}ija, and C.~Sohler.
\newblock Clustering time series under the {F}r{\'e}chet distance.
\newblock In {\em Proceedings of the ACM-SIAM Symposium on Discrete
  Algorithms}, pages 766--785, 2016.

\bibitem{DP2020}
A.~Driemel and I.~Psarros.
\newblock $(2+\epsilon)$-{ANN} for time series under the {F}r\'echet distance.
\newblock {\em arXiv preprint arXiv:2008.09406v5}, 2021.

\bibitem{driemel2017locality}
A.~Driemel and F.~Silvestri.
\newblock Locality-sensitive hashing of curves.
\newblock In {\em Proceedings of the International Symposium on Computational
  Geometry}, pages 37:1--37:16, 2017.

\bibitem{EM1994}
T.~Eiter and H.~Mannila.
\newblock Computing discrete {F}r\'{e}chet distance.
\newblock Technical Report CD-TR 94/64, Christian Doppler Laboratory for Expert
  Systems, TU Vienna, Austria, 1994.

\bibitem{emiris2020products}
I.Z. Emiris and I.~Psarros.
\newblock Products of {E}uclidean metrics, applied to proximity problems among
  curves: Unified treatment of discrete fr{\'e}chet and dynamic time warping
  distances.
\newblock {\em ACM Transactions on Spatial Algorithms and Systems}, 6(4):1--20,
  2020.

\bibitem{filtser2020approximate}
A.~Filtser, O.~Filtser, and M.J. Katz.
\newblock Approximate nearest neighbor for curves: simple, efficient, and
  deterministic.
\newblock In {\em Proceedings of the International Colloquium on Automata,
  Languages, and Programming}, pages 48:1--48:19, 2020.

\bibitem{H2001}
S.~Har-Peled.
\newblock A replacement for {V}oronoi diagrams of near linear size.
\newblock In {\em Proceedings of the Annual IEEE Symposium on Foundations of
  Computer Science}, pages 94--103, 2001.

\bibitem{har2012approximate}
S.~Har-Peled, P.~Indyk, and R.~Motwani.
\newblock Approximate nearest neighbor: towards removing the curse of
  dimensionality.
\newblock {\em Theory of Computing}, 8:321--350, 2012.

\bibitem{indyk2002approximate}
P.~Indyk.
\newblock Approximate nearest neighbor algorithms for {F}r{\'e}chet distance
  via product metrics.
\newblock In {\em Proceedings of the Annual Symposium on Computational
  Geometry}, pages 102--106, 2002.

\bibitem{IM1998}
P.~Indyk and R.~Motwani.
\newblock Approximate nearest neighbor: towards removing the curse of
  dimensionality.
\newblock In {\em Proceedings of the Annual ACM Symposium on Theory of
  Computing}, pages 604--613, 1998.

\bibitem{mirzanezhad2020approximate}
Mirzanezhad M.
\newblock On the approximate nearest neighbor queries among curves under the
  {F}r\'{e}chet distance.
\newblock {\em arXiv preprint arXiv:2004.08444}, 2020.

\bibitem{SKS2003}
C.~Shahabi, M.~Kolahdouzan, and M.~Sharifzadeh.
\newblock A road network embedding technique for $k$-nearest neighbor search in
  moving object databases.
\newblock {\em GeoInformatica}, 7:255--273, 2003.

\bibitem{SR2001}
Z.~Song and N.~Roussopoulos.
\newblock {$K$}-nearest neighbor search for moving query point.
\newblock In {\em Proceedings of the International Symposium on Spatial and
  Temporal Databases}, pages 79--96, 2001.

\bibitem{TP2002}
Y.~Tao and D.~Papadias.
\newblock Parameterized queries in spatio-temporal databases.
\newblock In {\em Proceedings of ACM International Conference on Management of
  Data}, pages 334--345, 2002.

\end{thebibliography}

\appendix

\section{Proof of Lemma~\ref{lem:D}}

%\begin{proof}
	The performance analysis of $\mathcal{D}$ follows from the previous discussion.  Among the four necessary and sufficient conditions, (i), (iii), and (iv) follow directly from the first, third, and fourth tests.  We show below that the second test and condition (ii) are equivalent.  Clearly, if the second test succeeds, condition (ii) is satisfied.  It remains to analyze the other direction.  By condition (ii), $x''_j \leq_{x_jy_j} y''_j$, $x''_j \in v_{i,a_j} \oplus B_{(1+12\eps)\delta}$, and $y''_j \in v_{i,b_j} \oplus B_{(1+12\eps)\delta}$.  In the second test, we compute the minimum point $x'_j$ in $x_jy_j \cap  (v_{i,a_j} \oplus B_{(1+12\eps)\delta})$ and the maximum point $y'_j$ in $x_jy_j \cap (v_{i,b_j} \oplus B_{(1+12\eps)\delta})$ with respect to $\leq_{x_jy_j}$.  It follows that $x'_j \leq_{x_jy_j} x''_j \leq_{x_jy_j} y''_j \leq_{x_jy_j} y'_j$.  We extend the Fr\'{e}chet matching between $x''_jy''_j$ and $\tau_i[v_{i,a_j},v_{i,b_j}]$ to a matching $\mathcal{M}$ between $x'_jy'_j$ and $\tau_i[v_{i,a_j},v_{i,b_j}]$ by matching $v_{i,a_j}$ with all points in $x'_jx''_j$ and $v_{i,b_j}$ with all points in $y''_jy'_j$.  As $x'_j, x''_j \in v_{i,a_j} \oplus B_{(1+12\eps)\delta}$, convexity implies that $x'_jx''_j \subset v_{i,a_j} \oplus B_{(1+12\eps)\delta}$.  Similarly, $y''_jy'_j \subset v_{i,b_j} \oplus B_{(1+12\eps)\delta}$.  Therefore, $d_F(x'_jy'_j,\tau_i[v_{i,a_j},v_{i,b_j}]) \leq d_\mathcal{M}(x'_jy'_j,\tau_i[v_{i,a_j},v_{i,b_j}]) \leq (1+12\eps)\delta$.
%\end{proof}

\section{Proof of Lemma~\ref{lem:check}}

%\begin{proof}
	If $x_rx_s \cap (w_{r+1} \oplus B_{(1+\eps)\delta})$ or $x_rx_s \cap (w_s \oplus B_{(1+\eps)\delta})$ is empty, the required $x_r'$ and $x_s'$ do not exist.  Suppose not.
	Let $p$ be the minimum point in $x_rx_s \cap (w_{r+1} \oplus B_{(1+\eps)\delta})$ with respect to $\leq_{x_rx_s}$.  Let $q$ be the maximum point in $x_rx_s \cap (w_s \oplus B_{(1+\eps)\delta})$.   We can check $p \leq_{x_rx_s} q$ and compute $d_F(pq,\sigma[w_{r+1},w_s])$ in $O((s-r)\log(s-r))$ time.  We claim that there is a segment $x'_rx'_s \subseteq x_rx_s$ that satisfies the lemma if and only if $pq$ satisfies the lemma.  The reverse direction is trivial.  The forward direction can be proved in the same way as in the proof of Lemma~\ref{lem:D}.
%\end{proof}

\section{Proof of Lemma~\ref{lem:query-time}}
	
%\begin{proof}
	We spend $O(\log\frac{mn}{\eps})$ time to obtain $\mathcal{B}[1]$ and $\mathcal{A}[k-1]$.  Then,
	we spend $O(kQ_\text{seg})$ time to obtain $(u_{j,1},u_{j,2})$ for $j \in [k-1]$ and $O(k\log\frac{mn}{\eps})$ search time for each coarse encoding of $\sigma$.   There are $2^{k-3}$ combinations in setting $(c_{j,1},c_{j,2})$ to be $(u_{j,1},u_{j,2})$ or null for $j \in [2,k-2]$.  This gives $2^{k-3}$ possible $\mathcal{C}$'s.  By constraint~3(b), for $j \in [1,k-2]$, we have $\mathcal{A}[j] \in G(c_{j,2} \oplus B_{(1+11\eps)\delta})$, and for $j \in [2,k-1]$, we have $\mathcal{B}[j] \in G(c_{j,1} \oplus B_{(1+11\eps)\delta})$.  Therefore, for each $\mathcal{C}$ enumerated, there are $O(\sqrt{d}/\eps)^{2d(k-2)}$ ways to set $\mathcal{A}$ and $\mathcal{B}$.  In all, the total number of $(\mathcal{A},\mathcal{B},\mathcal{C})$'s enumerated and checked is $O(\sqrt{d}/\eps)^{2d(k-2)}2^{k-3} = O(\sqrt{d}/\eps)^{2d(k-2)}$.  It takes $O(k\log k)$ time to check each by Lemma~\ref{lem:check}.
%\end{proof}

\section{Proof of Lemma~\ref{lem: distance_bound}}

%\begin{proof}
	Suppose that $T_E \not= \emptyset$ as the lemma statement is vacuous otherwise.
	Take any $\tau_i \in T_E$.  We construct a matching $\mathcal{M}$ between $\tau_i$ and $\sigma$ such that $d_{\mathcal{M}}(\tau_i, \sigma)\le (1+24\eps)\delta$.  Since $T_E \not= \emptyset$, there exists a partition $(\pi_0,\ldots, \pi_{k-1})$ of the vertices of $\tau_i$ that satisfy Lemma~\ref{lem:D}(i)--(iv).
	
	We first match the vertices of $\tau_i$ to points on $\sigma$ as follows. We match $v_{i, 1}$ and $v_{i, m}$ to $w_1$ and $w_k$, respectively. According to constraint 2 and Lemma~\ref{lem:D}(iii), we have $d(v_{i, 1}, w_1)\le (1+\eps)\delta$ and $d(v_{i, m}, w_k)\le (1+\eps)\delta$.   By Lemma~\ref{lem:D}(ii), for $j \in [k-1]$, if $\pi_j\not=\emptyset$ and $\pi_j = \{v_{i,a_j},v_{i,a_j+1},\ldots,v_{a,b_j}\}$, there exist a segment $x_j''y_j''$ between vertices of $c_{j,1}$ and $c_{j,2}$ such that $v_{i, a_j},..., v_{i, b_j}$ can be matched to some points $p_{i,a_j},\ldots,p_{i,b_j} \in x''_jy''_j$ within a distance $(1+12\eps)\delta$ in order along $x_j''y_j''$.  We have $(c_{1,1},c_{1,2}) \not= \text{null}$ by definition, and Lemma~\ref{lem:D}(i) implies that if $j \in [2,k-1]$ and $\pi_j \not= \emptyset$, then $(c_{j,1},c_{j,2}) \not= \text{null}$.  Then,  constraint~1(b) ensure that if $j = 1$ or $\pi_j \not= \emptyset$, there exist two points $z_j \in w_jw_{j+1} \cap (c_{j, 1}\oplus B_{11\eps\delta})$ and $z_j' \in w_jw_{j+1} \cap (c_{j,2}\oplus B_{11\eps\delta})$ such that  $z_j \le_{w_jw_{j+1}} z'_j$.   As $x''_j$ and $y''_j$ are vertices of $c_{j,1}$ and $c_{j,2}$, respectively, we have $d(z_j,x''_j )\le 12\eps\delta$ and $d(z'_j,y''_j) \le 12\eps\delta$.  Therefore, a linear interpolation between $x''_jy''_j$ and $z_jz'_j$ sends $p_{i,a_j},\ldots,p_{i,b_j}$ to points $q_{i,a_j},\ldots, q_{i,b_j} \in z_jz'_j$ within a distance of $12\eps\delta$.  In all, for $l \in [a_j,b_j]$, we can match $v_{i,l}$ to $q_{i,l}$ and $d(v_{i,l},q_{i,l}) \leq d(v_{i,l},p_{i,l}) + d(p_{i,l},q_{i,l}) \leq (1+24\eps)\delta$.  This takes care of the matching of the vertices of $\tau_i$ to points on $\sigma$.
	
	Next, we match the vertices of $\sigma$ to points on $\tau_i$ as follows.  The vertices $w_1$ and $w_k$ have been matched with $v_{i,1}$ and $v_{i,m}$, respectively.  Take a vertex $w_j$ for any $j \in [2,k-1]$.   There is a unique $(r,s) \in \mathcal{J}$ such that $r < j \leq s$.
	%Let $\mathcal{J}$ contain all the pairs $(j_1, j_2)$ that satisfy $(c_{j_1,1}, c_{j_1, 2}) \not=$null, $(c_{j_2,1}, c_{j_2, 2})\not=$null, and $j_2=j_1+1$ or $(c_{j,1}, c_{j,2}) =$null for all $j_1<j<j_2$. 
	%Since $(c_{1,1}, c_{1,2})$ and $(c_{k-1,1}, c_{k-1,2})$ are not null, for every $j\in [k-1]\backslash\{1\}$, there must be a pair $(j_1, j_2)\in \mathcal{J}$ such that $j_1 < j\le j_2$.  
	If $r = 1$ and $\pi_1 = \emptyset$, let $b_1 = 1$; otherwise, let $b_r = \max\{ b : v_{i,b} \in \pi_r \}$.  By Lemma~\ref{lem:D}(iv), there exist two points $y_r \in \tau_{i,b_r} \cap \mathcal{A}[r]$ and $y_s \in \tau_{i,b_r} \cap \mathcal{B}[s]$ such that $y_r \leq_{\tau_{i,b_r}} y_s$.
	%According to property (c), there are two points $p \in \tau_{i', a'}\cap \mathcal{A}_2[j_1]$ and $q\in \tau_{i', a}\cap \mathcal{A}_1[j_2]$ such that $p\le_{\tau_{i'}} q$, where $v_{i', a'}$ is the last vertex in $\pi_{j_1}'$. 
	By constraint~3(c), there is a matching of $w_{r+1},..., w_{s}$ to points $z_{r+1},..., z_{s}$ in a segment $x_rx_s$ between vertices of $\mathcal{A}[r]$ and $\mathcal{B}[s]$ such that $z_{r+1}\le_{x_rx_s} z_{r+2}\le_{x_rx_s}\ldots\le_{x_rx_s}z_{s}$, and $d(z_l, w_l) \le (1+\eps)\delta$ for all $l \in [r+1,s]$.  A linear interpolation between $x_rx_s$ and $y_ry_s$ sends $z_{r+1},..., z_{s}$ to points in $\tau_{i,b_r}$ within a distance $\eps\delta$.  Combining this linear interpolation with the matching from $w_{r+1},..., w_{s}$ to $z_{r+1},..., z_{s}$ gives a matching from $w_{r+1},..., w_{s}$ to points in order on $\tau_{i,b_r}$ within a distance $(1+2\eps)\delta$.  This takes care of the matching of the vertices of $\sigma$ to points on $\tau_i$.

	So far, we have obtained pairs of vertices and their matching partners on $\tau_i$ and $\sigma$.  In $\sigma$, the vertices of $\sigma$ and the matching partners of the vertices of $\tau_i$ divide $\sigma$ into a sequence of line segments.  Similarly, the vertices of $\tau_i$ and the matching partners of the vertices of $\sigma$ divide $\tau_i$ into a sequence of line segments.  We complete the matching by linear interpolations between every pair of  corresponding segments on $\tau_i$ and $\sigma$.   The resulting distance is dominated by the distance bound $(1+24\eps)\delta$ of the matching of the vertices of $\tau_i$ to points in $\sigma$.
	%After determining the matching points for all vertices of $\tau_{i'}$ and $\sigma$, we finish constructing $\mathcal{M}'$ by using linear interpolation to match all other points between $\tau_{i'}$ and $\sigma$.  Linear interpolation ensures that the distance bound $(1+10\eps)\delta$ is maintained, i.e., $d_{\mathcal{M}'}(\tau_{i'}, \sigma)\le (1+10\eps)\delta$. Hence, the Fr\'echet distance between $\tau_{i'}$ and $\sigma$ is at most $(1+10\eps)\delta$. 
%\end{proof}

\section{Proof of Lemma~\ref{lem:query2}}
	
%\begin{proof}
	Let $\sigma_0$ be the polygonal curve produced for $\sigma$.  If the search in $\mathcal{D}$ returns $(\sigma_0,i)$, the construction of $\mathcal{D}$ guarantees that $d_F(\sigma_0,\tau_i) \leq (1+12\eps)\delta$.  Therefore, $d_F(\sigma,\tau_i) \leq d_F(\sigma,\sigma_0) + d_F(\sigma_0,\tau_i) \leq (3+24\eps)\delta$.  It remains to prove that if $\min_{\tau_i \in T} d_F(\sigma,\tau_i) \leq \delta$, then $d_F(\sigma,\sigma_0) \leq (2+12\eps)\delta$ and the search in $\mathcal{D}$ with $\sigma_0$ will succeed.
	
	Let $\tau_i$ be the curve in $T$ such that $d_F(\sigma,\tau_i) \leq \delta$.  Let $\mathcal{M}$ be a Fr\'{e}chet matching between $\sigma$ and $\tau_i$.   Let $((c_{j_r,1},c_{j_r,2}))_{r \in [k_0]}$ be the non-null cell pairs in the construction of $\sigma_0$ from $\sigma$.  By construction, for $j \in [j_r+1,j_{r+1}-1]$, $(c_{j,1},c_{j,2}) = \text{null}$, so $w_jw_{j+1}$ does not intersect any cell in $\mathcal{G}_1$, which implies that $w_jw_{j+1}$ is at distance more than $\delta$ from any input vertex.  Therefore, $\sigma[w_{j_r+1},w_{j_{r+1}}]$ must  be matched by $\mathcal{M}$ to an edge of $\tau_i$, say $\tau_{i,a}$.
	
	By the definition of the $(11\eps\delta)$-segment query, when we walk from $w_{j_r+1}$ to $w_{j_r}$, we must hit $c_{j_r,2} \oplus B_{11\eps\delta}$ at a point, say $x_r$.  Similarly, when we walk from $w_{j_{r+1}}$ to $w_{j_{r+1}+1}$, we must hit $c_{j_{r+1},1} \oplus B_{11\eps\delta}$ at a point, say $y_{r+1}$.  
	%Let $p$ be the last point on $\tau_j$ that is matched by $\mathcal{M}$ with $x$.  So $p$ appears before the point(s) in $\tau_j$ that are matched with $w_{j_r+1}$.  Similarly, the first point $q$ on $\tau_j$ that is matched by $\mathcal{M}$ with $y$ appears after the point(s) in $\tau_j$ that are matched with $w_{j_{r+1}}$.  
	
	We claim that $v_{i,a}$ is matched by $\mathcal{M}$ with a point in $\sigma$ that lies does not lie behind $x_r$.  If not, when we walk from $w_{j_r+1}$ to $w_{j_r}$, we must hit a cell in $G(v_{i,a} \oplus B_\delta)$ before reaching $x_r$, but this contradicts the definition of $c_{j_r,2}$ being the output of the $(11\eps\delta)$-segment query with $w_{j_r+1}w_{j_r}$.  Similarly, $v_{i,a+1}$ is matched by $\mathcal{M}$ with a point in $\sigma$ that does not lie in front of $y_{r+1}$.  Figure~\ref{fg:3ann} shows an illustration.
	
	\begin{figure}
		\centerline{\includegraphics[scale=0.5]{figure/3ann}}
		\caption{The curved squares are $c_{j_r,1} \oplus B_{11\eps\delta}$, $c_{j_r,2} \oplus B_{11\eps\delta}$, $c_{j_{r+1},1} \oplus B_{11\eps\delta}$, and $c_{j_{r+1},2} \oplus B_{11\eps\delta}$ from left to right.}
		\label{fg:3ann}
	\end{figure}
	
	Hence, $\mathcal{M}$ matches the subcurve $\sigma[x_r,y_{r+1}]$ to a segment $p_rq_{r+1} \subseteq \tau_{i,a}$.  So $d(p_r,x_r)$ and $d(q_{r+1},y_{r+1})$ are at most $\delta$, which implies that $d_F(p_rq_{r+1},x_ry_{r+1}) \leq \delta$.  Therefore, we can combine $\mathcal{M}$ with a Fr\'{e}chet matching between $p_rq_{r+1}$ and $x_ry_{r+1}$ to show that $d_F(\sigma[x_r,y_{r+1}],x_ry_{r+1}) \leq 2\delta$.  Extending the matching from $x_r$ to the center $z'_r$ of $c_{j_r,2}$ and from $y_{r+1}$ to the center $z_{r+1}$ of $c_{j_{r+1},1}$ gives $d_F(\sigma[x_r,y_{r+1}],z'_{j_r}z^{}_{r+1}) \leq (2+12\eps)\delta$.  Clearly, $d_F(y_rx_r,z_rz'_r) \leq 12\eps\delta$ because $d(y_r,z_r)$ and $d(x_r,z'_r)$ are at most $12\eps\delta$.  As a result, we can combine the Fr\'{e}chet matchings between $y_rx_r$ and $z_rz'_r$ for $r \in [k_0]$ and between $\sigma[x_r,y_{r+1}]$ and $z'_rz^{}_{r+1}$ for $r \in [k_0-1]$ to conclude that $d_F(\sigma,\sigma_0) \leq (2+12\eps)\delta$.  So $\sigma_0$ will pass the check of the query procedure.
	
	We have proved previously that $d_F(p_rq_{r+1},x_ry_{r+1}) \leq \delta$.  Take a Fr\'{e}chet  matching between $p_rq_{r+1}$ and $x_ry_{r+1}$.  We extend it from $x_r$ to $z'_r$ and from $y_{r+1}$ to $z_{r+1}$ to obtain $d_F(p_rq_{r+1},z'_rz_{r+1}) \leq (1+12\eps)\delta$.  Recall that $\mathcal{M}$ matches the subcurve $\sigma[x_r,y_{r+1}]$ to $p_rq_{r+1} \subseteq \tau_{i,a}$.  
	%Observe that $\sigma[y_r,x_r]$ is the segment $y_rx_r$.  
	It follows that $\mathcal{M}$ matches $\sigma[y_r,x_r] = y_rx_r$ to $\tau_i[q_r,p_r]$, so $d_F(y_rx_r,\tau_i[q_r,p_r]) \leq \delta$.  As a result, $d_F(z_rz'_r,\tau_i[q_r,p_r]) \leq d_F(y_rx_r,\tau_i[q_r,p_r]) + \max\{d(y_r,z_r), d(x_r,z_r')\} \leq (1+12\eps)\delta$.  We conclude that $d_F(\sigma_0,\tau_i) \leq (1+12\eps)\delta$.  Hence, our preprocessing must have stored an input curve at Fr\'{e}chet distance at most $(1+12\eps)\delta$ with $\sigma_0$, which will be reported.
%\end{proof}

\section{Proof of Lemma~\ref{lem:C}}
%\begin{proof}
	Consider (i).  The existence of $y \in \gamma$ such that $c \cap xy \not= \emptyset$ implies that $x \in F(c,\gamma)$.  Then, as $x \in L_\xi$, we get $x \in (\aff(\xi) \oplus B_{2\eps\delta}) \cap F(c,\gamma)$.   Also, $\xi$ contains the projection of $x$ because $x \in L_\xi$.  Therefore, as $\xi$ is a canonical segment, the projection of $(\aff(\xi) \oplus B_{2\eps\delta}) \cap F(c,\gamma)$ in $\aff(\xi)$ must contain $\xi$.  It follows that $c \in C_{\gamma,\xi}$ by definition.
	
	Consider (ii).  Let $x$ be any point in $L_\xi$.  Let $y$ be any point in $\gamma$.
	By the definition of $\xi$, there exists a point $x' \in L_\xi \cap F(c,\gamma)$ such that $d(x,x')\leq 4\eps\delta$.  As $x' \in F(c,\gamma)$, there exists $y' \in \gamma$ such that $x'y' \cap c \not= \emptyset$.   We have $d(y,y') \leq \eps\delta$ as both belong to $\gamma$, so a linear interpolation sends every point in $x'y'$ to a point in $x'y$ within a distance $\eps\delta$.  Therefore, $x'y \cap (c \oplus B_{\eps\delta}) \not= \emptyset$.  Similarly, as $d(x,x') \leq 4\eps\delta$, we get $xy \cap (c \oplus B_{5\eps\delta}) \not= \emptyset$.  This proves (ii).   
	
	Assume to the contrary that (iii) is false for $\lambda = 11\eps\delta$.  So there exists $x \in \xi$ and $y \in \gamma$ such that when we walk from $x$ to $y$, we hit a cell $c \in C_{\gamma,\xi}$ strictly before reaching $c_{\gamma,\xi} \oplus B_{11\eps\delta}$.  It follows that if we shrink $c_{\gamma,\xi} \oplus B_{11\eps\delta}$ by a distance $\eps\delta$, it becomes disjoint from $c$.  Also, $c_{\gamma,\xi} \oplus B_{10\eps\delta}$ intersects $xy$ by (ii).  That is, $c$ and $c_{\gamma,\xi} \oplus B_{10\eps\delta}$ are disjoint, and $c \cap xy$ strictly precedes $xy \cap (c_{\gamma,\xi} \oplus B_{10\eps\delta})$ with respect to $\leq_{xy}$.    The disjointness of $c$ and $c_{\gamma,\xi} \oplus B_{10\eps\delta}$ implies that $c \oplus B_{5\eps\delta}$ and $c_{\gamma,\xi} \oplus B_{5\eps\delta}$ are disjoint.  Both $c \oplus B_{5\eps\delta}$ and $c_{\gamma,\xi} \oplus B_{5\eps\delta}$ intersect $xy$ by (i).  As a result, $xy \cap (c \oplus B_{5\eps\delta})$ strictly precedes $xy \cap (c_{\gamma,\xi} \oplus B_{5\eps\delta})$ with respect to $\leq_{xy}$.

	Move the destination of the walk linearly from $y$ to $p_\gamma$.  By (ii), the intersections with $c \oplus B_{5\eps\delta}$ and $c_{\gamma,\xi} \oplus B_{5\eps\delta}$ remain non-empty throughout the move.  Therefore, 
	%Although the intersection with $c$ may become empty during the move, the intersection with $c \oplus B_{\eps\delta}$ is non-empty throughout the move because $d(y,p_\gamma) < \eps\delta$.  Since $c$ and $c_{\gamma,\xi} \oplus B_{6\eps\delta}$ are disjoint, so are $c \oplus B_{\eps\delta}$ and $c_{\gamma,\xi} \oplus B_{5\eps\delta}$.  Therefore, since none of the intersections with $c\oplus B_{\eps\delta}$ and $c_{\gamma,\xi} \oplus B_{5\eps\delta}$ becomes empty during the move, 
	the intersection order of $c\oplus B_{5\eps\delta}$ and $c_{\gamma,\xi} \oplus B_{5\eps\delta}$ cannot change throughout the move.   Similarly, we move the start of the walk linearly from $x$ to $p_\xi$, and we can conclude that the intersection order of $c\oplus B_{5\eps\delta}$ and $c_{\gamma,\xi} \oplus B_{5\eps\delta}$ does not change throughout the move.  But then we should have preferred $c$ to be $c_{\gamma,\xi}$, a contradiction.
	
	Once we have established that (iii) holds for $\lambda = 11\eps\delta$, it also holds for larger values of $\lambda$ because $c_{\gamma,\xi} \oplus B_\lambda$ can only become bigger.
%\end{proof}

\section{Proof of Lemma~\ref{lem:segment}}
\label{app:segment}

Consider the query procedure in Section~\ref{sec:segment-query}.

Step~2 works correctly according to the output requirements of the $(11\eps\delta)$-segment query and the $(\kappa,\delta)$-ANN query.  It remains to show that step~3 returns the correct answer.  We first show two technical results.  The first one bounds the distance between $\gamma$ in step~3 and the input vertices.

\begin{lemma}\label{lem:gamma}
	The cell $\gamma$ in step~3 intersects $\aff(w_jw_{j+1})$ and is within a distance $(1+6\eps)\delta$ from some input vertex.  Hence, $\gamma \in \mathcal{G}_3$.
\end{lemma}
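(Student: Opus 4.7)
The approach is a straightforward case analysis on how the cell $\gamma$ that ends up being passed into step~3(a) or step~3(b) was produced. In each case I reduce the goal to exhibiting an input vertex $v_{i,a}$ with $\gamma \cap (v_{i,a} \oplus B_{(1+6\eps)\delta}) \neq \emptyset$, which by the definition of $\mathcal{G}_3$ immediately gives $\gamma \in \mathcal{G}_3$.

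In the first case, $\gamma$ is the cell chosen from $G(x \oplus B_{2\eps\delta})$ right after step~2. That $\gamma \cap \aff(w_jw_{j+1})\neq\emptyset$ is immediate from the selection rule. For the distance bound I would use that $x$, being a grid vertex of $\mathcal{G}_1$, is a vertex of some cell $c \in \mathcal{G}_1$, and $c$ intersects $v_{i,a} \oplus B_\delta$ for some input vertex $v_{i,a}$. Since $c$ has diameter $\eps\delta$ we get $d(x, v_{i,a}) \leq (1+\eps)\delta$, and together with the $2\eps\delta$ buffer between $\gamma$ and $x$ this yields $d(\gamma, v_{i,a}) \leq (1+3\eps)\delta$, well within $(1+6\eps)\delta$.

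In the second case, $\gamma$ is the cell $\hat{\gamma}$ selected inside step~3(c) before the procedure jumps back to step~3(a) or 3(b). By construction $\hat{\gamma}$ meets $w_jw_{j+1}\subseteq \aff(w_jw_{j+1})$, so only the distance estimate remains. Letting $\gamma_0$ denote the original step~3 cell and $\xi$ the canonical segment chosen in step~3(c), the key observation is that $c_{\gamma_0,\xi}\in \mathcal{G}_1$, so it lies within distance $\delta$ of some input vertex $v_{i,a}$. Composing this with the $\eps\delta$ diameter of $c_{\gamma_0,\xi}$ and the $5\eps\delta$ buffer from $\hat{\gamma}\in G(c_{\gamma_0,\xi}\oplus B_{5\eps\delta})$ via a single triangle inequality gives a point of $\hat{\gamma}$ within $(1+6\eps)\delta$ of $v_{i,a}$, as required.

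There is no real obstacle beyond careful triangle-inequality bookkeeping. The only point that deserves attention is that the $5\eps\delta$ buffer baked into step~3(c) is precisely what allows the second case to saturate the $(1+6\eps)\delta$ slack hard-coded into the definition of $\mathcal{G}_3$; a smaller buffer would fail the bound, and a larger one would force $\mathcal{G}_3$ to be defined with a larger radius and blow up the space complexity.
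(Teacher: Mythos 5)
Your proposal is correct and follows essentially the same route as the paper's proof: a case split on whether $\gamma$ came directly from step~2/step~3 or arose as $\hat{\gamma}$ inside step~3(c), followed in each case by a chain of triangle inequalities through an input vertex $v_{i,a}$, using the $\eps\delta$ cell diameter and the $2\eps\delta$ (respectively $5\eps\delta$) buffer. You make the triangle-inequality bookkeeping slightly more explicit than the paper, but there is no substantive difference in approach.
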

\begin{proof}
	Suppose that $\gamma$ is the cell in $\mathcal{G}_3$ computed in step~3 when we come from step~2.  The grid vertex $x$ returned by $D_\mathrm{anp}$ in step~1 is within a distance $(1+\eps)\eps\delta$ from $\aff(w_jw_{j+1})$.   So some cell in $G(x \oplus B_{2\eps\delta})$ must intersect $\aff(w_jw_{j+1})$.  When we come to step~3 from step~2, we must obtain the cell $\gamma \in G(x \oplus B_{2\eps\delta})$ that intersects $\aff(w_jw_{j+1})$.   As $x$ is a grid vertex of $\mathcal{G}_1$, the distance between $\gamma$ and some input vertex is at most $(1+3\eps)\delta$.
	
	The other possibility is that $\gamma$ is $\hat{\gamma}$ obtained in step~3(c) before we jump to step~3(a) or 3(b).  To avoid confusion, we use $\gamma$ to denote the cell in $\mathcal{G}_3$ computed when we come to step~3 from step~2.
	%in step~3 before branching into step~3(a), 3(b), or 3(c).  
	By definition, $c_{\gamma,\xi}$ belongs to $\mathcal{G}_1$ which means that $c_{\gamma,\xi}$ is within a distance $\delta$ from some input vertex $v_{i,a}$.  As $\hat{\gamma} \in G(c_{\gamma,\xi} \oplus B_{5\eps\delta})$, we have $d(v_{i,a},\hat{\gamma}) \leq (1+6\eps)\delta$.  Hence, $\hat{\gamma} \in G(v_{i,a} \oplus B_{(1+6\eps)\delta}) \subset \mathcal{G}_3$.
\end{proof}

\cancel{
	The second result says that if a point $y$ is close enough to a line $\ell \in \mathcal{L}$ and $\gamma$ is the cell in step~3, then any segment that connects $y$ to $\gamma$ must be near all cells in $C_{\gamma,\xi}$, where $\xi$ is the canonical segment that contains the projection of $y$ in $\ell$.
	
	\begin{lemma}\label{lem: g-psi}
		Let $\gamma$ be the cell in $\mathcal{G}_3$.  Let $z$ be any point in $\gamma$.  For any point $y \in \real^d$ and any line $\ell \in \mathcal{L}$,  if $d(y,\ell) \leq 2\eps\delta$, then $yz$ intersects $c \oplus B_{4\eps\delta}$ for all $c \in C_{\gamma,\xi}$, where $\xi$ is the canonical segment on $\ell$ that contains the projection of $y$.
	\end{lemma}
	\begin{proof}
		Let $y'$ be the projection of $y$ in $\ell$.  By the definition of canonical segments, we have $\xi = \bigcap_{c \in C_{\gamma,\xi}} (\ell \oplus B_{2\eps\delta}) \cap F(c,\gamma)$.  Therefore, for every $c \in C_{\gamma,\xi}$, there exists $x_c \in F(c,\gamma)$ such that $d(x_c,y') \leq 2\eps\delta$.  By the definition of $F(c,\gamma)$, there exists $z_c \in \gamma$ such that $x_c z_c$ intersects $c$.  Since both $z$ and $z_c$ belong to $\gamma$, we have $d(z,z_c) \leq \eps\delta$.  Also, $d(x_c,y) \leq d(x_c,y') + d(y,y') \leq 4\eps\delta$.  As $x_c z_c \cap c \not= \emptyset$, we conclude that $yz \cap (c \oplus B_{4\eps\delta}) \not= \emptyset$.
	\end{proof}
}

The second result says that $F(\gamma,c)$ and $F(c,\gamma)$ are disjoint if $\gamma$ and $c$ are.

\begin{lemma}\label{lem:disjoint}
	If $\gamma \cap c =\emptyset$, then $F(\gamma, c)\cap F(c, \gamma)=\emptyset$.
\end{lemma}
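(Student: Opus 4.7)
The plan is to prove the contrapositive by construction: assuming there is a point $x \in F(\gamma,c) \cap F(c,\gamma)$, I will produce a point common to $\gamma$ and $c$, contradicting $\gamma \cap c = \emptyset$. Unfolding the definitions, membership of $x$ in $F(c,\gamma)$ gives a point $y \in \gamma$ such that $xy$ meets $c$ at some point $y'$, and membership of $x$ in $F(\gamma,c)$ gives a point $z \in c$ such that $xz$ meets $\gamma$ at some point $z'$. So I have two points of $\gamma$, namely $y$ and $z'$, and two points of $c$, namely $y'$ and $z$, arranged in a rather specific way.

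Next I would observe that all five points $x, y, z, y', z'$ lie in a common $2$-plane (the plane spanned by $xy$ and $xz$, or any plane through the line if these segments are collinear). Inside this plane I think of the cevians $yz'$ and $y'z$ of triangle $xyz$: the segment $yz'$ joins the vertex $y$ to the point $z'$ lying on the side $xz$, while $y'z$ joins the vertex $z$ to the point $y'$ lying on the side $xy$. A short parametric check (or an appeal to the standard fact that two such cevians of a triangle cross) yields a common point $p \in yz' \cap y'z$; the degenerate collinear case follows from a one-dimensional interval-overlap argument using that $y' \in [x,y]$ and $z' \in [x,z]$.

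Finally, convexity of $\gamma$ together with $y, z' \in \gamma$ gives $yz' \subseteq \gamma$, and convexity of $c$ together with $y', z \in c$ gives $y'z \subseteq c$. Therefore $p \in \gamma \cap c$, which contradicts the hypothesis $\gamma \cap c = \emptyset$ and completes the proof. The only step that requires any real care is the claim that the cevians $yz'$ and $y'z$ must meet; I expect to dispatch this with a brief barycentric/parametric computation rather than invoke a named theorem, so that the collinear degeneracy is handled uniformly.
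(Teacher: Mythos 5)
Your proof is correct and follows essentially the same path as the paper's: pick $x$ in the intersection, unfold the definitions to get $y, z, y', z'$, observe coplanarity, note that $yz'$ and $y'z$ must cross, and use convexity of $\gamma$ and $c$ to obtain a point in $\gamma \cap c$. Your remark about handling the degenerate collinear case is a small but sensible refinement the paper leaves implicit.
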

\begin{proof}
	Assume to the contrary that $F(\gamma,c)\cap F(c, \gamma)\not=\emptyset$.  Take a point $x \in F(\gamma,c)\cap F(c, \gamma)$.  By the definition of $F(c,\gamma)$, we can find a point $y \in \gamma$ such that $xy \cap c \not=\emptyset$.  Analogously, by the definition of $F(\gamma,c)$, we can find another point $z \in c$ such that $xz \cap \gamma \not= \emptyset$.  Take two points $y' \in xy \cap c$ and $z'\in xz\cap\gamma$.  Observe that $x, y, z, y', z'$ lie in the same plane spanned by $xy$ and $xz$.  So $yz' \cap y'z \not= \emptyset$. Since $c$ is convex and $y',z \in c$, we have $y'z \subset c$.  Analogously, we have $yz' \subset \gamma$.  But then $yz' \cap y'z \subseteq c \cap \gamma$, contradicting the assumption that $c \cap \gamma = \emptyset$.
\end{proof}

\cancel{
	The fourth result says that, in step~3(b), if a point $y$ is close enough to a line $\ell \in \mathcal{L}$, we can find a cell in $N(c_{\gamma,\xi})$ that satisfies the requirement of the $(4\eps\delta)$, where $\xi$ is the canonical segment in $\ell$ that contains the projection of $y$ in $\ell$.
	
	\begin{lemma}\label{lem: g-psi-neighborhood}
		Let $\xi$ be a canonical segment.  Let $\gamma$ be a cell in $\mathcal{G}_3$.   Let $x$ be any point in $\xi \oplus B_{2\eps\delta}$ that projects orthogonally to $\xi$.  Let $y$ be any point $\gamma$.  There exists a cell $c' \in N(c_{\gamma,\xi})$ such that  $xy \cap (c' \oplus B_{4\eps\delta}) \not= \emptyset$, and if $c$ is a cell in $\mathcal{G}_1$ that intersects $xy$, then $xy \cap (c' \oplus B_{4\eps\delta})$does not follow $xy \cap c$ with respect to $\leq_{xy}$.
	\end{lemma}
	\begin{proof}
		If no cell in $\mathcal{G}_1$ intersects $xy$, there is nothing to prove.  Assume not.  Take the cell $c \in \mathcal{G}_1$ that intersects $xy$ such that $xy \cap (c \oplus B_{4\eps\delta})$ does not follow $xy \cap c''$ for any $c'' \in \mathcal{G}_1$ that intersects $xy$.
		
		By Lemma~\ref{lem:gamma}, $\gamma \in \mathcal{G}_3$.  Also, since  $xy \cap c \not= \emptyset$, we have $x \in F(c,\gamma)$.  Together with the assumption that $x \in \xi \oplus B_{2\eps\delta}$ and $x$ projects orthogonally to $\xi$, we obtain $\xi \subseteq (\aff(\xi) \oplus B_{2\eps\delta}) \cap F(c,\gamma)$.  So $C_{\gamma,\xi}$ is well defined and $c \in C_{\gamma,\xi}$.
		
		Next, we distinguish two cases. If $c \in N(c_{\gamma,\xi})$, we are done because we can let $c'$ be $c$. Suppose that $c \not\in N(c_{\gamma,\xi})$, which means that  $d(c_{\gamma,\xi},c) > 8\eps\delta$.  It follows that $(c_{\gamma,\xi} \oplus B_{4\eps\delta})\cap (c \oplus B_{4\eps\delta}) = \emptyset$.
		
		Let $Z_{\gamma,\xi} = \{ z \in c_{\gamma,\xi} \oplus B_{(1+\eps)\eps\delta} : \mbox{$z$ projects to $\xi$}\}$.  Any line that intersects $\gamma$ and $Z_{\gamma,\xi}$ intersects $c_{\gamma,\xi} \oplus B_{4\eps\delta}$ as $Z_{\gamma,\xi} \subset c_{\gamma,\xi} \oplus B_{4\eps\delta}$.  By Lemma~\ref{lem: g-psi}, any line that intersects $\gamma$ and $Z_{\gamma,\xi}$ must also intersect $c \oplus B_{4\eps\delta}$.  This implies that $\gamma$ and $Z_{\gamma,\xi}$ are both subsets of $F(c_{\gamma,\xi} \oplus B_{4\eps\delta}, c \oplus B_{4\eps\delta})\cup F(c \oplus B_{4\eps\delta}, c_{\gamma,\xi} \oplus B_{4\eps\delta})$. Since we have already shown that $(c_{\gamma,\xi} \oplus B_{4\eps\delta})\cap (c \oplus B_{4\eps\delta}) = \emptyset$, Lemma~\ref{lem:disjoint} implies that $F(c_{\gamma,\xi} \oplus B_{4\eps\delta}, c \oplus B_{4\eps\delta})\cap F(c \oplus B_{4\eps\delta}, c_{\gamma,\xi} \oplus B_{4\eps\delta}) = \emptyset$.
		
		According to the definition of $c_{\gamma,\xi}$, as we walk from $p_\xi$ to $p_\gamma$, we hit $c_{\gamma,\xi} \oplus B_{4\eps\delta}$ no later than $c \oplus B_{4\eps\delta}$.  Therefore, $p_\gamma \in F(c \oplus B_{4\delta}, c_{\gamma,\xi} \oplus B_{4\delta})$ and $p_\xi \in F(c_{\gamma,\xi} \oplus B_{4\eps\delta}, c \oplus B_{4\eps\delta})$.  We have shown that $F(c_{\gamma,\xi} \oplus B_{4\eps\delta}, c\oplus B_{4\eps\delta})$ does not share any point with $F(c \oplus B_{4\eps\delta}, c_{\gamma,\xi} \oplus B_{4\eps\delta})$.  As a result, $p_\xi p_\gamma \cap (c_{\gamma,\xi} \oplus B_{4\eps\delta}) \subseteq p_\xi p_\gamma \cap F(c_{\gamma,\xi} \oplus B_{4\eps\delta}, c \oplus B_{4\eps\delta})$ precedes $p_\xi p_\gamma \cap c \subseteq p_\xi p_\gamma \cap F(c \oplus B_{4\delta}, c_{\gamma,\xi} \oplus B_{4\delta})$ with respect to $\leq_{p_\xi p_\gamma}$.  Changing from $p_\xi p_\gamma$ to $xy$ cannot change the ordering, so $xy \cap (c_{\gamma,\xi} \oplus B_{4\eps\delta}) \subseteq xy \cap F(c_{\gamma,\xi} \oplus B_{4\eps\delta}, c \oplus B_{4\eps\delta})$ precedes $xy \cap c \subseteq xy \cap F(c \oplus B_{4\delta}, c_{\gamma,\xi} \oplus B_{4\delta})$ with respect to $\leq_{xy}$.  So we can let $c'$ be $c_{\gamma,\xi}$.
	\end{proof}
	
}

Now, we are ready to show that the query procedure either discovers that $\min_{\tau_i \in T} d(\sigma,\tau_i) > \delta$, or returns a correct answer for the $(11\eps\delta)$-segment query.  
%We restate the statement of Lemma~\ref{lem:Q-correct} and give its proof.

\vspace{10pt}

\noindent {\bf Proof of Lemma~\ref{lem:segment}}.  We first argue that if the query procedure returns a cell, it satisfies the requirement of a correct answer for the $(11\eps\delta)$-segment query.  A cell $c_\mathrm{ans}$ is returned either in step~3(a) or 3(b).  If $c_\mathrm{ans}$ is returned in step~3(a), then $c_\mathrm{ans} \oplus B_{7\eps\delta}$ contains $w_j$ by construction, so $c_\mathrm{ans}$ must be a correct answer.  If $c_\mathrm{ans}$ is returned in step~3(b), it is equal to $c_{\gamma,\xi}$.  By the precondition of step~3, $w_j$ projects to $\xi$ and $d(w_j,\xi) \leq 2\eps\delta$.  Therefore, by Lemma~\ref{lem:C}, $c_{\gamma,\xi}$ is the correct answer for the $(11\eps\delta)$-segment query.
	
	It remains to show that if there is a cell in $\mathcal{G}_1$ that intersects $w_jw_{j+1}$, the query procedure either discovers that $\min_{\tau_i\in T} d_F(\sigma,\tau_i) > \delta$ or returns a cell.  Since $\mathcal{G}_1$ contains a cell that intersects $w_jw_{j+1}$ by assumption, $\aff(w_jw_{j+1})$ is within a distance $\eps\delta$ from some grid vertices of $\mathcal{G}_1$, so $D_\mathrm{anp}$ must return a grid vertex $x$ at distance $(1+\eps)\eps\delta$ or less from $\aff(w_jw_{j+1})$.  
	%Moreover, $x$ is within a distance $(1+\eps)\delta$ from some input vertex, say $v_{i,a}$.   So $w_j$ is within a distance $(1+2\eps)\delta$ from $\aff(\tau_{i,a})$.  It implies that there is a line in $\mathcal{L}$ parallel to $\aff(\tau_{i,a})$ at distance $\eps\delta$ or less from $w_j$.  Therefore, $D_\mathrm{anl}$ must return a line $\ell_j \in \mathcal{L}$ at distance less than $2\eps\delta$ from $w_j$.  Similarly, $D_\mathrm{anl}$ must return a line $\ell_{j+1} \in \mathcal{L}$ at distance less than $2\eps\delta$ from $w_{j+1}$.  
	As a result, step~2 cannot return null.  If step~2 returns ``no'' as the answer to the $(\kappa,\delta)$-ANN query, it must be the case that $\min_{\tau_i \in T} d_F(\sigma,\tau_i) > \delta$.  The remaining possibility is that the query procedure proceeds to step~3.
	
	Suppose that step~3(a) is applicable.  In this case, $w_j \in \gamma$, and by Lemma~\ref{lem:gamma}, $\gamma$ is within a distance $(1+6\eps)\delta$ from some input vertex $v_{i,a}$.   Therefore, $w_j$ is at distance $7\eps\delta$ or less from some cell in $G(v_{i,a} \oplus B_\delta) \subset \mathcal{G}_1$.  Hence, step~3(a) will succeed in returning one in $G(w_j \oplus B_{7\eps\delta})$.
	
	If  step~3(b) is applicable, then $c_{\gamma,\xi}$ will be returned.
	
	Suppose that step~3(c) is applicable.  Let $\xi$ be the canonical segment in $\ell_{j+1}$ that contains the projection of $w_{j+1}$ in $\ell_{j+1}$  By Lemma~\ref{lem:C}(ii), $c_{\gamma,\xi} \oplus B_{5\eps\delta}$ intersects $w_{j+1}w_j$.  Since $c_{\gamma,\xi}$ belongs to $\mathcal{G}_1$, it is within a distance $\delta$ from some input vertex $v_{i,a}$, which implies that $c_{\gamma,\xi} \oplus B_{5\eps\delta}$ is covered by the cells in $G(v_{i,a} \oplus B_{(1+6\eps)\delta}) \subset \mathcal{G}_3$.  Therefore, step~3(c) will succeed in finding a cell $\hat{\gamma} \in \mathcal{G}_3$ that intersects $w_jw_{j+1}$.  So either step~3(a) or 3(b) is applicable for $\hat{\gamma}$, and the query procedure will return a cell.  \qed

\end{document}